\newcommand{\ubar}[1]{\underaccent{\bar}{#1}}
\newtheorem{theorem}{Theorem}
\newtheorem{example}{Example}
\newtheorem{proposition}{Proposition}
\newtheorem{corollary}{Corollary}
\newtheorem{definition}{Definition}
\newtheorem{fact}{Fact}
\newtheorem{lemma}{Lemma}
\newtheorem{condition}{Condition}
\begin{document}	
\title{Sorting in Networks: Adversity and Structure \tnoteref{t1}} 
\tnotetext[t1]{The author is grateful to Peter Norman Sørensen who has supervised his PhD. A large thanks to Jan Eeckhout, Matthew Jackson, Jesper Rüdiger, Hans Keiding, Bartosz Redlicki, John Kennes, Thomas Jensen for fruitful discussions. I would also like to thank seminar participants at CoED 2015, CoopMAS 2015, EDGE 2015, University of Copenhagen, Stanford and Universitat Pompeu Fabra.
This paper is a revised version of the first chapter in the author's PhD dissertation at University of Copenhagen from February 2016. The work was by University of Copenhagen as part of the author's PhD.}
\date{}
\author{Andreas Bjerre-Nielsen}
\ead{andreas.bjerre-nielsen@econ.ku.dk}
\address{Øster Farimagsgade 5, Copenhagen K. 1353.}
\begin{abstract} People choose friendships with people similar to themselves, i.e. they sort by resemblence. Economic studies have shown when sorting is optimal and constitute an equilibrium, however, this presumes lack of beneficial spillovers. We investigate formation of economic and social networks where agents may form or cut ties. We combine a setup with link formation where agents have types that determine the value of a connection. We provide conditions for sorting in friendships, i.e. that agents tend to partner only with those with those sufficiently similar to themselves. Conditions are provided with and without beneficial spillovers from indirect connections. We show that sorting may be suboptimal, yet a socially stable outcome, despite otherwise obeying the conditions for sorting in \cite{becker_theory_1973}. We analyze policy tools to mitigate suboptimal sorting. Another feature is that agents with higher value are more central in networks under certain conditions; a side effect is sorting by degree centrality under certain conditions. Finally we illustrate the limits to patterns of sorting and centrality. 
\end{abstract}
\maketitle

\paragraph{Keywords:}assortative matching;	assortativity;	network formation; network externalities; one-sided matching. 

\onehalfspacing

\section{Introduction}\label{sec:introduction}

A ubiquitous finding in studies of social relations is the tendency to form more ties with people similar to one-self, i.e. the pattern known as \textit{sorting} or homophily cf. the meta-study by \cite{mcpherson_birds_2001}.	
Economic research has contributed to the understanding of sorting by providing mathematically sufficient conditions for when marriage-labor markets and groups get sorted and when this is optimal. However, no research has reconciled the framework of \cite{becker_theory_1973} on sorting with the literature on network formation as pioneered by \cite{jackson_strategic_1996}. 
Our investigation yields new insights on network formation by rational agents: when is sorting stable and optimal if beneficial network externalities are either not present or present, and; which other patterns arise in these stable networks? 
In addition we demonstrate how policies may be enacted that can curb sorting when excessive, and thus provide higher welfare. These new insights may help policymakers to optimally design schools or corporations.

While a modest tendency to sort may be beneficial by forming smaller communities that encourage trust and shared values, however, too much sorting can be detrimental. The problems of sorting include slowing down the spread of information and higher inequality.%
\footnote{\cite{golub_how_2012} show that homophily affects information diffusion. \cite{eika_educational_2014} find empirical evidence that increased marriage sorting have contributed to higher inequality.}
As recent decades have seen a rise in residential income segregation and assortative mating by education it is a great concern for policy makers how to tackle sorting.\footnote{See \cite{reardon_income_2011} on segregation and \cite{schwartz_trends_2005} on marriage sorting.}
The insights from our analysis can be applied to understand the the indirect effects of sorting. For instance when dividing school cohorts by ability as is the current practice in many countries, cf. \cite{oecd_equity_2012}.

Our framework explores a setting with agents choosing partners under three core assumptions. First, agents are heterogeneous in type for creating value in partnerships (this is in line with research on peer effects).%
\footnote{Some examples of empirical papers finding this include \cite{zimmer_peer_2000}, \cite{sacerdote_peer_2001}, \cite{falk_clean_2006}. Note that empirical estimates should be cautiously interpreted as the research on peer effects is still in an infant stage cf. \cite{manski_identification_1993}, \cite{shalizi_homophily_2011}, \cite{angrist_perils_2014}.} 
Type may refer to productive and non-productive capabilities such as skill, social aptitude or interests and ethnicity. Second, there are possible externalities from friends of friends as in the 'connections model' of \cite{jackson_strategic_1996} which captures spillovers of ideas and favors (which does not contain types). Third, agents choose a limited number of partners reflecting constraints of time and effort.\footnote{Limited partners is also consistent with empirical research; \cite{ugander_anatomy_2011} shows this for the entire Facebook network and \cite{miritello_time_2013} in phone calls for millions of people.} Note our framework could also model corporations or self-governing organizations forming bilateral partnerships among themselves. 
	
In this setting we investigate robustness of network structures in the following sense: no agents can form and/or delete links from network and be better off when allowing for transfers of utility. 
Although the literature on formation of networks under externalities is vast the share that investigates heterogeneous agents has been limited. \cite{jackson_strategic_1996} pioneered in their investigation of tension between stability and efficiency under mutual acceptance of links. A related investigation of non-mutually accepted links was pioneered by \cite{bala_noncooperative_2000}. 
\textcolor{black}{
The most relevant paper is  \cite{de_marti_segregation_2016} which also investigates segregation in a model with complementarity and externalities.%
\footnote{Note this paper was developed independently and without awareness of \cite{de_marti_segregation_2016}. Further discussion below.}}
Another relevant paper is \cite{goeree_search_2009} which explores network formation and structure when there is value heterogeneity, i.e. some agent types make links formed more valuable. 

\paragraph{Contributions}
	
Below we list our contributions to the literature and discuss some applications. The results contribute mainly to our understanding of network formation and assortative matching. Some of our results are derived when only evaluating links that are formed between a pair of agents - this corresponds to pairwise (Nash) equilibrium. Other results require that any coalition of agents can form links between themselves, i.e. the strong (Nash) equilibrium.%
	\footnote{To the best of our knowledge this is the first demonstration of the novel features.}
	
The first contribution is towards the understanding of conditions for sorting. We establish sufficient conditions for stable networks to exhibit sorting in type.
In the simple case without externalities our Proposition  \ref{claim:assort:elementwise:monotonic} extend the classic work to a one-sided matching setting with many partners. 
In the case of externalities we provide a characterization for the set of pairwise (Nash) stable networks: when  \cite{becker_theory_1973}'s complementarity hold then that set coincides with the set of perfectly sorted networks where every agent within a given type is connected and uses its quota of links; see Theorem \ref{claim:suboptimal-sorting_hyperbolic_decay} for the case of non-constant decay (or Theorem \ref{claim:assort:ext:pairwise:excess} in Appendix \ref{app:oversort_generalize} for constant decay). We also show that although these networks are stable they are not efficient when there is a small or moderate level of externalities due to a lack of connectivity across types. This problem of sub-optimality is shown to be mitigated through policy, see Proposition \ref{claim:policy_accomodate_oversorting}. 
Finally our Proposition \ref{claim:assort:ext:asymptoticperfectPAM} establishes that if the agent population is large then sorting is the unique strongly stable outcome (i.e. the core) when Becker's complementarity hold and that externalities satisfy asymptotic independence in social connections.

The above generalization of sorting can be applied to investigate models of homophily in social networks. Moreover, we provide a clear intuition how pathological sorting can constitute an undesirable, yet, stable configuration of a social network. The intuition is analogue to a classic public goods problem: there is a lack in the provision of links across types. This result also can be seen as providing a generic network structure, which is commonly observed, and that suffers from inefficiencies in network formation. Finally, to mitigate the sorting issue we demonstrate the validity of a policy instrument that provides incentives to link across types. 
	
The second contribution is to bring the literature on network structure into the research on game theory and network formation. Our focus on structure is limited to measures of network centrality for individuals. We show that if a monotonicity condition on contribution to the value of link in type is satisfied then high type agents are more central in all stable networks; see Propositions \ref{claim:network:monotonicdegree:talent} and \ref{claim:monotonic_decaycentrality}. Without externalities sorting in degree emerges, see Proposition \ref{claim:network:elementwise:monotonicdegree}, if the conditions for sorting by type and degree centrality are met. An application of this result is that sorting by type may be detected simply from the network structure if there is sorting by degree.
	
Our final contribution is to display the limits to conditions for sorting and centrality. This is done by showing that a network where a low type agent is excessively central can be stable, see Example \ref{example:monotonic_fail} and Corollary \ref{claim:fail_corollary}. When the condition for monotonic centrality is not satisfied this entails a less productive agent could be replaced in a central position and thus everyone could be better off. This highlights another fundamental network configuration which is undesirable under pairwise link formation. This can be applied to understand how some agents can maintain their central brokering position although this is detrimental.
	
\paragraph{Related literature}
The following paragraphs compare our results with the most relevant studies among existing work by assessing the contributions above. As there is a large body of literature the review is limited to only the most similar and noteworthy. 

\textcolor{black}{The most relevant paper on sorting is \cite{de_marti_segregation_2016} which investigates conditions for segregation/sorting in networks with externalities but with noteworthy differences (see below).
As our setup builds on an assortative matching framework, unlike \cite{de_marti_segregation_2016}, we can directly compare our results with those already established.} 
The work related to stability of sorting has a strong tradition for two-sided matching e.g. labor and dating markets starting with \cite{becker_theory_1973}. The research on one-sided assortative matching, i.e. the basis for our setup, has been limited to formation of clubs (not networks) under various technologies cf. \cite{farrell_partnerships_1988}, \cite{kremer_o-ring_1993}, \cite{durlauf_is_2003}, \cite{legros_assortative_2002}, \cite{pycia_stability_2012}, \cite{baccara_homophily_2013}. All the work on one-sided assortative matching find conditions for sorting which is similar to type complementarity in \cite{becker_theory_1973}.
Yet, none of these papers allow for linking between groups nor consider network externalities which are the extensions we treat. Note that Proposition \ref{claim:assort:elementwise:monotonic} can be seen as a direct extension of the classic result by \cite{becker_theory_1973} for networks without externalities - our contribution is to propose a new measure of sorting which is tractable in equilibrium. 

We contribute to the knowledge on sorting under externalities by building upon \cite{de_marti_segregation_2016}. In terms of equilibrium outcomes we remove their requirement that either one or all agents of a given type are linked to all other agents of the same type (i.e. respectively a star or a clique). This requirement imposes an unrealistic structure implying that an agent may possess infinitely many links in large networks.
Furthermore, we provide a complete characterization of pairwise (Nash) equilibria for moderate externalities. Furthermore, we demonstrate that for large populations sorting constitutes the unique strongly stable outcome under asymptotic independence of externalities.
Thus we advance their results on sorting by making them apply more widely by allowing for variation in structures of the subnetworks and for multiple types of agents (see details accompanying each result) while keeping the number of links bounded for each agent.
The divergence in results from \cite{de_marti_segregation_2016} are driven by their choice of modeling complementarity as endogenous while ours is exogenous.\footnote{The endogenous complementarity of \cite{de_marti_segregation_2016} implies that sorting may be self-reinforcing as the incentive to sort increases in other agents tendency to sort.} 
Other essential differences are: we employ convex costs where they use linear, and; 
we use pairwise Nash stability which is stricter than their pairwise stability as it allows for substitution of partners.


Research on suboptimal networks under externalities has a long tradition (cf. \cite{koopmans_assignment_1957}; \cite{katz_network_1985};  \cite{farrell_installed_1986}; \cite{jackson_strategic_1996}; \cite{jackson_strategic_1996};\footnote{Note our concept of  pairwise stability differs from \cite{jackson_strategic_1996} - see Section \ref{sec:model} (the model).} \cite{bloch_formation_2007}). 
Among these generic analyses of inefficient networks \cite{bloch_formation_2007} employs the most similar setup. \cite{bloch_formation_2007} use a setup without type complementarities but also employ a framework if network formation where transfers can be extended even to agents with whom they are not linked. Unlike the above papers our main contribution for inefficient networks is not to provide any new generic insights on suboptimal network formation; instead we show two new classes of inefficient networks. The first class of inefficient networks suffers from an excessive level of sorting captured by Theorems \ref{claim:suboptimal-sorting_hyperbolic_decay} and \ref{claim:assort:ext:pairwise:excess} (see last paragraph for second class).
\textcolor{black}{
\cite{de_marti_segregation_2016} also investigates sorting which is suboptimal despite being stable - thus this part of their analysis is subject to the same critique as their results on sorting (see above). 
Moreover, their result on inefficiency only holds when there are a handful of agents of each type (see discussion accompanying Theorem \ref{claim:suboptimal-sorting_hyperbolic_decay} for details). On the contrary, our result on inefficient sorting holds for any number of agents and only requires a connected subnetwork within a given type - this reveals that the scope of the underlying problem is much wider.
Another, perhaps more profound difference, is that the source of inefficiency in their setting is different: their endogenous incentive to sorting implies that the penalty from linking across types is removed when no agents sort (i.e. sort negatively) - thus there are always beneficial spillovers from connecting across types. Thus inefficiency is due to a lack of coordination rather than misaligned incentives.
}

	
Our results on centrality relate to work within economics and network science. The only relevant economic research related to the relationship between type and centrality is \cite{goeree_search_2009}. They show in two examples a star network with a high value agent in center is efficient and Nash stable. Note their analysis is under one-way formation of links (only one side accepts) and their examples are not generalized beyond those for few agents. In their experiments the efficient network emerges more often and the frequency increases with repetitions. 

The work that relates most to sorting in degree are \cite{currarini_economic_2009,currarini_simple_2016,iijima_social_nodate} which also model strategic agents, however, without externalities. One big difference is that degree sorting their models are different and all driven by composition of types rather than preferences/complementary which is the source in this setup. There are several non-economic models in statistical physics which predict sorting in degree, yet, these have no strategic considerations. The most prominent is preferential attachment, see \cite{barabasi_emergence_1999}. Note that an analogy to our pattern of degree sorting is found in \cite{kremer_o-ring_1993} and \cite{farrell_partnerships_1988} where more productive groups can sustain a larger scale of organizations. 

Our results on the limits of sorting indicate that an agent can amass so much network capital that the agent becomes the most central despite not being the best candidate. In these networks there is a failure of sorting and monotonic centrality. The work on network externalities in two-sided settings by \cite{katz_network_1985} and \cite{farrell_installed_1986} find an analogous pattern. In their setups the source to inefficiency is that a suboptimal producer is supplying the consumers who cannot coordinate on choosing the better producer and get "locked-in".

\paragraph{Paper organization}
The rest of the paper is structured as follows: Section \ref{sec:model} introduces the model; Section \ref{sec:assort-talent} investigates sorting and its potential sub-optimality; Section \ref{sec:network:structure} analyzes the network structure implied by the model; Section \ref{sec:limits} rounds up the analysis by outlining the limits to sorting and network structure, and; Section \ref{sec:discussion} concludes in a discussion of assumptions. All proofs are found along in Appendix \ref{app:proofs}.

\section{Model}\label{sec:model}

Let $N =\{1,..,n \}$ constitute a set of agents. 
Each agent $i \in N $ is endowed a fixed measure of \emph{type}, $x_{i }\in  X $ where $X \subset \mathbb{R}$ is the set of realized levels of types for agents in $N $. Let $\bar{x} =\max X $ and $\underaccent{\bar}{x} =\min X $. 
Let agents' type be sorted descending in their label and let $\mathcal{X} =(x_1,x_2,...,x_n )$ where $x_l \ge x_{l +1}$ for $l =1,..,n -1$. 

\paragraph{Linking and networks}

Two agents $i ,j \in N $ may \textit{link} if mutually accepted; a link may be broken by both agents. A link between $i $ and $j $ is denoted ${i }{j }\in \mu $ where the set $\mu $ consists of links and is called a \textit{network}. The set of all networks is denoted $M  = \{\mu |\,\mu \subseteq \mu^c \}$ where $\mu^c $ is the \textit{complete} network (all agents are linked). 

A \textit{coalition} of agents is a group $t \subseteq N $ such that $t \in T $ where $T $ is the superset of $N $ excluding the empty set.  
For a given group $t $ define $\mathcal{X} (t )$ as the vector of types ordered descending for agents in $t $.
A \textit{coalitional move} is a set of actions implemented by a coalition that moves the network from one state to another. A move from $\mu $ to $\tilde{\mu} $ is \emph{feasible} for coalition $t $ if: added links,  $\tilde{\mu} \backslash\mu $ are formed only between members of coalition $t $; deleted links, $\mu \backslash\tilde{\mu} $ must include a member of coalition $t $.

\paragraph{Network measures}

The \textit{neighborhood}, $\nu $, is the set agents whom an agent links to: $\nu_{i }(\mu )=\{j \in N :\, {i }{j }\in \mu \}$. The number of neighbors is a measure of centrality called \textit{degree} and denoted $k_{i }(\cdot )$ for $i $. 
A \textit{path} is a subset of links $\{i_{1}i_{2},i_{2}i_{3},...,i_{l -1}i_{l}\}\subseteq \mu $ where no agent is reached more than once. The \textit{distance} between two agents $i ,j $ in a network is the length of the shortest path between them - this is denoted $p_{i j }:M \rightarrow\mathbb{N}_0$. When no path exist then distance is infinite.

\paragraph{Utility}

The utility accruing to agent $i $ is denoted $u_i$. An agent's utility equals benefits less costs, expressed mathematically as $u_i=b_i-c_i$. The aggregate utility is denoted $U (\cdot )$. 	

We use two approaches for modeling costs: a quota of links which is employed in Section \ref{sec:assort-talent} on sorting or the more general used in Section \ref{sec:network:structure} on centrality. The \textit{degree quota}, $\kappa $, is the maximum number of links for any agent: for $i \in N $ it holds $k_{i }(\cdot )\le \kappa $. The quota implies an opportunity cost of linking. If a degree quota is employed then \textit{no linking surplus} means all agent has a degree equal to the degree quota, i.e. $\forall i \in N :k_i =\kappa $. 
A type $x \in X $ is \textit{self-sufficient} if $n_x >\kappa $.

Benefits to agent $i $ is a weighted sum consisting of two elements; network and individual value:	
\begin{equation}\label{eq:def:benefits}
b_{i }(\mu )
=\Sigma_{j \ne i }w_{i j }(\mu )\cdot z_{i j }
\end{equation}

The network factor, $w_{i j }(\mu )$ is a function of network distance. The individual link value is $z_{i j }$ which measures the personal value to $i $ of linking to $j $ - the value is a function of the two partners' type $z_{i j }=z (x_{i },x_{j })$. The function $z $ is assumed twice differentiable as well as taking positive and bounded values.\footnote{The upper bound rules out an infinite number of links in equilibrium.} 	Let the \textit{(total) link value} be defined as the value of linking for the pair, i.e. $Z_{i j }=z_{i j }+z_{j i }$.%
\footnote{Note most results use only the total link value. However, the results in Section \ref{sec:limits} rely on decomposability of total link value into individual link value.}	
The two components, network and linking value, are further restricted in the subsequent sections depending on their usage.

\paragraph{The game framework}
This paper explores a static setting of one period.  Agents' information about payoffs of other agents is complete. Together the players, action, utility and information constitutes a game we will now outline the stability concept for.

Any pair of agents can transfer 'utility' between them.  Let a \textit{net-transfer} from agent $j $ to agent $i $ be denoted as $\tau_{i j }\in \mathbb{R}$ such that  $\tau_{i j }=-\tau_{j i }$ which implies non-wastefulness of utility. The matrix of net transfers is denoted $\tau $.  For each agent $i $ its net-payoff is defined as $s_{i }=u_{i }(\mu )+\Sigma\tau_{i j }$. In addition it is assumed that the net-transfer between two agents with a link can only be changed if this is mutually agreed or the link is broken.

\paragraph{Stability}

We define network stability using the concept of coalitional moves. A coalition $t $ is \emph{blocking} a network $\mu $ with net-transfers $\tau $ if there is a feasible coalitional move from network $\mu $ to network $\tilde{\mu} $ with $\tilde{\tau} $ where all members in $t $ have higher net-payoff after the move. 

We employ two concepts of stability. The first is  \emph{strong stability}: this is satisfied for a network if there exist transfers such that no coalition (of any size) may have a feasible move which is profitable for all its members. The second concept, \textit{pairwise (Nash) stability},\footnote{This also known as bilateral stability, cf. \cite{goyal_structural_2007}} is similar but has weaker requirements: it holds when there exist transfers such that no coalitions of at most two agents may block. The sets of stable networks are denoted respectively $M ^{s-stb }$ and $M ^{p-stb }$. A discussion of the stability concepts and its relation to the literature is found in Section \ref{sec:discussion}.

Our pairwise definition of stability is stricter than that of \cite{jackson_strategic_1996}. However, the stricter requirement is to allow for substitution of links (simultaneous deletion and formation) which is a standard requirement in the matching literature.\footnote{For instance, pairwise stability coupled with and \cite{becker_theory_1973}'s condition for sorting, i.e. supermodularity, would not imply that sorting be the unique pairwise stable outcome in the marriage market as in \cite{becker_theory_1973}.}

A noteworthy feature is that strongly stability implies pairwise stability; thus any condition valid for pairwise stability also applies to strong stability. In addition without network externalities every pairwise stable network is also strongly stable, see Lemma \ref{claim:auxiliary:equivalence:strong:pairwisetype}.  Note also that any strongly stable network requires efficiency (coalition of all agents can implement any network). Thus we can employ efficiency to derive structure in strongly stable networks.

\section{Sorting in type}\label{sec:assort-talent}
		
This section investigates stratification in type, i.e. based on the agents' exogenous types. In this section we employ the degree quota. Recall that no linking surplus means that any agent $i $ must use all possible links, i.e. $k_i =\kappa $. The no linking surplus will be a requirement in most equilibria in this section.

In order to derive results a restriction of payoffs is necessary. Recall the total link value is the sum of individual link for the two partners in the pair. The essential feature of the total link value for sorting is complementarity in type. \cite{becker_theory_1973} shows that type complementarity holds when:

\begin{definition}\label{def:complementarity}The link value has \textbf{supermodularity} if
	$\frac{\partial ^{2}}{\partial  x  \partial  y }
	Z (x ,y )>0$ - this entails:
	\begin{equation}
	Z (x ,\tilde{x} )+
	Z (y ,\tilde{y} )>
	Z (x ,y )+
	Z (\tilde{x} ,\tilde{y} ),
	\qquad 
	x >\tilde{y} 
	,\,
	\tilde{x} >y .
	\label{eq:supermodularity}
	\end{equation}
\end{definition}

The concept of sorting that we employ is a generalization of the sorting when there is a single partner such as Becker's marriage market. The shape of sorting is such that a high type agent has partners which weakly dominate in type when compared partner-by-partner with the partners of a lower type agent. Note the comparison is done over the sorted set of partners type $\mathcal{X} $. The sorting pattern is mathematically defined as:

\begin{definition} \textbf{Sorting in type} holds in $\mu $ if for all pair $i ,j $ such that $x_{i }>x_{j }$ it holds that: 
\[\mathcal{X} (\nu_{i }(\mu )/\{j \})_l \ge \mathcal{X} (\nu_{j }(\mu )/\{i \})_{l +l ^{*}},\qquad \forall l \in \{1,..,k ^{*} \},\] where $k ^{*}=\min (k_{i }(\mu ),k_{j }(\mu ))$ and $l ^{*}=\max (k_j (\mu )-k_i (\mu ),0)$.
\end{definition}

\begin{definition} A network $\mu $ is \textbf{perfect sorted (in type)} if $x_i =x_j $ for any $i j \in \mu $. Denote the set of perfectly sorted networks as $M ^{p-srt }$.\end{definition}

\subsection{No externalities}
We begin with the setting where network externalities are absent and thus indirect connections are irrelevant. Our first result is that sorting in type emerges under similar conditions as in \cite{becker_theory_1973} when network externalities are absent:
		
\begin{proposition}\label{claim:assort:elementwise:monotonic} If there is supermodularity and no externalities then for any pairwise stable network there is sorting in type.
\end{proposition}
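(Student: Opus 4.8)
The plan is to route through efficiency rather than wrestle directly with two-agent deviations, because the improving move I have in mind is a four-agent swap that a pair cannot execute on its own. First I would invoke the two facts recorded in Section~\ref{sec:model}: absent externalities every pairwise stable network is strongly stable (Lemma~\ref{claim:auxiliary:equivalence:strong:pairwisetype}), and every strongly stable network is efficient, since the grand coalition can implement any network in $\networkset$ and split any surplus gain through transfers. With no externalities the transfers cancel in the aggregate and the quota enters only as the feasibility constraint $\degree_\agenti\le\degreequota$, so aggregate utility collapses to $\aggregateutilityfct(\network)=\sum_{\agenti\agentj\in\network}\linkvaluefct(\talent_\agenti,\talent_\agentj)$ (using that $\linkvaluefct$ is symmetric). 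Hence any pairwise stable $\network$ maximizes $\sum_{\agenti\agentj\in\network}\linkvaluefct$ over all quota-feasible networks, and it suffices to show that such a maximizer is sorted.

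Second I would argue the contrapositive via a degree-preserving swap. Suppose sorting fails for a pair $\agenti,\agentj$ with $\talent_\agenti>\talent_\agentj$, so at some position the sorted vector $\talentvect(\neighborhood_\agentj(\network)\setminus\{\agenti\})$ strictly exceeds $\talentvect(\neighborhood_\agenti(\network)\setminus\{\agentj\})$. From this violation I would extract two \emph{exclusive} witnesses, a partner $a\in\neighborhood_\agentj(\network)\setminus\neighborhood_\agenti(\network)$ and a partner $b\in\neighborhood_\agenti(\network)\setminus\neighborhood_\agentj(\network)$ with $\talent_a>\talent_b$ (the exclusions $a\neq\agenti$, $b\neq\agentj$ are already built into the definition). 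Consider $\networkalt=\network\setminus\{\agenti b,\agentj a\}\cup\{\agenti a,\agentj b\}$. Because $a,b$ are exclusive, $\agenti b,\agentj a$ are genuinely present while $\agenti a,\agentj b$ are genuinely new, and all four agents $\agenti,\agentj,a,b$ are distinct, so $\networkalt$ is a well-defined network; moreover each of the four agents merely trades one partner for another, so every degree is preserved and $\networkalt$ remains quota-feasible.

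Only these four link values change, and supermodularity (Definition~\ref{def:complementarity}) applies: from $\talent_\agenti>\talent_\agentj$ and $\talent_a>\talent_b$, positivity of the cross-partial of $\linkvaluefct$ gives $\linkvaluefct(\talent_\agenti,\talent_a)+\linkvaluefct(\talent_\agentj,\talent_b)>\linkvaluefct(\talent_\agenti,\talent_b)+\linkvaluefct(\talent_\agentj,\talent_a)$, so $\aggregateutilityfct(\networkalt)>\aggregateutilityfct(\network)$. This contradicts the maximality established in the first step, and therefore sorting in type must hold.

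The hard part will be the extraction in the second step: the naive witnesses to a positional violation may be \emph{common} neighbors of $\agenti$ and $\agentj$, for which the swap is ill-defined, so I must show a violation can always be realized by exclusive partners. I expect to handle this with a majorization argument on the sorted type vectors: decomposing each neighborhood as the disjoint union of shared partners $\neighborhood_\agenti(\network)\cap\neighborhood_\agentj(\network)$ and exclusive partners, the shared partners occupy identical positions in both sorted lists and cannot by themselves create a violation, so any violation is forced onto the exclusive parts, yielding the required $a$ and $b$. Carefully tracking the offset $\counter^{*}$, the degrees $\degree_\agenti(\network),\degree_\agentj(\network)$, and whether $\agenti\agentj\in\network$ (which shifts each list by one) is the fiddly bookkeeping, but this step is purely combinatorial and uses neither stability nor supermodularity.
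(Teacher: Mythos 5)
Your proposal is correct and takes essentially the same route as the paper's proof: the same four-agent, degree-preserving partner swap made profitable by supermodularity, with Lemma~\ref{claim:auxiliary:equivalence:strong:pairwisetype} used to carry the resulting violation of strong stability (equivalently, of efficiency) back to pairwise stability. The paper resolves your ``hard part'' exactly as you anticipate, via a counting argument at the first failing index --- though note that your intermediate claim that shared partners occupy identical positions in both sorted lists is not literally true in general; what one actually compares is the number of partners on each side whose type lies above the threshold $\talentvect(\neighborhood_{\agentj}(\network)/\{\agenti\})_{\counter+\counter^{*}}$, to which shared partners contribute equally, forcing the strict surplus onto the exclusive parts and yielding the witnesses $a$ and $b$.
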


In the following subsections we extent results to a setting with externalities. Note that the presence of externalities may imply sorting is not always equilibrium outcome - see Example \ref{example:monotonic_fail} and Corollary \ref{claim:fail_corollary}. 

\subsection{Externalities}		

We proceed to a more general context where indirect connections matter for utility. Whenever we allow for externalities we restrict our attention to forms of linking utility.
\begin{equation}
w_{i j }(\mu )=\begin{cases}
\delta ^{p_{i j }(\mu )-1},&\mbox{constant decay,}\\
\mathbf{1}_{=1}(p_{i j }(\mu ))+\delta \cdot \mathbf{1}_{\in (1,\in fty)}(p_{i j }(\mu )),&\mbox{hyperbolic decay,}
\end{cases}		
\label{eq:utility_shortest_paths}
\end{equation} 

where $\mathbf{1}_{\in (1,\in fty)}(l )$ is the Dirac measure/indicator function of whether $1<l <\in fty$. 

The first and more general setting is where utility from connections decays over increasing distance at a constant exponential rate - this corresponds to benefits from linking in the 'connections-model' from \cite{jackson_strategic_1996}. The other case is when externalities from indirect connections are discounted equally at any distance if there is a connection, i.e. a finite path length. This second case is referred to as hyperbolic decay and entails that there is no decay beyond that from distance one (linked) to distance two. 

In the remainder of this section we employ hyperbolic decay when analyzing inefficient sorting as it provides for more intuitive and immediate results without restrictions on the network. As noted earlier, a more general exposition is found in Supplementary Appendix~\ref{app:oversort_generalize}.

\paragraph{Finite population}

In the following paragraphs we reexamine our results on sorting by illustrating a class of sorted networks that are suboptimal when introducing network externalities. The sub-optimality holds despite fulfilling supermodularity. We begin here by exhibiting a simple example where the problem emerge.

Excessive sorting refers to a network with perfect sorting according to type where the segregated groups could collectively benefit from connecting. However, they fail to connect as incentives do not internalize externalities under pairwise network formation. The relevant set of networks are those networks which are perfectly sorted but within each type all agents are connected. 
Let networks which are perfectly sorted and connected among each type be denoted:
\[M ^{p-srt:conn}=\{\mu \in  M ^{p-srt }\,|\, \forall  i ,j \in N , x_i =x_j : p_{i j }(\mu )<\in fty\}\]	

The aim is to show there exist an open region of thresholds $(\ubar{\delta}, \bar{\delta})$ such that if the level of externalities is within the region then any network in $M ^{p-srt:conn }$ where there is no degree surplus is pairwise (Nash) stable for some transfers $\tau$ but inefficient.
An introduction to this problem is found in Example \ref{example:excessiveassortative} below in a stylized, simple manner. The example is graphically represented in Figure \ref{fig:excess_network}.

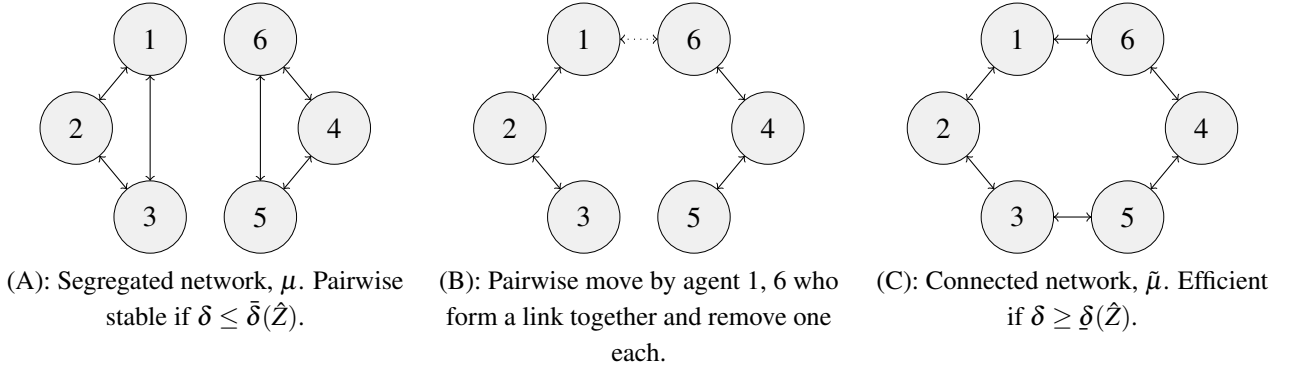
\begin{figure}[t!]
	\begin{tabular*}{15.8cm}{@{\extracolsep{\fill} }>{\centering}p{0.33\linewidth}>{\centering}p{0.33\linewidth}>{\centering}p{0.33\linewidth}}
		\begin{tikzpicture}[nodes={fill=gray!12,circle, ultra thin}, node distance=2cm, auto]
		\draw[help lines] (0,0);
		\node[state] (i_1) {$2$};
		\node[state] (i_2) [above right=0.5cm and 0.3cm of i_1] {$1$};
		\node[state] (i_3) [below right=0.5cm and 0.3cm of i_1] {$3$};
		\node[state] (i_4) [right=0.5cm of i_2] {$6$};
		\node[state] (i_5) [right=0.5cm of i_3] {$5$};
		\node[state] (i_6) [below right=0.5cm and 0.3cm of i_4] {$4$};
		\path[<->] (i_1) edge node [left, fill=none, font=\footnotesize]{} (i_2);
		\path[<->] (i_1) edge node [above, fill=none, font=\footnotesize]{} (i_3);
		\path[<->] (i_2) edge node [below, fill=none, font=\footnotesize]{} (i_3);
		\path[<->] (i_4) edge node [above, fill=none, font=\footnotesize]{} (i_5);
		\path[<->] (i_4) edge node [below, fill=none, font=\footnotesize]{} (i_6);
		\path[<->] (i_5) edge node [left, fill=none, font=\footnotesize]{} (i_6);
		\end{tikzpicture}
		& 
		\begin{tikzpicture}[nodes={fill=gray!12,circle, ultra thin}, node distance=2cm, auto]
		\draw[help lines] (0,0);
		\node[state] (i_1) {$2$};
		\node[state] (i_2) [above right=0.5cm and 0.3cm of i_1] {$1$};
		\node[state] (i_3) [below right=0.5cm and 0.3cm of i_1] {$3$};
		\node[state] (i_4) [right=0.5cm of i_2] {$6$};
		\node[state] (i_5) [right=0.5cm of i_3] {$5$};
		\node[state] (i_6) [below right=0.5cm and 0.3cm of i_4] {$4$};
		\path[<->] (i_1) edge node [left, fill=none, font=\footnotesize]{} (i_2);
		\path[<->] (i_1) edge node [above, fill=none, font=\footnotesize]{} (i_3);
		\path[<->, dotted] (i_2) edge node [below, fill=none, font=\footnotesize]{} (i_4);
		\path[<->] (i_4) edge node [below, fill=none, font=\footnotesize]{} (i_6);
		\path[<->] (i_5) edge node [left, fill=none, font=\footnotesize]{} (i_6);
		\end{tikzpicture} 
		& 
		\begin{tikzpicture}[nodes={fill=gray!12,circle, ultra thin}, node distance=2cm, auto]
		\draw[help lines] (0,0);
		\node[state] (i_1) {$2$};
		\node[state] (i_2) [above right=0.5cm and 0.3cm of i_1] {$1$};
		\node[state] (i_3) [below right=0.5cm and 0.3cm of i_1] {$3$};
		\node[state] (i_4) [right=0.5cm of i_2] {$6$};
		\node[state] (i_5) [right=0.5cm of i_3] {$5$};
		\node[state] (i_6) [below right=0.5cm and 0.3cm of i_4] {$4$};
		\path[<->] (i_1) edge node [left, fill=none, font=\footnotesize]{} (i_2);
		\path[<->] (i_1) edge node [above, fill=none, font=\footnotesize]{} (i_3);
		\path[<->] (i_2) edge node [below, fill=none, font=\footnotesize]{} (i_4);
		\path[<->] (i_3) edge node [above, fill=none, font=\footnotesize]{} (i_5);
		\path[<->] (i_4) edge node [below, fill=none, font=\footnotesize]{} (i_6);
		\path[<->] (i_5) edge node [left, fill=none, font=\footnotesize]{} (i_6);
		\end{tikzpicture}
		\tabularnewline
		{\small (A): Segregated network, $\mu $. Pairwise stable if $\delta \le \bar{\delta} (\hat{Z} )$.} & 
		{\small (B): Pairwise move by agent 1, 6 who form a link together and remove one each.} &
		{\small (C): Connected network, $\tilde{\mu} $. Efficient if $\delta \ge \ubar{\delta} (\hat{Z} )$.}\tabularnewline
	\end{tabular*}
	\caption{\textit{Sorted network is stable but inefficient}. The above three networks depict Example \ref{example:excessiveassortative}. The network in (A) is pairwise (Nash) stable for some parameters and the network in (B) is the only kind of feasible deviation. The network in (C) is an efficient network.\label{fig:excess_network}}\end{figure}
		
\begin{example}\label{example:excessiveassortative} 
	There are six agents; three of high type (1,2,3) and three of low type (4,5,6). Moreover, there is supermodularity, degree quota of two ($\kappa $=2) and constant decay. Define two networks:
	a network with perfect sorting, $\mu =\{{1}{2},{1}{3},{2}{3},{4}{5},{4}{6},{5}{6}\}$, see Figure \ref{fig:excess_network}.A; 	
	a network which is bridged of $\mu $, defined as $\tilde{\mu} =\{{1}{2},{2}{3},{3}{4},{4}{5},{5}{6},{6}{1}\}$, see Figure \ref{fig:excess_network}.C. 
	We show in this example that for a range of decay-factors that $\mu $ is pairwise stable yet suboptimal. 
	
	In this setup there is a unique move which is both feasible and payoff relevant.\footnote{Under pairwise stability at most one link can be formed in a single move. Without transfers all formed links have value and thus deletion of a link always leads to a loss. Thus only coalitional moves where new links are formed can be valuable. For network $\mu $ all links to same types are already formed. Thus only forming a link with the other types.} 
	This move consists of forming a link across types when both delete a link. Such a move could be agents 1,6 forming a link while deleting links to 3 and 4 which we denote $\hat{\mu} =\mu \cup\{{1}{6}\}\backslash\{{1}{3},{4}{6}\}$ and seen as Figure \ref{fig:excess_network}.B. Benefits for agents 1 and 6 from network $\mu $ and deviating from it are:
	\begin{eqnarray*}
		u_{1}(\hat{\mu} )+u_{6}(\hat{\mu} )&=& (1+\delta )\cdot [z (\bar{x} ,\bar{x} )+z (\underaccent{\bar}{x} ,\underaccent{\bar}{x} )]+
		[1+2\delta ]\cdot [z (\bar{x} ,\underaccent{\bar}{x} )+z (\underaccent{\bar}{x} ,\bar{x} )],\\
		&=&(1+\delta )\cdot \tfrac{1}{2}\cdot [Z (\bar{x} ,\bar{x} )+Z (\underaccent{\bar}{x} ,\underaccent{\bar}{x} )]+
		[1+2\delta ]\cdot Z (\bar{x} ,\underaccent{\bar}{x} ),\\
		u_{1}(\mu )+u_{6}(\mu )&=&2\cdot [z (\bar{x} ,\bar{x} )+z (\underaccent{\bar}{x} ,\underaccent{\bar}{x} )]=Z (\bar{x} ,\bar{x} )+Z (\underaccent{\bar}{x} ,\underaccent{\bar}{x} ).
	\end{eqnarray*}
	
	The condition for deviating to $\hat{\mu} $ not being pairwise profitable is $u_{1}(\mu )+u_{6}(\mu )>u_{1}(\hat{\mu} )+u_{6}(\hat{\mu} )$; this is sufficient for pairwise stability due to payoff symmetry in $\mu $ and no transfers. 
	
	We now turn to deriving condition for when segregating is inefficient. The aggregate benefits over all agents of the two networks $\mu $ and $\tilde{\mu} $ is expressed below in the two equations. 
	\begin{eqnarray*}
		U (\tilde{\mu} )&=&(2+\delta )\cdot [Z (\bar{x} ,\bar{x} )+Z (\underaccent{\bar}{x} ,\underaccent{\bar}{x} )]+
		\left[2+7\delta \right]\cdot Z (\bar{x} ,\underaccent{\bar}{x} ), \\
		U (\mu )&=& 3\cdot [Z (\bar{x} ,\bar{x} )+Z (\underaccent{\bar}{x} ,\underaccent{\bar}{x} )].
	\end{eqnarray*}
	
	Sorting is inefficient when: $U (\mu )<U (\tilde{\mu} )$.	The two inequalities governing pairwise stability and inefficiency has the following positive solution:
		\begin{equation*}
		\begin{array}{rcl}
		\ubar{\delta} (\hat{Z} )&=&\frac{\hat{Z} }{\hat{Z} +\tfrac{5}{2}},
		\\
		\bar{\delta} (\hat{Z} )&=&\frac{\hat{Z} }{\hat{Z} +1},
		\end{array}	
		\qquad \hat{Z} =\tfrac{Z (\bar{x} ,\bar{x} )+Z (\underaccent{\bar}{x} ,\underaccent{\bar}{x} )}{2Z (\bar{x} ,\underaccent{\bar}{x} )}-1,
		\end{equation*}
	where $\ubar{\delta} $ and $\bar{\delta} $ are thresholds for respectively when network $\mu $ becomes inefficient and unstable when $\delta $ increases.		
	\smallskip		
\end{example}

\textcolor{black}{The example above demonstrates that sorting can be inefficient when there are network effects despite presence of supermodularity. The inefficiency stems from a novel source - the pairwise formation of links. The intuition is that under pairwise deviation the two agents do not internalize the total value created for other agents number of indirect links between a high and a low agent. Note that the above example has a very close correspondence to Propositions 1.ii and 4.iii in \cite{de_marti_segregation_2016} - the only addition of their setup to our example is that they allow for a variation in the number of agents for each type.}

We proceed with a generalization of the example above which holds for various structures of the subnetworks within types and for multiple types. As we proceed with our generalization we introduce two new concepts to convey the results.

Let the subset of agents who a link in $\mu $ be denoted $N (\mu )=\{i \in N :\exists  j  \mbox{ s.t. } i j \in \mu  \}$.
A \textit{component}, $\tilde{\mu} $ of network $\mu $ is a subnetwork (i.e. $\tilde{\mu} \subseteq \mu $) where for any agent $i \in N (\tilde{\mu} )$ it holds that: agent $i $ is connected in $\tilde{\mu} $ to all other agents $j \in N (\tilde{\mu} )$, and; for all $j \in N $ if $i j \in \mu $ then $i j \in \tilde{\mu} $. A perfectly sorted network is \textit{locally connected} if there is one component for each type; the set of networks which are perfect sorted and locally connected is denoted $M ^{p-srt:conn }$.

	Our next result is to extend the above example to a general class of sorted networks, denoted by the set $\hat{M} $. The set of network $\hat{M} $ is formed by networks where is only links to same-type agents, all agents use the full quota of available links and all agents of same type are connected.

	\begin{theorem}\label{claim:suboptimal-sorting_hyperbolic_decay}
	Suppose there is supermodularity, hyperbolic decay, more than one partner is allowed and type self-sufficiency then:
	\begin{enumerate}[(i)]
		\item any network in $\hat{M} $ is inefficient when $\delta >\ubar{\delta} $, i.e. $(\hat{M} \cap M ^{\max  U }_{\delta >\ubar{\delta} })=\emptyset$; 
		\item any network in $\hat{M} $ is pairwise stable when $\delta \le \bar{\delta} $, i.e. $\hat{M} \subseteq M ^{p-stb }_{\delta \le \bar{\delta} }$;
		\item if $|X |=2$ then a network is in $\hat{M} $ iff. it is pairwise stable, i.e. $\hat{M} =M ^{p-stb }_{\delta \in (0,\bar{\delta} )}$ 
		\item the set of networks $\hat{M} $ is non-empty when all agent type has even number greater than the degree quota, i.e. $\hat{M} \ne\emptyset$ if for all $x\in  X:$ $(\kappa\cdot  n_x)\in 2\mathbb{N}$.
	\end{enumerate}
	where $\hat{M} =M ^{p-srt:conn }\cap M ^{no-slack }$ and threshold bounds $\ubar{\delta} ,\bar{\delta} $ are defined as:
	\begin{eqnarray*}	\ubar{\delta} &\le &\min _{x ,\tilde{x} \in X }\left(\tfrac{\hat{Z}_{x ,\tilde{x} }}{\hat{Z}_{x ,\tilde{x} }+\frac{1}{2}n_x n_{\tilde{x} }}\right),\qquad 
	\hat{Z}_{x ,\tilde{x} }=\tfrac{Z (x ,x )+Z (\tilde{x} ,\tilde{x} )}{2Z (x ,\tilde{x} )}-1
	\\
	\bar{\delta} &\ge &\min _{x ,\tilde{x} \in X }\left(
	\tfrac{\hat{Z}_{x ,\tilde{x} }}{\hat{Z}_{x ,\tilde{x} }+\max (n_x ,n_{\tilde{x} })-|n_{x }-n_{\tilde{x} }|\cdot \hat{z}_{x ,\tilde{x} }}\right),\qquad \hat{z}_{x ,\tilde{x} }=\frac{z(\arg\min _{x ,\tilde{x} }n_{x },\,\arg\max _{x ,\tilde{x} }n_{x })}{Z (x ,\tilde{x} )}.
	\end{eqnarray*}
	\end{theorem}

	The theorem above is applicable to numerous settings where social networks are formed. If or example schools or firms pre-sort individuals according to talent or otherwise this may lead to a sorted network which is stable but suboptimal. A specific example could be a school where sorting by academic performance type is common in many countries and is often known as tracking. In such cases the sorting induced by the institution could lead to a stable network with no linking across despite links across having potential to raise welfare. 
	
	An alternative version of the above theorem under constant decay can be found in Appendix \ref{app:oversort_generalize} in Theorem \ref{claim:assort:ext:pairwise:excess}.	For both Theorems \ref{claim:suboptimal-sorting_hyperbolic_decay} and \ref{claim:assort:ext:pairwise:excess} one can view the conditions for stability of sorting as generalizing not only Example \ref{example:excessiveassortative} but also Propositions 1.ii and 4.iii from \cite{de_marti_segregation_2016}. These special cases of sorting correspond to a situation with two types and all agents of a given type link with one another, i.e. when  $n_{\bar{x} }=n_{\underaccent{\bar}{x} }=\kappa +1$ and $|X |=2$.
	The other part of the two theorems is on inefficiency of sorting and generalizes Example \ref{example:excessiveassortative} and Proposition 6 from \cite{de_marti_segregation_2016}. It advances their proposition by removing the 	restriction to two types and linking between all same type agents as well as doing away with the limitation of having very few agents. For instance, if there is an equal number of agents for each type then Proposition 6 in \cite{de_marti_segregation_2016} is valid only for five or less agents of each type (ten agents in total).

	A visualization of how the thresholds of externalities depend on population size and strength of complementarity is found in Figure \ref{fig:hyperbolic_threshold}. The upper part of the figure keeps the population size fixed while lower ones keep the complementarity strength fixed. In the upper part it is evident that the connection thresholds both approximately follow power-laws in the number of agents.

	\begin{figure}[b!]
		\includegraphics[width=1.0\textwidth]{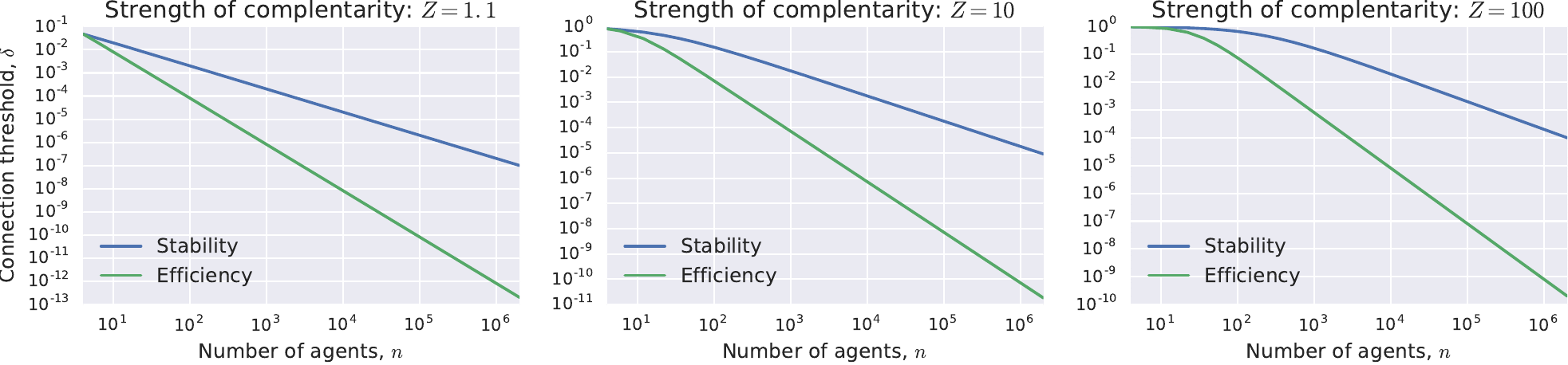}
		\vspace{0.2cm}					
		\includegraphics[width=1.0\textwidth]{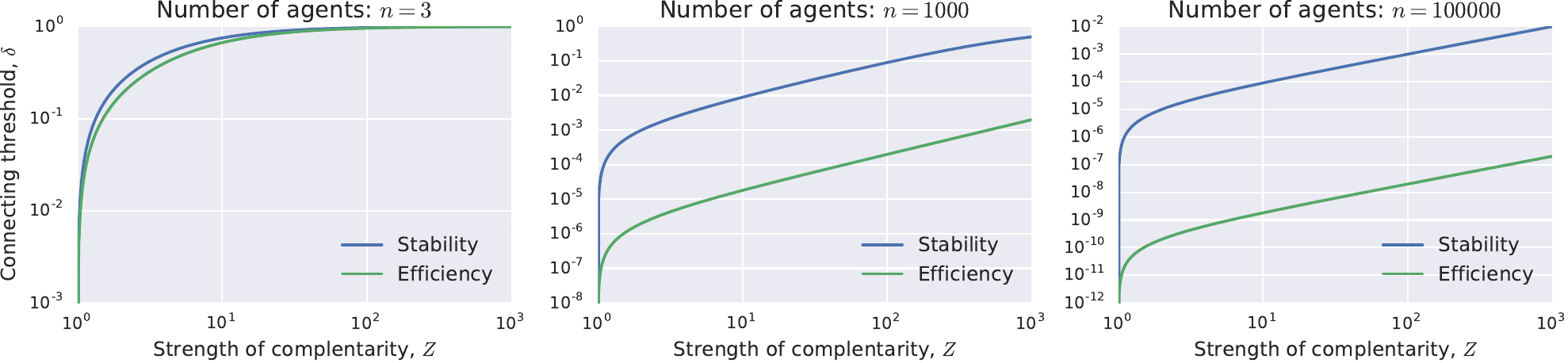}			
		\caption{\textit{Thresholds for connecting.} Visualization of thresholds for connecting from Theorem \ref{claim:suboptimal-sorting_hyperbolic_decay}. The upper part varying sizes of populations and fixed strength of complementarity. The lower part has has varying strength of complementarity and fixed populations size. It is assumed there are two types with an equal number of agents.}
		\label{fig:hyperbolic_threshold}
	\end{figure}

		The remainder of this subsection will sketch a policy intervention that can mitigate the problem of suboptimal sorting by improving welfare through encouraging connection. These interventions can be seen more generally as a design problem where the policy maker intervenes to induce a network that produces higher welfare.  		
		The tool that the policy maker employs is providing incentives to agents for forming specific links. 	
		Define a \textit{link-contingent contract} as a non-negative transfer $\mathcal{C}_{i j }$ to $i $ for linking with another agent $j $. Denote the vector of link-contingent contract as $\mathcal{C} $.
	
	\begin{definition}
		Let a network $\tilde{\mu} $ be \textbf{implementable} from $\mu ,\tau $ given $\mathcal{C} $ if there exist a sequence of tuples $(\mu_0,\tau_0),..,(\mu_l ,\tau_l )$ where $\mu_0=\mu $, $\mu_l =\tilde{\mu} $ and $\tau_l =\tau $ such that: 
		for $q =1,..,l $ from $\mu_{q -1}$ to $\mu_{q }$ is a feasible pairwise move which increases the pair's net-utility most given $\mathcal{C} $, and; $\tilde{\mu} $ is pairwise stable given $\tau_{q }$ and $\mathcal{C} $.
	\end{definition}

	\begin{proposition}\label{claim:policy_accomodate_oversorting}
		Suppose that conditions for suboptimal sorting from Theorem \ref{claim:suboptimal-sorting_hyperbolic_decay} are valid and there are two types then a policy maker can implement a welfare improving network when $\delta \in (\ubar{\delta} ,\bar{\delta} )$ from a sorted network $\mu \in \hat{M} $.
	\end{proposition}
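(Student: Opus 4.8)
The plan is to produce, for a given sorted network $\network\in\networksetaltdouble$, an explicit target network $\networkalt$, a link-contingent contract $\contract$ and transfers under which the best-response dynamics in the definition of implementability carry $\network$ to $\networkalt$, with $\aggregateutilityfct(\networkalt)>\aggregateutilityfct(\network)$ and $\networkalt$ pairwise stable given $\contract$. Write $H$ and $L$ for the two type-components of $\network$ (types $\talentmax$ and $\talentmin$). Since the quota is non-singular and there is no linking surplus, every agent has degree $\degreequota\ge 2$, so each component contains a cycle; I pick an agent $a_1$ on a cycle of $H$ and a cycle-edge $a_1a_2$, and likewise $b_1$ on a cycle of $L$ with cycle-edge $b_1b_2$. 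Because an edge on a cycle is not a bridge, deleting $a_1a_2$ leaves $H$ connected and deleting $b_1b_2$ leaves $L$ connected. I define $\networkalt$ by this degree-preserving rewiring: delete $a_1a_2$ and $b_1b_2$ and add the two cross-type bridges $a_1b_1$ and $a_2b_2$. Each of the four agents keeps degree $\degreequota$, so $\networkalt\in\networkset^{\nosurplusdegree}$, and the bridge $a_1b_1$ joins the two still-connected components, so $\networkalt$ is connected.

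First I would verify welfare improvement. Passing from $\network$ to $\networkalt$ the only change in direct-link value is the replacement of two same-type links by two cross-type links, a loss of $2\linkvaluemultiplier_{\talentmax,\talentmin}\linkvaluefct(\talentmax,\talentmin)$ with $\linkvaluemultiplier_{\talentmax,\talentmin}>0$ by supermodularity; against this, connecting the components turns all $\agentcount_{\talentmax}\agentcount_{\talentmin}$ previously disconnected cross pairs into pairs at finite distance (two as direct links, the rest contributing $\decayexternality\linkvaluefct(\talentmax,\talentmin)$ each under hyperbolic decay), while every same-type indirect pair retains its value. This is exactly the trade-off that defines $\decayexternalitythreshold$ in Theorem \ref{claim:suboptimal-sorting_hyperbolic_decay}, so $\decayexternality>\decayexternalitythreshold$ yields $\aggregateutilityfct(\networkalt)>\aggregateutilityfct(\network)$; equivalently I may simply invoke part (i) of that theorem, by which no element of $\networksetaltdouble$ is efficient once $\decayexternality>\decayexternalitythreshold$.

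Next I would design the contract and trace the path. By part (ii) of Theorem \ref{claim:suboptimal-sorting_hyperbolic_decay}, for $\decayexternality<\decayexternalitythresholdalt$ the sorted network $\network$ is pairwise stable, so the rewiring is privately unprofitable and a strictly positive inducement is needed; this shortfall is bounded by a multiple of $\linkvaluefct$, independent of population, exactly as in the stability computation of Example \ref{example:excessiveassortative}. I place standing non-negative subsidies on the two bridges, with $\contract_{a_1b_1}$ set above this shortfall so that the move in which $a_1,b_1$ form $a_1b_1$ while deleting the non-bridges $a_1a_2,b_1b_2$ becomes the uniquely most profitable feasible pairwise move from $\network$. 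The intermediate network then carries idle slots at $a_2$ and $b_2$ whose only profitable use, since every remaining agent sits at the quota, is to form $a_2b_2$; this second move is both slot-filling and subsidised, hence strictly best, and it produces $\networkalt$. For stability of $\networkalt$ I would check that the only threatening deviations are reversions toward sorting, in which an agent drops a bridge and recombines a within-type link; each such move yields a bounded private gain (of order $\indlinkvalue(\talentmax,\talentmax)-\indlinkvalue(\talentmax,\talentmin)$) and leaves the network connected through the other bridge, so a large enough standing subsidy on each of $a_1b_1$ and $a_2b_2$ deters them, while same-type reshuffles are payoff-neutral and so cannot block. Transfers $\transfernetwork$ are chosen to respect the symmetry of $\networkalt$.

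The step I expect to be the main obstacle is meeting the ``increases the pair's net-utility most'' clause at every stage together with stability of the target. Two requirements must be reconciled: the subsidies must be large enough to flip the first rewiring and to make every reverting two-agent deviation at $\networkalt$ unprofitable, yet targeted so that no unintended linking becomes most profitable and so that the pair forming $a_1b_1$ deletes precisely $a_1a_2,b_1b_2$ (the endpoints on which the second subsidised bridge relies) rather than some other non-bridge. I would resolve this by exploiting that only the two designated pairs are ever subsidised and that all competing gains are bounded independently of the subsidy level, choosing the two subsidies in a window --- above the reversion and flip thresholds, below the level that would make any third link attractive --- and fixing the deletion among welfare-equivalent non-bridges through the planner's knowledge of the deterministic best-response dynamics. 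Finally, reading $\contract$ as a Pigouvian subsidy financed by lump-sum transfers among the agents, the payments are welfare-neutral, so the strict inequality $\aggregateutilityfct(\networkalt)>\aggregateutilityfct(\network)$ is the net welfare gain.
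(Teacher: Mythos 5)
Your plan follows essentially the same route as the paper's proof: the same degree-preserving double rewiring (delete one within-type non-bridge in each component, add two cross-type bridges), the same two-step implementation in which the first subsidised pair's deletions free exactly the slots that the second subsidised pair then fills, and the same calibration of the bridge subsidies in a window above the private shortfall and below the aggregate gain so that the move sequence is uniquely most profitable and $\networkalt$ is pairwise stable. The only notable difference is cosmetic — you secure within-type connectivity after deletion via an explicit cycle-edge argument where the paper leans on its auxiliary lemma, and the paper pins down the subsidy window with explicit bounds $\bigl(\tfrac{1}{2}[\linkvaluefct(\talent,\talent)+\linkvaluefct(\talentalt,\talentalt)-2\linkvaluefct(\talent,\talentalt)],\ \tfrac{1}{2}[\aggregateutilityfct(\networkalt)-\aggregateutilityfct(\network)]\bigr)$ where you leave it qualitative.
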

	
	Note that the individual compensation for paid to agents for connecting to others may not be equal. In  particular the pay may also depend on the types. This is the case when there is both supermodularity and monotonicity in $Z $ then agents may receive compensation increasing their type. This would be the case in a sorted suboptimal network with no transfers where components have same number of agents and are isomorphic to another.

\paragraph{Large population}
We finalize this section by investigating what pattern of linking is exhibited when the set of agents becomes asymptotic infinite. In this large matching market we examine asymptotic perfect sorting, i.e. when measured share of links to same type agents converges to one:
\begin{definition}	
	Let \textbf{asymptotic perfect sorting} hold for a sequence of networks sets $\tilde{M }_{n }$ if for any network $\mu \in \tilde{M }_{n }$ where $n \rightarrow\in fty$ it holds that 
	$|\{i j \in \mu :\,x_{i }=x_{j }\}|/|\mu |\simeq1.$
\end{definition}			

Define \textit{asymptotic independence} as $\delta <(\kappa -1)^{-1}$. For large matching markets the sufficient conditions for asymptotic perfect sorting to emerge in strongly stable networks are:
\begin{proposition}\label{claim:assort:ext:asymptoticperfectPAM}
	If there is supermodularity, a degree quota and constant decay with asymptotic independence then there is asymptotic perfect sorting for strongly stable networks.				 	
\end{proposition}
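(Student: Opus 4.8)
The plan is to lean on the observation, recorded in Section~\ref{sec:model}, that strong stability necessitates efficiency: every $\network\in\networkset^{\strongstability}$ maximises the aggregate utility $\aggregateutilityfct$ among all networks respecting the degree quota $\degreequota$. It therefore suffices to prove that any efficient network devotes a vanishing share of its links to cross-type pairs as $\agentcount\to\infty$. I would first split the aggregate utility into a direct term and an externality term. Since $\weightnet$ depends only on distance and the total link value $\linkvaluefct(\talent_\agenti,\talent_\agentj)=\indlinkvalue_{\agenti\agentj}+\indlinkvalue_{\agentj\agenti}$ is symmetric, summing the benefit identity~\eqref{eq:def:benefits} over agents gives
\begin{equation*}
\aggregateutilityfct(\network)=\sum_{\agenti\agentj\in\network}\linkvaluefct(\talent_\agenti,\talent_\agentj)+\sum_{\pathfct_{\agenti\agentj}(\network)\ge2}\decayexternality^{\pathfct_{\agenti\agentj}(\network)-1}\linkvaluefct(\talent_\agenti,\talent_\agentj),
\end{equation*}
where the second sum runs over unordered pairs at finite distance at least two. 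Call these $D(\network)$ and $I(\network)$.

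The analytic core is a bounded-reach estimate that isolates where asymptotic independence is used. Because no agent has more than $\degreequota$ neighbours, at most $\degreequota(\degreequota-1)^{\counter-1}$ agents lie at distance $\counter$ from any fixed agent; weighting by the decay $\decayexternality^{\counter-1}$ and summing yields the geometric series $\sum_{\counter\ge1}\bigl[(\degreequota-1)\decayexternality\bigr]^{\counter-1}$, which converges exactly because asymptotic independence states $(\degreequota-1)\decayexternality<1$. Consequently the externality routed through any single link is bounded by a constant $\bar{C}$ depending only on $\degreequota$, $\decayexternality$ and $\max\indlinkvalue$ but not on $\agentcount$: assigning each indirect cross-type pair to the first cross-type link on one of its shortest paths and factorising the distance sum over the two sides of that link gives this uniform bound. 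In particular the externality term is at most a constant multiple of $\agentcount$.

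Next I would use supermodularity to price cross-type links. For any two types $\talent>\talentalt$, Definition~\ref{def:complementarity} gives the strict surplus $\gamma_{\talent\talentalt}=\tfrac12[\linkvaluefct(\talent,\talent)+\linkvaluefct(\talentalt,\talentalt)]-\linkvaluefct(\talent,\talentalt)>0$ obtained, in the direct term, by replacing a matched pair of $\talent\talentalt$-links with one $\talent\talent$-link and one $\talentalt\talentalt$-link while preserving every degree. Let $L(\network)$ be the number of cross-type links and $\gamma=\min_{\talent>\talentalt}\gamma_{\talent\talentalt}>0$. Rewiring cross-type links into within-type links (feasible up to $O(1)$ degree-parity residuals once types are self-sufficient and their within-type subnetworks are kept connected) therefore raises $D$ by at least $\gamma\,L(\network)-O(1)$, while it can lower $I$ by at most its cross-type part, which the bounded-reach estimate caps. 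Efficiency of $\network$ forbids any improving rewiring, so the direct gain cannot exceed the externality forgone; combining the lower bound on the direct gain with the bounded-reach cap on the forgone externality forces $L(\network)=o(\agentcount)$, whence $L(\network)/|\network|\to0$ and the within-type share tends to one.

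I expect the decisive difficulty to be that last step: controlling the externality change under rewiring tightly enough to force a \emph{vanishing}, not merely bounded, cross-type share. The obstruction is that $I(\network)$ is itself of order $\agentcount$, the same order as $D(\network)$, so a crude ``externalities are negligible'' comparison only yields $L(\network)=O(\agentcount)$, i.e.\ a bounded share. Upgrading this to $o(\agentcount)$ requires exploiting that the cross-type externality saturates: the number of cross-type pairs at each fixed distance is already $O(\agentcount)$, so once there are $\Theta(\agentcount)$ cross-type links their externality contributions become highly redundant, and an averaging argument over disjoint rewiring pairs must locate one whose forgone externality falls below the fixed direct surplus $\gamma$, contradicting efficiency. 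Making this redundancy quantitative, using bounded reach together with supermodularity applied distance-by-distance so that the within-type links created by a rewiring recover most of the externality lost, is the crux. Two lesser nuisances are the degree-parity residuals, which leave an $o(\agentcount)$ slack and are the reason the conclusion is asymptotic rather than exact, and extending the pairwise exchange cleanly to more than two types via the minimal surplus $\gamma$.
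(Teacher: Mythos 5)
Your starting point matches the paper's — strong stability forces efficiency, and asymptotic independence makes each agent's decay-weighted reach summable via the geometric series $\sum_{\counter\ge1}[(\degreequota-1)\decayexternality]^{\counter-1}$ — but the proposal has a genuine gap at exactly the step you flag yourself. Rewiring only the direct cross-type links and then bounding the induced change in the externality term $I(\network)$ by a "bounded-reach cap" cannot work as a comparison of orders: the forgone cross-type externality is itself $\Theta(\agentcount)$ whenever $L(\network)=\Theta(\agentcount)$, which is the very case you need to exclude, so the inequality $\gamma L(\network)-O(1)\le$ (forgone externality) is consistent with a bounded, nonvanishing cross-type share. The saturation/averaging device you sketch to upgrade $O(\agentcount)$ to $o(\agentcount)$ is the entire content of the asymptotic claim, and it is not carried out.

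The paper closes this by never splitting utility into a direct term $D$ and an indirect term $I$ in the first place. It works with the decay-weighted connection measure itself: per agent, $\sum_{\agentj\ne\agenti}\decayexternality^{\pathfct_{\agenti\agentj}(\network)-1}$ is bounded by the geometric series (this is the only place asymptotic independence is used), and under asymptotic perfect sorting this budget is asymptotically exhausted entirely at the same-type rate $\linkvaluefct(\talent,\talent)$. It then defines $\omega_{\talent\talentalt}$ as the decay-weighted expected cross-type measure and writes the average per-agent utility of any mixing network as the sorted benchmark minus a term proportional to $\omega_{\talent\talentalt}\cdot[\linkvaluefct(\talent,\talent)+\linkvaluefct(\talentalt,\talentalt)-2\linkvaluefct(\talent,\talentalt)]$, strictly positive by supermodularity. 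The supermodular surplus is thus charged against the whole decay-weighted measure — direct and indirect connections alike — so any asymptotically positive cross-type share is a first-order efficiency loss, and there is no competing order-$\agentcount$ externality term to fight. If you want to rescue your route, this is the repair: apply the exchange inequality to the decay-weighted pair measure distance-by-distance rather than to links, so that the forgone externality is repriced at the same supermodular discount instead of being capped crudely.
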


The result above demonstrates that the availability of many agents for linking induces perfect sorting in strongly stable networks. This is the same prediction as the conclusion of \cite{becker_theory_1973} for the marriage market model but holds in the presence of externalities with constant decay. Note that the result relies on efficiency requiring perfect sorting and using this to infer properties of strong stability. 
However, we may interpret it differently, there is no loss of sorting when there are many agents.

\section{Network structure}\label{sec:network:structure}

	In this section we examine the structure of stable networks when measured by specific measures of network centrality such as the degree (i.e. number of links). 
	In order to make progress in this question it is necessary to change restrictions on utility. The most important new restriction in this context is on the value of linking where monotonicity in type is assumed; that is, a higher typesproduce higher total value when fixing the type of the partner. Although this new restriction may seem ad hoc it has a natural correspondence with circumstances where some agent types always increase value of the linking compared to others e.g. by being more capable or have better social skills.

\begin{definition}\label{def:monotone:linkvalue} The link value has \textbf{monotonicity} if 
	$\frac{\partial }{\partial x } Z (x ,y )>0$.
\end{definition}

	In contrast with the previous section here we model costs with an explicit cost function. The cost function for each agent is $c :\mathbb{N}_0\rightarrow\mathbb{R}$ which is positive, strictly increasing and strictly convex in degree (i.e. number of links), $k_{i }$.

\subsection{No externalities}

	We begin with a setting where there are no externalities. The relevant measure of centrality without externalities are agents' number of links, i.e. degree. This measure is commonly used in applied in research on networks.
	
	We start with the following question: what is the relation between network measures of centrality and type? The short answer is that type and connectivity go hand in hand: higher type implies (weakly) higher centrality. We call this degree monotonicity:

\begin{definition}\label{def:degree_mon}
	A set of networks $M $ has \textbf{degree monotonicity} in type if:
	\[\forall \mu \in M :\,\, x_{i }>x_{j } \Rightarrow k_{i }(\mu )\ge  k_{j }(\mu ).\]
\end{definition}

\begin{proposition}
	\label{claim:network:monotonicdegree:talent}	Suppose there are no externalities and monotonicity in link value then the set of pairwise stable networks has degree monotonicity.
\end{proposition}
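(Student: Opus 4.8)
The plan is to route the argument through efficiency rather than attacking the pairwise deviation directly, because efficiency is phrased purely in terms of aggregate utility and is therefore immune to the bookkeeping of transfers. By Lemma~\ref{claim:auxiliary:equivalence:strong:pairwisetype}, in the absence of externalities every pairwise stable network is strongly stable, and every strongly stable network is efficient; hence it suffices to show that any $\network$ maximizing $\aggregateutilityfct$ has degree monotonicity. Writing benefits link-by-link and using $\linkvaluefct=\indlinkvalue_{\agenti\agentj}+\indlinkvalue_{\agentj\agenti}$, aggregate utility has the transfer-free form
\[
\aggregateutilityfct(\network)=\sum_{\agenti\agentj\in\network}\linkvaluefct(\talent_\agenti,\talent_\agentj)-\sum_{\agenti\in\agentset}\costfct(\degree_\agenti(\network)),
\]
so any local rewiring that strictly raises this sum contradicts efficiency, and I will exhibit exactly such a rewiring whenever degree monotonicity fails.

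I would then argue by contradiction: suppose $\talent_\agenti>\talent_\agentj$ but $\degree_\agenti(\network)<\degree_\agentj(\network)$. The key combinatorial step is to find a neighbor of $\agentj$ that $\agenti$ can ``poach'', i.e.\ some $l\in\neighborhood_\agentj(\network)$ with $l\neq\agenti$ and $l\notin\neighborhood_\agenti(\network)$. Such an $l$ exists precisely because $\degree_\agentj>\degree_\agenti$: if no such $l$ existed then $\neighborhood_\agentj(\network)\subseteq\neighborhood_\agenti(\network)\cup\{\agenti\}$, and a short count---splitting on whether $\agenti\agentj\in\network$ and using that no node is its own neighbor---forces $\degree_\agentj(\network)\le\degree_\agenti(\network)$, a contradiction. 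This count is what settles the delicate borderline case $\degree_\agentj=\degree_\agenti+1$, where a crude cardinality bound alone is insufficient.

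With $l$ fixed, consider the swapped network $\networkaltdouble=\network\cup\{\agenti l\}\backslash\{\agentj l\}$, which is well defined since $\agenti l\notin\network$ and $\agentj l\in\network$. Only three degrees change: $\agenti$'s rises by one, $\agentj$'s falls by one, and $l$'s is unchanged (it trades $\agentj$ for $\agenti$). Substituting into the displayed form of $\aggregateutilityfct$ yields
\[
\aggregateutilityfct(\networkaltdouble)-\aggregateutilityfct(\network)
=\bigl[\linkvaluefct(\talent_\agenti,\talent_l)-\linkvaluefct(\talent_\agentj,\talent_l)\bigr]
-\bigl[\costfct(\degree_\agenti+1)-\costfct(\degree_\agenti)\bigr]
+\bigl[\costfct(\degree_\agentj)-\costfct(\degree_\agentj-1)\bigr],
\]
with degrees evaluated at $\network$. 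The link-value bracket is strictly positive by monotonicity (Definition~\ref{def:monotone:linkvalue}), since $\talent_\agenti>\talent_\agentj$ and $\linkvaluefct$ is increasing in its first argument (the second argument $\talent_l$ is held fixed). The two cost brackets combine to a nonnegative quantity: strict convexity makes the marginal cost increasing, and $\degree_\agenti\le\degree_\agentj-1$, so the marginal cost $\agenti$ incurs on its new link is no larger than the marginal cost $\agentj$ saves on the severed one. Hence $\aggregateutilityfct(\networkaltdouble)>\aggregateutilityfct(\network)$, contradicting efficiency of $\network$, and therefore $\degree_\agenti(\network)\ge\degree_\agentj(\network)$.

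The main obstacle, which this routing neutralizes, is the treatment of transfers. Arguing \emph{directly} that the coalition $\{\agenti,l\}$ strictly gains from the swap would require tracking the (possibly defensive) transfer on the severed link $\agentj l$, to which $\agentj\notin\{\agenti,l\}$ is not a party; one would then have to show the deviation is profitable for \emph{every} transfer scheme. Passing to the efficiency characterization eliminates transfers from the analysis entirely, leaving only the combinatorial task of guaranteeing the poachable neighbor $l$ in the tight case $\degree_\agentj=\degree_\agenti+1$, which is the one place where I would slow down and write the count out in full.
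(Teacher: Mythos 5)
Your proposal is correct and follows essentially the same route as the paper's own proof: assume degree monotonicity fails, locate a neighbor $\agentjalt\in\neighborhood_{\agentj}(\network)\backslash(\neighborhood_{\agenti}(\network)\cup\{\agenti\})$, rewire $\agentj\agentjalt$ to $\agenti\agentjalt$, and use monotonicity of $\linkvaluefct$ plus convexity of costs to show aggregate utility strictly rises, contradicting the efficiency that strong stability (equivalent to pairwise stability here by Lemma~\ref{claim:auxiliary:equivalence:strong:pairwisetype}) requires. Your explicit count establishing that the poachable neighbor exists in the tight case $\degree_{\agentj}=\degree_{\agenti}+1$ is a detail the paper asserts without proof, and it is right.
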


	The intuition behind the above result is that the marginal benefits for forming an additional link is strictly increasing in type when the link value is monotonic. Thus an agent with higher type has an increased incentive to form links jointly with other agents.

	We turn to investigating whether or not more gregarious agents (agents with many partners), tend to associate with agents who are as gregarious as themselves? That is, is there sorting in degree. The simple answer is yes, under certain circumstances as elaborated below. 
	
	We build our intuition on two established results: firstly, there is a positive relationship between higher talent and (weakly) more partners (Proposition  \ref{claim:network:monotonicdegree:talent}), and; secondly, partners are chosen assortative in talent (Proposition \ref{claim:assort:elementwise:monotonic}). By combining these two claims it can be shown there will also be degree assortativity. Although the intuition of combining the two previous results is straightforward additional structure needs to be imposed to show the result. Let \textit{complete heterogeneity} be satisfied if for every agent has a unique level of talent, i.e. $n =|X |$. 
	
	Our concept of degree sorting is defined analogously to sorting in type. That is, every agent with strictly higher degree must have partners which weakly dominate in degree when compared partner-by-partner with the partners of agent with lower degree:
	
	\begin{definition}
		A network $\mu $ has \textbf{sorting in degree} when:		
		\[\forall  i,j\in N , k_{i }(\mu )>k_{j }(\mu ):\qquad 
		\mathcal{K} (\nu_{i }(\mu )/\{j \})_l \ge \mathcal{K} (\nu_{j }(\mu )/\{i \})_l ,\qquad l =1,...,k_{j }(\mu ),\]
		where $\mathcal{K} (t )$ as the vector of degrees for agents in coalition $t $ ordered descending.
	\end{definition}

	\begin{proposition}\label{claim:network:elementwise:monotonicdegree} If there are supermodularity, monotonicity in link value as well as complete heterogeneity but no externalities then every pairwise stable network has sorting in degree.\end{proposition}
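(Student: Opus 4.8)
The plan is to combine the two structural results already in hand---sorting in type (Proposition~\ref{claim:assort:elementwise:monotonic}) and degree monotonicity (Proposition~\ref{claim:network:monotonicdegree:talent})---and to use complete heterogeneity as the bridge that converts a statement about neighbors' \emph{types} into one about neighbors' \emph{degrees}. Throughout I would fix a pairwise stable network $\network$ and a pair $\agenti,\agentj$ with $\degree_{\agenti}(\network)>\degree_{\agentj}(\network)$, and verify the defining inequality of sorting in degree for this pair.

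First I would establish that a strict degree gap forces a strict type gap, i.e.\ $\degree_{\agenti}(\network)>\degree_{\agentj}(\network)\Rightarrow\talent_{\agenti}>\talent_{\agentj}$. Under complete heterogeneity all realized types are distinct, so $\talent_{\agenti}\ne\talent_{\agentj}$; and were $\talent_{\agentj}>\talent_{\agenti}$, degree monotonicity (Proposition~\ref{claim:network:monotonicdegree:talent}, which applies since there is monotonicity in link value and no externalities) would give $\degree_{\agentj}(\network)\ge\degree_{\agenti}(\network)$, contradicting the strict degree gap. Hence $\talent_{\agenti}>\talent_{\agentj}$, and sorting in type (Proposition~\ref{claim:assort:elementwise:monotonic}, valid under supermodularity and no externalities) applies to the pair. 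Because $\degree_{\agenti}(\network)>\degree_{\agentj}(\network)$, the offset parameters collapse to $\degree^{*}=\degree_{\agentj}(\network)$ and $\counter^{*}=0$, so the type-sorting inequality reduces to the clean element-wise form $\talentvect(\neighborhood_{\agenti}(\network)/\{\agentj\})_{\counter}\ge\talentvect(\neighborhood_{\agentj}(\network)/\{\agenti\})_{\counter}$ for every $\counter\in\{1,\dots,\degree_{\agentj}(\network)\}$ (with out-of-range entries handled by the same convention used in the two sorting definitions).

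The crux is the final translation from types to degrees, and this is where complete heterogeneity does the real work. Since every agent carries a distinct type and degree monotonicity makes degree a weakly increasing function of type, I would define the type-to-degree map $g$ by $g(\talent_{\agentgen})=\degree_{\agentgen}(\network)$; this is a genuine single-valued and weakly increasing function precisely because types do not repeat. The key observation is that, for any set of agents, applying the monotone map $g$ to its type vector sorted in descending order already yields its degree vector sorted in descending order---no re-sorting is needed---so that $\degreevect(\cdot)_{\counter}=g\big(\talentvect(\cdot)_{\counter}\big)$ holds entry by entry. Applying this to the two neighbor sets and using that $g$ preserves the inequalities obtained above delivers $\degreevect(\neighborhood_{\agenti}(\network)/\{\agentj\})_{\counter}=g\big(\talentvect(\neighborhood_{\agenti}(\network)/\{\agentj\})_{\counter}\big)\ge g\big(\talentvect(\neighborhood_{\agentj}(\network)/\{\agenti\})_{\counter}\big)=\degreevect(\neighborhood_{\agentj}(\network)/\{\agenti\})_{\counter}$, which is exactly sorting in degree.

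I expect the main obstacle to be justifying this order-preservation step rather than any of the inequalities themselves: one must argue carefully that sorting neighbors by descending type simultaneously sorts their degrees in descending order, so that the $\counter$-th largest type and the $\counter$-th largest degree within a neighbor set are carried by the same agent. This correspondence can fail when types are tied---two equal-type neighbors may carry unequal degrees, and degree monotonicity says nothing about their relative order---which is exactly the degenerate case that complete heterogeneity rules out, and this is the sole place the hypothesis $\agentcount=|\talentset|$ is indispensable. The remaining bookkeeping (the boundary index arising when $\agenti\agentj\in\network$, where the removed neighbor shortens the shorter vector by one) is routine and is absorbed by the convention already implicit in the two sorting definitions.
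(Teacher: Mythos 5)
Your proposal is correct and follows essentially the same route as the paper's proof: both deduce the strict type gap from the degree gap via degree monotonicity and complete heterogeneity, invoke sorting in type, and then convert the entry-wise type inequalities into degree inequalities using the fact that under complete heterogeneity the type-to-degree assignment is a well-defined weakly increasing map (the paper phrases this as a two-case split — equal types force the same agent, strictly greater type gives weakly greater degree by Proposition~\ref{claim:network:monotonicdegree:talent} — which is your map $g$ in different clothing). Your explicit flagging of the order-preservation step is a slightly cleaner packaging of the same idea, not a different argument.
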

	
	The above result is identical to Proposition \ref{claim:assort:elementwise:monotonic} with the exception that the assortative property in this result is degree and this result has an added requirement. The emergence of this assortative pattern is not only a theoretical curiosity - this pattern is an underlying property observed across various empirical social networks, see the literature review. 
	
	The feature emerging in Proposition  \ref{claim:network:elementwise:monotonicdegree} is also known as degree assortativity is relevant as unlike \cite{currarini_economic_2009}
	
	When we move to a setting of externalities our pattern of sorting by degree does not hold in general. The reason is that either sorting by talent or monotonic degree may fail. This is captured by the follows from Corollary \ref{claim:core_fail_externalities} in the next section as it shows both sorting in type and monotonic centrality may fail.
	
	\subsection{Externalities}
	
	We proceed by exploring the case of network externalities. We limit the examination to the relation between type and centrality, not neighbors' centrality.
	
	We generalize the pattern of monotonic centrality in talent. However, in order to gain generality the restrictions on the link value $Z (\cdot )$ and a stricter equilibrium concept. We define monotonicity of  $\delta $-decay centrality by replacing the measure in the definition of degree monotonicity. Also define \textit{no modularity} when $\frac{\partial ^{2}}{\partial  x  \partial  y } Z (x ,y )=0$.

	\begin{proposition}\label{claim:monotonic_decaycentrality}
		Suppose there are externalities as well as monotonicity and no modularity then the set of strongly stable networks has $\delta $-decay monotonicity.
	\end{proposition}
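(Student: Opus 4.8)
The plan is to lean on the fact, noted in the model section, that every strongly stable network must be efficient, i.e.\ a maximizer of aggregate utility $\aggregateutilityfct$ over all of $\networkset$ (the grand coalition can implement any network, so if some network gave strictly higher $\aggregateutilityfct$ the grand coalition would block it with transfers). It therefore suffices to prove $\decayexternality$-decay monotonicity for \emph{every} efficient network, and I would do this by a relabelling (swap) argument. First I would rewrite aggregate benefits in a separable form. The total link value $\linkvaluefct(\talent,\talenty)=\indlinkvalue_{\agenti\agentj}+\indlinkvalue_{\agentj\agenti}$ is symmetric in its two arguments, and no modularity gives $\frac{\partial^2}{\partial\talent\partial\talenty}\linkvaluefct=0$; integrating and using symmetry yields $\linkvaluefct(\talent,\talenty)=h(\talent)+h(\talenty)+c_0$ for some function $h$ and constant $c_0$, and monotonicity then forces $h'>0$. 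Summing the benefits \eqref{eq:def:benefits} over all agents and pairing the two individual link values of each edge into $\linkvaluefct$ (using $\weightnet_{\agenti\agentj}=\weightnet_{\agentj\agenti}$), the value-weighting collapses and aggregate benefits become $\sum_{\agenti}h(\talent_\agenti)\,\decaycentrality_\agenti(\network)+c_0\,W(\network)$, where $\decaycentrality_\agenti(\network)=\sum_{\agentj\ne\agenti}\weightnet_{\agenti\agentj}(\network)$ is exactly the $\decayexternality$-decay centrality and $W(\network)=\sum_{\agenti<\agentj}\weightnet_{\agenti\agentj}(\network)$ is a structural total. The cost side is $\sum_\agenti\costfct(\degree_\agenti(\network))$, depending on the network only through the degree profile.

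Next I would suppose, for contradiction, that decay monotonicity fails in some efficient $\network$: there exist agents $\agenti,\agentj$ with $\talent_\agenti>\talent_\agentj$ but $\decaycentrality_\agenti(\network)<\decaycentrality_\agentj(\network)$. Consider the network $\networkalt$ obtained by relabelling through the transposition $\sigma=(\agenti\,\agentj)$, i.e.\ giving agent $\agenti$ the network position formerly held by $\agentj$ and vice versa, while types stay attached to the agents (only positions swap). Because $\networkalt$ is graph-isomorphic to $\network$ via $\sigma$, all pairwise distances $\pathfct$ are preserved, so $\decaycentrality_\agentgen(\networkalt)=\decaycentrality_{\sigma(\agentgen)}(\network)$ and $\degree_\agentgen(\networkalt)=\degree_{\sigma(\agentgen)}(\network)$ for every $\agentgen$. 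In particular the centralities and degrees of $\agenti,\agentj$ are interchanged while those of all other agents, the total $W$, and the permutation-invariant cost sum are left unchanged.

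Finally I would compute the effect on aggregate utility. All terms except $h(\talent_\agenti)\decaycentrality_\agenti$ and $h(\talent_\agentj)\decaycentrality_\agentj$ cancel, giving
\[
\aggregateutilityfct(\networkalt)-\aggregateutilityfct(\network)
=\bigl(h(\talent_\agenti)-h(\talent_\agentj)\bigr)\bigl(\decaycentrality_\agentj(\network)-\decaycentrality_\agenti(\network)\bigr).
\]
By monotonicity the first factor is positive and by the contradiction hypothesis the second is positive, so $\aggregateutilityfct(\networkalt)>\aggregateutilityfct(\network)$, contradicting efficiency of $\network$. Hence every efficient -- and therefore every strongly stable -- network satisfies $\talent_\agenti>\talent_\agentj\Rightarrow\decaycentrality_\agenti(\network)\ge\decaycentrality_\agentj(\network)$, which is $\decayexternality$-decay monotonicity.

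The step I expect to be the genuine content rather than bookkeeping is the separable rewriting of aggregate benefits: it is what converts value-heterogeneous benefits into the clean sum $\sum_\agenti h(\talent_\agenti)\decaycentrality_\agenti$ in which higher type multiplies larger centrality, and it is precisely where \emph{no modularity} is indispensable. With a nonzero cross-partial the swap would also perturb the cross terms $\weightnet_{\agenti\agentj}\linkvaluefct(\talent_\agenti,\talent_\agentj)$, so the clean cancellation -- and the monotone conclusion -- would break. The isomorphism bookkeeping and the degree-only dependence of costs are routine by comparison; note that this argument uses only that $\costfct$ is a symmetric function of the degree profile, not its strict convexity, and it is agnostic to whether decay is constant or hyperbolic since both make $\weightnet_{\agenti\agentj}$ a symmetric function of distance alone.
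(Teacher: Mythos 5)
Your proposal is correct and follows essentially the same route as the paper's proof: use efficiency as a necessary condition for strong stability, exploit no modularity to write the link value separably so that aggregate benefits become $\sum_{\agenti}\decaycentrality_{\agenti}^{\decayexternality}(\network)\cdot\linkvaluefctalt(\talent_{\agenti})$, and then derive a contradiction by swapping the network positions of a type-ordered pair whose centralities are misordered. Your explicit handling of the additive constant and the structural total $W(\network)$ is a minor tidying of the same argument, not a different one.
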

	
	The result above emerges from the fact that there are only monotone and independent effects of more talent (due to the absence of modularity). The independence of effects entails that the link value can be split into a part for each partner, $Z (x ,\tilde{x} )=\hat{Z }(x )+\hat{Z }(\tilde{x} )$. It is possible to show that a given agent $i $'s total contributions to the network is the product between $\hat{Z }(x_{i })$ and the sum of network weights. Under these circumstances monotonicity implies that more talented agents must have higher measure of network weights or else efficiency implied by strong stability is violated.

	We conclude our analysis of monotonic centrality by pointing to the limitations of our results. 
	When no modularity is violated for instance then Proposition \ref{claim:monotonic_decaycentrality} may not hold, see
	Example \ref{ex:supermod:nonmonotonic:centrality} found in Appendix \ref{app:undersorting}. Likewise by relaxing strong stability in Proposition \ref{claim:monotonic_decaycentrality} to pairwise stability it follows that monotonic centrality also fails. This is seen from Example \ref{example:monotonic_fail} and Corollary \ref{claim:core_fail_externalities} in the next section.

	\section{Limits to sorting and centrality monotonicity} \label{sec:limits}
	
	The previous two sections derive patterns for sorting and network centrality which we readdress here. We show the failure of those patterns is possible despite fulfilling some of the conditions that are sufficient in the previous sections: monotone and supermodular link value. We focus on a generic network where the agent with least type is in the center:
	
	\begin{definition}
		A \textbf{star}, $\mu $, is a network where some agent called the center of $\mu $ and each other agent has only one single link is this a link to the center. 
	\end{definition}

	\begin{definition}
		A star is \textbf{low value sponsored} when its center has the minimal type.
	\end{definition}

	When there is monotonicity a low value sponsored star lacks efficiency. The lack of inefficiency stems from the fact that under monotonicity substituting a high-value agent to become the center will increase the value of the network. The following example demonstrates that the low value sponsored star can be  stable under monotonicity and thus also inefficient:	
	
	\begin{example}\label{example:monotonic_fail} Let there be three agents where agents 1,2 are high types while agent 3 is low type. Suppose additionally there is weak supermodularity, strict monotonicity and and convex costs. Moreover network externalities have either constant or hyperbolic decay.

		Let network $\mu $ be the low value sponsored star, i.e. $\mu =\{13,23\}$. Network $\mu $ is inefficient as $\tilde{\mu} =\{12,13\}$ provides strictly higher aggregate utility as for $\delta \in [0,1)$: 		
		\begin{eqnarray*}
			U (\tilde{\mu} )&>&U (\mu ),\\
			Z (\bar{x} ,\bar{x} )+(1+\delta )\cdot Z (\bar{x} ,\underaccent{\bar}{x} )&>& \delta \cdot Z (\bar{x} ,\bar{x} )+2Z (\bar{x} ,\underaccent{\bar}{x} ),\\
			(1-\delta )\cdot Z (\bar{x} ,\bar{x} )&>& (1-\delta )\cdot Z (\bar{x} ,\underaccent{\bar}{x} ).
		\end{eqnarray*}
		
		We show that $\mu $ is pairwise stable for certain transfers. We begin with conditions for when deviating is unprofitable when only are deleted links (no links formed). Cutting all links corresponds to an outside option with net-utility of zero - this is Inequality \ref{eq:monotone:fail:ex:deleteall:solo}. Another option is agent 3 deleting one link and getting net value of the remaining link - see Inequality \ref{eq:monotone:fail:ex:deleteone:agent1}: 	
		\begin{eqnarray}
		\label{eq:monotone:fail:ex:deleteall:solo}	
		s_{i }(\mu ,\tau )
		& \ge  &
		0,\qquad  i \in N ,
		\\
		\label{eq:monotone:fail:ex:deleteone:agent1}	
		\tau_{3 i }&\ge & c (2)-c (1)-z ({\underaccent{\bar}{x} },{\bar{x} }),\qquad i \in  \{2,3\}.
		\end{eqnarray}

		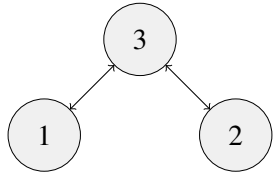
\begin{figure}[b!]
			\vspace{1cm}
			\begin{center}
				\begin{tikzpicture}[nodes={fill=gray!12,circle, ultra thin}, node distance=1.8cm, auto]
				\draw[help lines] (0,0);
				\node[state] (i_1) {3};
				\node[state] (i_2) [below left of=i_1] {1};
				\node[state] (i_3) [below right of=i_1] {2};
				\path[<->] (i_1) edge node [fill=none]{} (i_2);
				\path[<->] (i_1) edge node [fill=none]{} (i_3);
				\end{tikzpicture}
			\end{center}
			\caption{\textit{Limits to monototonic centrality.} The above network depicts the limits of sorting and monotonic centrality from Example \ref{example:monotonic_fail}. The low value agent ($i$=3) is excessively central which is inefficient. Moreover there is insufficient sorting as the high value agents ($i$=1,2) are not linked.}
		\end{figure}
		
		It remains to be shown the condition that prevent any pairwise deviation by forming a link. Only agent 1,2 can form a link. There are three options for forming link 12: deleting no links, deleting one link or deleting all - captured by respectively Inequality \ref{eq:monotone:fail:ex:form:nodelete}, 
		\ref{eq:monotone:fail:ex:form:onedelete} and \ref{eq:monotone:fail:ex:onlyonelink:allbutonelink}:
		\begin{eqnarray}
		\label{eq:monotone:fail:ex:form:nodelete}
		2\cdot [c (3)-c (2)]
		& \ge  &
		(1-\delta )\cdot Z (\bar{x} ,\bar{x} ),
		\\
		\label{eq:monotone:fail:ex:form:onedelete}
		s_{1}(\mu ,\tau )+s_{2}(\mu ,\tau )
		& \ge  &
		Z (\bar{x} ,\bar{x} )+z (\bar{x} ,\underaccent{\bar}{x} )\cdot (1+\delta )-c (1)-c (2)+\max \{\tau_{13},\tau_{23}\},
		\\
		\label{eq:monotone:fail:ex:onlyonelink:allbutonelink}
		s_{i }(\mu ,\tau )+s_{j }(\mu ,\tau )
		& \ge  &
		Z_{i j }-2\cdot c (1),\qquad i \ne j .
		\end{eqnarray}

		We conclude the example by indicating when the above conditions are satisfied. From Lemma \ref{claim:example:monotonic_fail:addition} in Appendix \ref{app:undersorting} the sufficient conditions for Inequality (\ref{eq:monotone:fail:ex:deleteall:solo})-(\ref{eq:monotone:fail:ex:onlyonelink:allbutonelink}) to be satisfied are, linear costs ($\tilde{c} $ of forming a link) and $\delta $ being sufficiently large as well as, 
		\[2\cdot Z (\bar{x} ,\underaccent{\bar}{x} )+Z (\bar{x} ,\bar{x} )\ge 4\tilde{c} \quad\mbox{and}\quad
		Z (\bar{x} ,\underaccent{\bar}{x} )>\tilde{c} .\]				
	\end{example}

	The reason why the low value sponsored star is stable in Example \ref{example:monotonic_fail} is that agents fail to coordinate on displacing the low value agent with a high value agent as center. The failure arises despite having the potential to make everyone better off. 
	
	The intuition from Example \ref{example:monotonic_fail} can extended to a more general result - low value sponsored stars may be stable even for many agents. This failure is expressed in the following result:\footnote{
		The proof follows from the auxiliary results in Lemma  \ref{claim:monotonic_fail} and  \ref{claim:monotonic_fail_remark} from Appendix \ref{app:undersorting}.}

	\begin{corollary}
		\label{claim:core_fail_externalities}
		If there are externalities, monotonicity, supermodularity and a cost function then the low value sponsored star is inefficient and can be pairwise stable.
	\end{corollary}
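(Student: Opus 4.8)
The plan is to establish the two assertions separately, each extending the corresponding computation in Example~\ref{example:monotonic_fail} from three agents to an arbitrary population. Throughout I fix the low value sponsored star $\network$ on $\agentset$, whose center $\agenti$ has type $\talent_\agenti=\talentmin$.

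For inefficiency the key observation is that, on a fixed vertex set, every star carries the same total cost $\costfct(\agentcount-1)+(\agentcount-1)\costfct(1)$, since exactly one agent has degree $\agentcount-1$ and all others have degree $1$ irrespective of who occupies the center. In any star the center lies at distance $1$ from everyone while every pair of periphery agents lies at distance $2$, so under either decay rule in (\ref{eq:utility_shortest_paths}) the aggregate benefit equals $\decayexternality S+(1-\decayexternality)T$, where $S=\sum_{\agentj<\agentjalt}\linkvaluefct(\talent_\agentj,\talent_{\agentjalt})$ is the center-independent sum over all pairs and $T=\sum_{\agentj\ne\agenti}\linkvaluefct(\talent_\agenti,\talent_\agentj)$ collects the link values incident to the center. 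Because $1-\decayexternality>0$, aggregate utility is strictly increasing in $T$; and by monotonicity $T$ is strictly smaller when the center carries type $\talentmin$ than when a maximal-type agent occupies it (the difference reducing, after the symmetric $\talentmax$--$\talentmin$ cross terms cancel, to $\sum_{\text{middle }\agentj}[\linkvaluefct(\talentmax,\talent_\agentj)-\linkvaluefct(\talentmin,\talent_\agentj)]>0$). Hence the star $\networkalt$ obtained by exchanging the roles of the low-type center and a high-type agent satisfies $\aggregateutilityfct(\networkalt)>\aggregateutilityfct(\network)$, which is exactly the $(1-\decayexternality)[\linkvaluefct(\talentmax,\talentmax)-\linkvaluefct(\talentmax,\talentmin)]>0$ gap of the Example written for general populations.

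For pairwise stability I would exhibit transfers $\transfernetwork$ under which the periphery subsidizes the center and then rule out every admissible deviation; these fall into three families. A single agent cutting all its links secures the outside option $0$, so I impose $\payofftransfer_{\agentj}(\network,\transfernetwork)\ge0$ for all $\agentj$, the analogue of (\ref{eq:monotone:fail:ex:deleteall:solo}). The center dropping one or more links must also be unprofitable; since breaking a link forfeits its attached transfer while saving only a marginal cost, taking the periphery-to-center transfers large enough pins the center at full degree, generalizing (\ref{eq:monotone:fail:ex:deleteone:agent1}). The crucial family is two periphery agents $\agentj,\agentjalt$ forming a direct link, possibly substituting it for one or both of their links to the center: their mutual connection is upgraded from distance $2$ (weight $\decayexternality$) to distance $1$ (weight $1$), so the gross gain to the pair is $(1-\decayexternality)\linkvaluefct(\talent_\agentj,\talent_{\agentjalt})$ set against a strictly convex added cost, with the three deletion variants giving inequalities of the form (\ref{eq:monotone:fail:ex:form:nodelete})--(\ref{eq:monotone:fail:ex:onlyonelink:allbutonelink}). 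By monotonicity and supermodularity the strongest such temptation is between the two highest-type periphery agents, whose mutual value $\linkvaluefct(\talentmax,\talentmax)$ is maximal, so it suffices to defeat that single deviation: I take $\decayexternality$ close enough to $1$ that the factor $1-\decayexternality$ shrinks every upgrade gain, and a strictly convex cost of small curvature so that the sufficient conditions $2\linkvaluefct(\talentmax,\talentmin)+\linkvaluefct(\talentmax,\talentmax)\ge4\margcostconstant$ and $\linkvaluefct(\talentmax,\talentmin)>\margcostconstant$ carried over from Lemmas~\ref{claim:monotonic_fail} and~\ref{claim:monotonic_fail_remark} persist as strict inequalities. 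As all binding constraints are then strict they hold on an open set of parameters, so the star is pairwise stable for some cost function and some $\decayexternality$ sufficiently close to $1$, which is all that ``can be pairwise stable'' requires.

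The main obstacle is reconciling the transfer conditions as the population grows. The center's cost $\costfct(\agentcount-1)$ is strictly convex and eventually exceeds its direct benefit $\sum_{\agentj\ne\agenti}\indlinkvalue(\talentmin,\talent_\agentj)$, so the periphery must subsidize it; yet each periphery agent is simultaneously constrained by its own participation and by the no-profitable-upgrade condition, which bound how much it can transfer. Fitting the center's participation, the periphery's participation, and the absence of a profitable periphery--periphery link into a single transfer vector $\transfernetwork$ -- while honoring the rule that a broken link releases its attached transfer -- is the delicate bookkeeping performed in Lemmas~\ref{claim:monotonic_fail} and~\ref{claim:monotonic_fail_remark}. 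Confining attention to the favorable window above (near-linear convex cost, $\decayexternality$ near $1$, and the two value inequalities) keeps that system solvable, and since existence for some parameters is all the corollary claims, this window suffices.
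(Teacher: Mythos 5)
Your proposal is correct and follows essentially the same route as the paper: inefficiency is obtained by swapping the low-type center for a high-type agent (your $\decayexternality S+(1-\decayexternality)T$ decomposition is just a clean formalization of the center-swap argument the paper states in prose before Example \ref{example:monotonic_fail}), and stability is delegated to the transfer construction and deviation bookkeeping of Lemmas \ref{claim:monotonic_fail} and \ref{claim:monotonic_fail_remark}, which is exactly what the paper's own footnoted proof does. The only point where you go slightly beyond the paper is in noting that the linear-cost sufficient conditions of Lemma \ref{claim:monotonic_fail_remark} survive a small strictly convex perturbation, which if anything patches a minor looseness in the original.
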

	
	The above result can be seen as an analogy to the results \cite{farrell_standardization_1985} and \cite{katz_network_1985}. In their work on markets of consumers and sellers under consumer externalities a loss productive company may become the main supplier despite this being inefficient. This is similar to this result, the agents are locked in to an inefficient star despite this being inefficient. This can be applied to understand how decentralized organizations may fail to promote the best candidate to become the leader but rather choose the incumbent.
	
	As alluded to previously the above corollary can also be interpreted in terms of its implications for networks:
	
	\begin{corollary}
		\label{claim:fail_corollary} When there are externalities then under pairwise stability: sorting in type may fail; monotonic degree centrality and $\delta$-decay centrality may fail.
	\end{corollary}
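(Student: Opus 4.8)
My plan is to establish all three negative claims with a single witness network, the low value sponsored star, whose pairwise stability under externalities, monotonicity, and supermodularity is already supplied by Corollary \ref{claim:core_fail_externalities} (illustrated in the three-agent Example \ref{example:monotonic_fail}). Since each assertion is existential—\emph{some} stable network violates the pattern—it suffices to fix one such star $\network$, with center $c$ of minimal type and leaves $1,\dots,m$ of strictly higher type each linked only to $c$, and to verify that each pattern breaks in $\network$. No fresh stability argument is needed beyond citing Corollary \ref{claim:core_fail_externalities}.

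The two centrality failures are immediate. For degree: in $\network$ the center has $\degree_c=m$ while every leaf has degree one, so taking any leaf $\agenti$ against $c$ gives $\talent_\agenti>\talent_c$ yet $\degree_\agenti<\degree_c$, contradicting Definition \ref{def:degree_mon}. For $\decayexternality$-decay centrality I would compute the decay-weighted connectivity $\Sigma_{\agentj\ne\agenti}\weightnet_{\agenti\agentj}(\network)$ under both specifications in (\ref{eq:utility_shortest_paths}): the center sits at distance one from all $m$ leaves, giving weight $m$, whereas each leaf has one distance-one neighbor (the center) and $m-1$ distance-two neighbors, giving $1+(m-1)\decayexternality<m$ whenever $\decayexternality<1$. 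Thus each higher-type leaf has strictly smaller decay centrality than the lower-type center, so $\decayexternality$-decay monotonicity fails.

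For sorting in type the economic content is that the high-type leaves are pooled onto the low-type center instead of being matched assortatively among themselves, so the guarantee of Proposition \ref{claim:assort:elementwise:monotonic} is lost once externalities are present. I would make this concrete by observing that each leaf's sole partner is the minimal-type center while $c$ monopolizes the links to the high types; hence no leaf's neighborhood can weakly dominate the center's neighborhood partner-by-partner, which inverts the required assortative ordering. As a clean fallback I would record that $\network\notin\networkset^{\perfsort}$, so perfect sorting fails outright in this stable network.

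The main obstacle is the bookkeeping in the sorting-in-type inequality for the center-leaf pair. Because the two are linked, deleting the shared link empties the leaf's neighborhood, while the degree offset $\counter^{*}=\degree_c-1$ pushes the center's index past its $m-1$ remaining partners; a literal reading therefore pairs an out-of-range left entry against an out-of-range right entry, so the inequality is vacuous rather than strictly violated for exactly this pair. I would resolve this either by invoking the perfect-sorting failure above, or by recording that the assortative conclusion holding without externalities simply is not guaranteed here—the high types fail to sort onto one another. Once this wrinkle is settled, the corollary reduces to the two one-line centrality checks plus the citation of Corollary \ref{claim:core_fail_externalities}.
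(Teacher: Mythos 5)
Your route is the paper's own (implicit) route: the paper offers no separate proof of this corollary beyond pointing at the low value sponsored star of Corollary~\ref{claim:core_fail_externalities} (itself resting on Lemmas~\ref{claim:monotonic_fail} and~\ref{claim:monotonic_fail_remark}), and your two centrality checks are exactly the intended content. Both are correct: the center has degree $m$ against the leaves' degree one, and decay weight $m$ against the leaves' $1+(m-1)\decayexternality<m$ for $\decayexternality<1$ (which the stability range of Lemma~\ref{claim:monotonic_fail_remark} guarantees), under either decay specification. Citing the earlier corollary for stability rather than re-deriving it is legitimate.

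The one genuine gap is the "sorting in type may fail" clause, and you have diagnosed it correctly but not closed it. In the star every cross-type pair is a (leaf, center) pair that is itself linked, so after deleting the shared partner the leaf's neighborhood is empty while the offset $\counter^{*}=\degree_{c}-1$ pushes the center's index past its remaining $m-1$ partners; every instance of the inequality in Definition~2 is therefore out of range on both sides, and under a literal reading the star satisfies sorting in type \emph{vacuously} rather than violating it. Neither of your fallbacks repairs this: observing $\network\notin\networkset^{\perfsort}$ does not help because perfect sorting is strictly stronger than sorting in type (a single cross-type link between two degree-one agents already fails perfect sorting while vacuously satisfying Definition~2), and "the conclusion is not guaranteed" is not a proof that it fails. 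To actually witness a violation you would need either a stable network with a non-vacuous inverted comparison (e.g.\ a pair $\agenti,\agentj$ with $\talent_\agenti>\talent_\agentj$, \emph{not} linked to each other, where $\agentj$'s $(\counter+\counter^{*})$-th partner strictly out-ranks $\agenti$'s $\counter$-th partner), or an explicit convention that an absent partner is dominated by any present one. To be fair, the paper itself never confronts this edge case — its figure caption simply asserts "insufficient sorting" because the high types are unlinked — so the defect is inherited; but as a self-contained proof of the corollary, this clause remains unestablished.
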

	
	The corollary above demonstrates that despite meeting the conditions of \cite{becker_theory_1973} these conditions are not sufficient for sorting in social networks with externalities.

\section{Concluding discussion}\label{sec:discussion}		
	
	Our analysis is based on strict assumptions which we discuss in this final section. 
	We avoid discussing sorting under search as there is a large literature e.g.  \cite{shimer_assortative_2000}. 
	
	One severe caveat with our analysis, and stable networks in general, is that these networks may not exist. The classic example is the room-mate problem, cf. \cite{gale_college_1962}. Furthermore, the gross substitutes conditions from \cite{kelso_job_1982} which ensures existence of stable matchings in related settings are not satisfied in our setting with externalities.\footnote{The lack of gross substitutes is due to the fact that a change in one active link can imply a change the value of other links. This fact will violate gross substitutes.} Neverthless, the scope in lack of existence may not be excessive. We can derive the cost of stability (the necessary payments to induce stability) from \cite{bachrach_cost_2009}.
	
	There is also a number of restrictive assumptions on payoff. The most crucial assumption are payoff separability and fixed structure of externalities. However, the results should be robust, for instance to introduction of utility from triads etc. which is common in the economic literature on networks.
	Another critical assumptions is supermodularity and monotonicity along with perfect transferability. Nevertheless, as mentioned in the introduction, these two assumptions can be replaced by monotonicity in individual link value and perfect non-transferability, which is also in line with some research on peer effects.\footnote{Or more broadly by generalized increasing in differences from \cite{legros_beauty_2007}.} 
	Finally although our model did not include a measure of effort this can incorporated by rewriting the setup as in \cite{baccara_homophily_2013}.
	
	Finally it is important to note that the derivation of Proposition \ref{claim:assort:ext:asymptoticperfectPAM} and \ref{claim:monotonic_decaycentrality} rely on efficiency being a necessary condition for strong stability to hold. Although efficiency is a unique property for strong stability (and does not hold for weaker concepts) then strong stability should be seen as a refinement with desirable properties which makes it more likely when it exists.\footnote{In some circumstances the existence of contracts where an agent may subsidize or penalize another agent's link formation with alternative agents may imply that strong stability even if contracts were limited to being pairwise specified, cf. \cite{bloch_formation_2007}. }

\bibliographystyle{model2-names}
\bibliography{anf_abn}

\appendix
\section{Auxiliary results}\label{app:equivalence:strong:pair}

	\begin{lemma}\label{claim:auxiliary:equivalence:strong:pairwisetype}
		In the absence of network externalities then the set of strongly stable networks is equivalent to the set of  pairwise stable networks, i.e., $M ^{p-stb }_{\delta =0}=M ^{s-stb }_{\delta =0}$.
	\end{lemma}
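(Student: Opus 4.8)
The plan is to prove the two inclusions separately. The inclusion $\networkset^{\strongstability}_{\decayexternality=0}\subseteq\networkset^{\pairwisestability}_{\decayexternality=0}$ is immediate: a strongly stable network admits transfers under which no coalition of any size can block, hence in particular no coalition of size at most two can block, which is exactly pairwise stability (this is the general implication already noted in Section \ref{sec:model}). All of the work is therefore in the reverse inclusion $\networkset^{\pairwisestability}_{\decayexternality=0}\subseteq\networkset^{\strongstability}_{\decayexternality=0}$, i.e.\ that without externalities pairwise stability already rules out blocking by arbitrary coalitions.

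The single structural fact I would exploit is that setting $\decayexternality=0$ in \eqref{eq:utility_shortest_paths} makes benefits link-separable: $\benefits_{\agenti}(\network)=\sum_{\agentj\in\neighborhood_{\agenti}(\network)}\indlinkvalue_{\agenti\agentj}$, so an agent's utility depends only on the links incident to it and on its own degree (through the cost or the quota). Consequently the value $\linkvaluefct_{\agenti\agentj}$ created by a link, and the marginal cost of carrying it, are unaffected by the rest of the network, and the game reduces to a transferable-utility many-to-many matching problem. I would fix $\network\in\networkset^{\pairwisestability}_{\decayexternality=0}$ together with transfers $\transfernetwork$ that defeat every single- and two-agent deviation, and argue that these same transfers defeat every coalition.

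Suppose, toward a contradiction, that some coalition $\coalition$ with $|\coalition|\ge 3$ blocks $\network$ under $\transfernetwork$, moving to $\networkalt$ with new transfers $\transfernetworkalt$ so that every member is strictly better off. I would write the move as an added set $A=\networkalt\setminus\network$ (links internal to $\coalition$) and a deleted set $D=\network\setminus\networkalt$ (each link touching $\coalition$), and sum the members' strict gains. Transfers internal to $\coalition$ cancel by zero-sum; for a deleted link to an outsider, the single-agent stability of $\network$ under $\transfernetwork$ (the coalition member severing just that one link is unprofitable) gives that its link value plus transfer at least covers the marginal cost it frees, so such deletions contribute no surplus except insofar as they substitute for a newly added link. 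After this bookkeeping, the coalition's aggregate strict gain reduces to $\sum_{A}\linkvaluefct-\sum_{D}\linkvaluefct$ net of the convex cost changes, which must be strictly positive.

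The crux — and the step I expect to be the main obstacle — is to disentangle this bundled surplus into a single bilateral substitution. Because benefits are link-separable, the value of any edge in $A$ does not depend on the other edges of $A$, which is exactly the leverage needed for an exchange argument: I would show that at least one added link $\agenti\agentj\in A$ can be formed by the pair $\{\agenti,\agentj\}$ while each endpoint deletes only its own incident links in $D$ (freeing the required slot under the quota $\degreequota$, or reducing cost in the cost-function model), and that this bilateral substitution is itself strictly profitable. Such a move is feasible for a two-agent coalition, contradicting pairwise stability of $\network$ under $\transfernetwork$. To guarantee that such a profitable single edge exists I would either run a greedy/augmenting-path argument over $A\cup D$ — peeling off the most valuable added link and charging it against the deletions at its two endpoints — or appeal to the gross-substitutes structure that link-separable benefits with convex costs induce, under which the pairwise conditions are known to characterise the core. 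The delicacy is ensuring the degree quota (or the convex marginal cost) is respected at each endpoint after restricting attention to a single edge, which is precisely why the substitution formulation of pairwise stability, rather than the weaker Jackson--Wolinsky notion, is the right concept to work with here.
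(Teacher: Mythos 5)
Your proposal follows essentially the same route as the paper's proof: the strong-to-pairwise inclusion is dismissed as definitional, and the converse is established by partitioning a coalitional block into unilaterally profitable deletions (handled by singleton/pairwise blocks) and a residual set of added links, then arguing that since deletions contribute weakly negative value while the coalition's total gain is strictly positive, at least one added link together with at most one incident deletion per endpoint constitutes a feasible, strictly profitable two-agent substitution. The "greedy peeling" step you flag as the main obstacle is exactly the charging argument the paper carries out, including the same check that deleting one link per endpoint preserves feasibility under the degree quota.
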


	\begin{proof}		
	By definition it holds that $M ^{p-stb }_{\delta =0}\subseteq M ^{p-stb }_{\delta =0}$, thus we need to show that $M ^{p-stb }_{\delta =0}\subseteq M ^{p-stb }_{\delta =0}$ to prove the claim. This claim is shown using similar to arguments to \cite{klaus_stable_2009}'s Theorem 3.i.
	
	Let $\mu $ with associated contracts $\tau $ be a network which is blocked by a coalition. It will be shown that for every coalition $t \in T $ that blocks, within the coalition there is a subset of no more than two members that also wishes to block the network. 
	Let $\tilde{\mu} $ be the alternative network that the blocking coalitions implements through a feasible coalition move and $\tau $ be the transfers associated with $\tilde{\mu}$.
	
	It is always possible to partition the set of deleted links  $\mu \backslash\tilde{\mu} $ into two: 
	(i) a subset denoted $\hat{\mu} $ where for each link $ij$ that can be deleted where one of the two partners can benefit, i.e. it holds that either $z_{i j }+\tau_{i j }-[c_{i}(\mu )-c_{i }(\mu \backslash{i }{j })]<0$
	or $z_{j i }+\tau_{j i }-[c_{j }(\mu )-c_{i }(\mu \backslash{i }{j })]<0$; (ii) a subset denoted $\breve{\mu} $ where for each link ${i }{j }$ neither of the previous two inequalities are satisfied.
	
	Suppose that the first partition is non-empty, i.e. $\hat{\mu}  \ne\emptyset$.  However, as deleting links can be done by a single agent on its own then the move only takes needs the coalition of that agent to delete the link. Thus any part of a coalitional move that only involves profitably removing links can be performed in parts by a coalition with a single agent - therefore this move is also a pairwise block. 
	
	Thus it remains to be shown that the remaining part of coalitional move also can be performed as a pairwise block, i.e. when forming $\tilde{\mu} \backslash\mu $ and deleting $\breve{\mu} $. 
	This part of the coalitional move must entail forming links as no links can be deleted profitably. The set of formed links $\tilde{\mu} \backslash\mu $ can be partitioned into a number of $|\tilde{\mu} \backslash\mu |$ feasible submoves of adding a single link while deleting links by each of the agents $i $ and $j $ who form a link. The feasibility for each of the partitioned moves is always true when there is a cost function as moves are unrestricted. It is now argued that each of the partitioned moves are feasible when there is a degree quota. If the network $\mu \cup{i }{j }$ is feasible then the move of simply adding the link is feasible. If $\mu \cup{i }{j }$ is not feasible, then agents $i $ and $j $ can delete at most one link each and if both $\mu $ and $\tilde{\mu} $ are feasible then this also feasible as the degree quota is kept.
	
	For the coalitional move to $\tilde{\mu} $ it must be that at least at least one link among the implemented links $\tilde{\mu} \backslash\mu $ has a strictly positive value that exceeds the loss from deleting at most one link for each of two agents forming the link. This follows as it is known that deleting one or more links cannot add any value and thus must have weakly negative value and that by definition the total value to the blocking coalition must be positive. As every one of the partitioned moves is feasible, it follows that for every coalitional move there are two agents who can form link while potentially destroying current links and both be better off. In other words, for every coalition that blocks, there is a pairwise coalition that blocks. 
	\end{proof}

	\begin{fact}\label{claim:construct:network:degreequota:size}
	For every $\kappa ,n $ such that $n >\kappa $ and $n \cdot \kappa $
	is even there exists a network $\mu_{n ,\kappa }$ where all agents have exactly $\kappa $ neighbors. Moreover, if $\kappa \ge 2$ then $\mu_{n ,\kappa }$ is connected.
	\end{fact} 
	
	\begin{proof}
	Suppose $n $ is even. Let $\%$ be the modulus operator. We can construct the following networks.
	\begin{eqnarray*}
	\hat{\mu}_{n ,\kappa } & = & \left\{i j :\,i \in \{1,...,\tfrac{n }{2}\},\,
	j \in \{(\tfrac{n }{2}+i \,\%\,\tfrac{n }{2}),...,(\tfrac{n }{2}+\left[i +\kappa -1\right]\,\%\,\tfrac{n }{2}\})\right\},\,\,\kappa \le \tfrac{n }{2},\\
	\tilde{\mu}_{n ,\kappa } & = & \begin{cases}
	\hat{\mu}_{n ,\kappa }, & \kappa \le \frac{n }{2},\\
	\mu_{c}\backslash\hat{\mu}_{n ,n -\kappa -1}, & \kappa >\frac{n }{2}.
	\end{cases}
	\end{eqnarray*}

	Letting $\mu_{n ,\kappa }=\tilde{\mu}_{n ,\kappa }$ is sufficient for $n $ is
	even. 
	When $n $ is odd we know that $\kappa $ is even and thus we can use the following amended procedure instead:
	\begin{eqnarray*}
		\iota_{n  ,\kappa }(\iota )=
	\begin{cases}
		\tfrac{n -1}{2}+\iota ,&\kappa \le \tfrac{n -1}{2}\\
		\tfrac{n -1}{2}+(\iota +\kappa )\,\%\,\tfrac{n -1}{2},&\kappa >\tfrac{n -1}{2}
	\end{cases}
	\end{eqnarray*}
	\begin{eqnarray*}
			\mu_{n ,\kappa } & = &
		\tilde{\mu}_{n -1,\kappa }\backslash\left\{i j :\,i \in \{1,..,\tfrac{\kappa }{2}\},\,j =\iota_{n  ,\kappa }(i )\right\}\cup
		\{i j :\,i =n ,\,j \in (\cup_{\iota \{1,..,\tfrac{\kappa }{2}\}}\{\iota ,\iota_{n  ,\kappa }(\iota )\})\}		
	\end{eqnarray*}
	
	We now show that if $\kappa \ge  2$ it follows that $\tilde{\mu}_{n ,\kappa }$ is connected. Assume that $n $ is even and suppose $\kappa \le \tfrac{n }{2}$; for any $i \in N : i <\frac{n }{2}$ where $i' =i +1$ and let $j =\frac{n }{2}+i +1$ where $i j ,i' j \in \tilde{\mu}_{n ,\kappa }$; thus for all $i ,i' \in \{1,..,\frac{n }{2}\}$ it holds that $p_{i i' }(\tilde{\mu}_{n ,\kappa })<\in fty$. In addition, as for any $i \in N : i \le \frac{n }{2}, j =\frac{n }{2}+i $ it holds that $i j \in \tilde{\mu}_{n ,\kappa }$ it follows that $\tilde{\mu}_{n ,\kappa }$ is connected. If instead $\kappa >\tfrac{n }{2}$ then by construction   $i i' \in \tilde{\mu}_{n ,\kappa }$ if either $\max (i ,i' )\le \frac{n }{2}$ or $\min (i ,i' )>\frac{n }{2}$ as $i i' \notin\hat{\mu}_{n ,n -\kappa -1}$. Moreover, for $i \in N : i <\frac{n }{2}$ and $j =\tfrac{n }{2}+(i +\kappa )\,\%\,\tfrac{n }{2}$ it holds that $i j \notin\hat{\mu}_{n ,n -\kappa -1}$; thus $i j \in \tilde{\mu}_{n ,\kappa }$. Therefore $\tilde{\mu}_{n ,\kappa }$ must be connected. 
	
	Assume instead that $n $ is odd. By the above argument there are at least two connected subnetworks consisting of agents in $\cup_{\iota \{1,..,\tfrac{\kappa }{2}\}}\{\iota ,\iota_{n  ,\kappa }(\iota )\}$ and agents who are connected through agent, $n $, i.e. $N \backslash(\cup_{\iota \{1,..,\tfrac{\kappa }{2}\}}\{\iota ,\iota_{n  ,\kappa }(\iota )\})$. 	 If $\kappa \le \tfrac{n -1}{2}$ where $i =\frac{\kappa }{2}$, $i' =\frac{\kappa }{2}+1$ and $j =\tfrac{n -1}{2}+\frac{\kappa }{2}+1$ then $i j ,i' j \in \tilde{\mu}_{n ,\kappa }$ and thus $\tilde{\mu}_{n ,\kappa }$ is connected. If $\kappa >\tfrac{n -1}{2}$ where $i =\frac{\kappa }{2}$, $i' =\frac{\kappa }{2}+1$ and $j =\tfrac{n -1}{2}+(\iota +\kappa +1)\,\%\,\tfrac{n -1}{2}$ then $i j ,i' j \in \tilde{\mu}_{n ,\kappa }$ and thus $\tilde{\mu}_{n ,\kappa }$ is connected.	
	\end{proof}

\section{Proofs}\label{app:proofs}

\subsection{Sorting: no externalities}\label{app:assort:talent}

\noindent
\textbf{Proposition \ref{claim:assort:elementwise:monotonic}:} If there is supermodularity and no externalities then for any pairwise stable network it holds that there is sorting in type.
	\begin{proof}
	Suppose the claim is false. Let $q $ be the lowest index for which the condition fail: for all $l <q $ it holds that $\mathcal{X} (\nu_{i }(\mu )/\{j \})_l \ge \mathcal{X} (\nu_{j }(\mu )/\{i \})_{l +l ^{*}}$ where $l ^{*}=\max (k_j (\mu )-k_i (\mu ),0)$. Thus there are two agents $i' ,j' $ such that:
	
	\[\begin{array}{rll}
	x_{j' }&=\mathcal{X} (\nu_{j }(\mu ))_{q }, & j' \in (\nu_{j }(\mu )\backslash(\nu_{i }(\mu )\cup\{i \})),\\
	x_{i' }&<\mathcal{X} (\nu_{j }(\mu ))_{q }, & i' \in (\nu_{i }(\mu )\backslash(\nu_{j }(\mu )\cup\{j \})).
	\end{array}\]
	
	Recall $k ^{*}=\min (k_{i }(\mu ),k_{j }(\mu ))$. The argument why there must exist an agent $i' $ in $\nu_{j }(\mu )$ but not in  $(\nu_{j }(\mu )\cup\{j \})$ is that 
	$|\{\iota \in \nu_i (\mu ): x_\iota <x_{j' }\}|>
	 |\{\iota \in \nu_j (\mu ): x_\iota <x_{j' }\}|$. This follows as by construction it holds that $|\{\iota \in \nu_i (\mu ): x_\iota <x_{j' }\}|=k ^{*}-q +1$ and $|\{\iota \in \nu_j (\mu ): x_\iota <x_{j' }\}|\le k ^{*}-q $. 
	
	The agents are such that $x_{i }>x_{j },x_{i' }<x_{j' }$ as well as ${i }{j' },{j }{i' }\notin\mu $. However, this fact implies that there is a violation of strong stability: agents $i ,i' ,j ,j' $ can deviate by destroying $\{ i j ,i' j'  \}$ and forming $\{i j' ,i' j \}$ and thus increase payoffs due to supermodularity (cf. Equation \ref{eq:supermodularity}). From Lemma \ref{claim:auxiliary:equivalence:strong:pairwisetype} it follows that pairwise stability is also violated if strong stability is violated.
	\end{proof}

	\subsection{Sorting: externalities - finite population}\label{app:inefficient:assort}
		
		\begin{lemma}\label{claim:sametype:noncritical:link}
			Suppose that $\min _{x\in  X} n_x>\kappa$, $\kappa\ge 2$;  if $\exists  i\in  N:$ 
			i) $|\{i'\in  \nu_i(\mu): x_{i'}=x_i\}|<n_x-1$,
			ii) $\min _{i'\in  N_x\backslash\nu_i(\mu)}k_{i'}(\mu)=\kappa$,  and 
			iii) $\max _{i'\in  N_x\backslash\nu_i(\mu)}|\{i''\in  \nu_{i'}(\mu): x_{i''}\ne x\}|=0$ 
			then $\exists  i',i''\in  \mu$ such that $i',i''\notin \nu_i(\mu)$ and $p_{i'i''}(\mu\backslash\{i'i''\})<\in fty$			
		\end{lemma}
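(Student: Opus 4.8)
I would first restate the target in graph-theoretic terms: a link $i'i''\in\mu$ is \emph{non-critical} exactly when it lies on a cycle of $\mu$, because deleting one edge of a cycle leaves its endpoints joined by the complementary arc, so that $p_{i'i''}(\mu\setminus\{i'i''\})<\infty$. Since the conclusion also demands $i',i''\notin\nu_i(\mu)$, it suffices to produce a cycle of $\mu$ carrying an edge whose two endpoints both avoid $\nu_i(\mu)$. I would localize the search to the type $x:=x_i$. Write $A=\nu_i(\mu)\cap N_x$ for the same-type neighbours of $i$ and $S=N_x\setminus(\nu_i(\mu)\cup\{i\})$ for the same-type non-neighbours. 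Condition (i) gives $|A|<n_x-1$, hence $S\neq\emptyset$; conditions (ii) and (iii) say that every $i'\in S$ has $k_{i'}(\mu)=\kappa\ge 2$ and links only to same-type agents, and since no member of $S$ is adjacent to $i$, each spends all $\kappa$ of its links inside $A\cup S$. An edge with both endpoints in $S$ is then the natural candidate, since it is the only kind automatically disjoint from $\nu_i(\mu)$.

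The core step is to guarantee an $S$--$S$ edge, and this is where I expect the real difficulty. I would count: each $a\in A$ is adjacent to $i$ and has degree at most $\kappa$, so it absorbs at most $\kappa-1$ links from $S$, giving $e_{SA}\le(\kappa-1)|A|$ for the number of $S$--$A$ edges; summing degrees over $S$ yields $2e_{SS}+e_{SA}=\kappa|S|$, so
\[
e_{SS}\ \ge\ \tfrac12\bigl(\kappa|S|-(\kappa-1)|A|\bigr),\qquad |S|=n_x-1-|A|.
\]
Substituting and using self-sufficiency $n_x>\kappa$ together with the quantitative content of (i), I would show the right-hand side is strictly positive, so that $e_{SS}\ge 1$. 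The delicacy is that this bound weakens as $i$ accumulates same-type neighbours (large $|A|$): the raw hypothesis $|A|<n_x-1$ is not by itself enough, so the argument must exploit either a sharper bound on $|A|$ coming from the way the lemma is invoked, or --- when the type-$x$ population alone is too thin --- the other self-sufficient types, each of which contributes cycles among agents that are likewise non-neighbours of $i$. Pinning down this case analysis is the main obstacle.

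Granting an $S$--$S$ edge, I would finish by promoting it to a non-critical one. Because the distance in the conclusion is measured in all of $\mu$, it is enough that the edge lie on a cycle of $\mu$. I would pass to the $2$-core: iteratively delete vertices of degree $\le 1$ from $\mu$; every retained vertex has degree $\ge 2$, and a finite graph of minimum degree $\ge 2$ contains a cycle, whose edges are non-critical in $\mu$. To ensure such a cycle carries an $S$--$S$ edge, I would strengthen the previous count to $e_{SS}\ge|S|$ where available, so that the subgraph induced on $S$ already contains a cycle; a cycle edge $i'i''$ with $i',i''\in S$ then satisfies $i',i''\in N\setminus(\nu_i(\mu)\cup\{i\})$, $i'i''\in\mu$, and $p_{i'i''}(\mu\setminus\{i'i''\})<\infty$, as required. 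The non-criticality itself is routine once the edge is located; essentially all of the weight of the proof sits in the counting step that forces an intra-$S$ edge on a cycle.
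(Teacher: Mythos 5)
There is a genuine gap, and you flag it yourself: the counting step that is supposed to produce an edge with both endpoints in $S=N_x\setminus(\nu_i(\mu)\cup\{i\})$ does not go through. Your bound $e_{SS}\ge\tfrac12\bigl(\kappa|S|-(\kappa-1)|A|\bigr)$ is non-positive whenever $|A|$ is close to $\kappa$ and $|S|$ is small (e.g. $|A|=\kappa$ and $|S|=1$ are consistent with all the stated hypotheses once $n_x\ge\kappa+2$), so it does not yield $e_{SS}\ge1$, and the proposal offers no replacement argument --- only the hope that ``a sharper bound on $|A|$'' or the other self-sufficient types will rescue it. Since the entire conclusion rests on locating such an edge, this is not a presentational issue but the missing core of the proof. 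The paper does not count edges at all: it fixes a same-type non-neighbour $i'$ of $i$, uses (ii) and (iii) to extract a same-type neighbour $i''$ of $i'$ lying outside $\nu_i(\mu)$, and then argues by contradiction on that single edge.

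The second half also diverges in a way that matters. You need the located edge to lie on a cycle of $\mu$, and you propose to get this by forcing a cycle inside the subgraph induced on $S$ (via $e_{SS}\ge|S|$, ``where available'') --- a strictly stronger and equally unestablished requirement, since agents in $S$ may spend most of their $\kappa$ links on $A$, so the induced subgraph on $S$ need not have minimum degree $2$. The paper instead supposes the chosen edge $i'i''$ is a bridge and shows that, if every such candidate is a bridge, paths among the relevant type-$x$ agents are unique; a finite tree-like structure in which each of these agents has degree exactly $\kappa\ge2$ accumulates $\sum_{q=0}^{l}(\kappa-1)^{q}$ agents by depth $l$ and terminates in agents holding a single link, contradicting condition (ii). That is the same ``minimum degree $2$ forces a cycle'' fact you invoke, but applied to the right object --- the component of $\mu$ containing these agents, under the standing assumption that every candidate edge is critical --- rather than to the induced subgraph on $S$. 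As written, neither the existence of the $S$--$S$ edge nor its membership in a cycle is actually proved.
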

		\begin{proof}
			Suppose that for $i\in  N$ the conditions i)-iv)  are met but the lemma is not true.
			If $i' \in  N_x$ and $ii'\notin\mu$ then there must exist some $i''\in  N_x$ such that $i'i''\in  \mu$ and $i''\notin \nu_i(\mu)$ as $\min _{i'\in  N_x\backslash\nu_i(\mu)}k_{i'}(\mu)=\kappa$  and 
			$\max _{i'\in  N_x\backslash\nu_i(\mu)}|\{i''\in  \nu_{i'}(\mu): x_{i''}\ne x\}|=0$. If $p_{i'i''}(\mu\backslash\{i'i''\})<\in fty$ then the proof is terminated so we must assume $p_{i'i''}(\mu\backslash\{i'i''\})=\in fty$.

			As $p_{i'i''}(\mu\backslash\{i'i''\})=\in fty$ then there must exist two components in $\mu$,  $\mu',\mu''\subseteq \mu\backslash(\{\iota'_l\iota''_l\}$, where in each component $\mu'$ or $\mu''$ there are at least $\kappa+1$ agents of type $x$ (as for any $\iota\in (\nu_{i'}(\mu)\cup\nu_{i''}(\mu))$ it holds that $x_\iota=x$). Moreover, $i$ can at most be connected to one of $i',i''$ in $\mu\backslash(\{i'i''\}$ as otherwise $i',i''$ would be connected in $\mu\backslash\{i'i''\}$. Denote the component $\{\mu',\mu''\}$ where $i$ is part of as $\tilde{\mu}$ and define $\tilde{N}=\{\iota\in  N_x\backslash\nu_i(\mu): \exists \iota'\in  N: \iota\iota'\in \tilde{\mu}\}$.
			
			Let $\iota_0 \in  \arg\max _{\iota\in  i',i''}p_{\iota i}$ and iteratively $\iota_{l}\in  \nu_{\iota_{l-1}}(\mu), l\in  \mathbb{N}$. Moreover, there must be only a unique path in $\mu\backslash\{i'i''\}$ between any two agents $\iota,\iota'\in  \tilde{N}$ as otherwise $i\iota,i\iota'\notin\mu$ but $p_{\iota\iota'}(\tilde{\mu}\backslash\{i'i''\})<\in fty$ - in this case we could instead denote $i'=\iota$ and $i''=\iota'$. 
			However, the single unique path entails that at level $l$ or below there are $\sum_{q=0}^{l}(\kappa-1)^{q}$ agents; thus $n_x\ge  \sum_{q=0}^{l}(\kappa-1)^{q}$. Let $l$ be the minimal $q$ such that $\forall \in \iota\in  \tilde{N}: p_{i\iota}\le  q$; as $n_x$ is finite such a $q$ must exist. In addition, as there is a unique path between agents in $\mu$ then any agent $\iota\in \tilde{N}: p_{\iota \iota_0}=l$ has only one link, and thus its degree is less than $\kappa$ (as $\kappa\ge 2$). This violates the condition that all $i'\in  N$ where $x_{i'}=x$ has $k_{i'}=\kappa$.
		\end{proof}

	\noindent
	\textbf{Theorem} \ref{claim:suboptimal-sorting_hyperbolic_decay}:
	Suppose there is supermodularity, hyperbolic decay, a non-singular degree quota and type self-sufficiency then:
	\begin{enumerate}[(i)]
		\item $(\hat{M} \cap M ^{\max  U }_{\delta >\ubar{\delta} })=\emptyset$; 
		\item $\hat{M} \subseteq  M ^{p-stb }_{\delta \le \bar{\delta} }$;
		\item $\hat{M} =M ^{p-stb }_{\delta \in (0,\bar{\delta} )}$ if $|X |=2$ 
		\item $\hat{M} \ne\emptyset$ if for all $x\in  X:$ $(\kappa\cdot  n_x)\in 2\mathbb{N}$.
	\end{enumerate}
	where $\hat{M} =M ^{p-srt:conn }\cap M ^{no-slack }$ and threshold bounds $\ubar{\delta} ,\bar{\delta} $ are defined as:
	\begin{eqnarray*}	\ubar{\delta} &\le &\min _{x ,\tilde{x} \in X }\left(\tfrac{\hat{Z}_{x ,\tilde{x} }}{\hat{Z}_{x ,\tilde{x} }+\frac{1}{2}n_x n_{\tilde{x} }}\right),\qquad 
		\hat{Z}_{x ,\tilde{x} }=\tfrac{Z (x ,x )+Z (\tilde{x} ,\tilde{x} )}{2Z (x ,\tilde{x} )}-1
		\\
		\bar{\delta} &\ge &\min _{x ,\tilde{x} \in X }\left(
		\tfrac{\hat{Z}_{x ,\tilde{x} }}{\hat{Z}_{x ,\tilde{x} }+\max (n_x ,n_{\tilde{x} })-|n_{x }-n_{\tilde{x} }|\cdot \hat{z}_{x ,\tilde{x} }}\right),\qquad \hat{z}_{x ,\tilde{x} }=\frac{z(\arg\min _{x ,\tilde{x} }n_{x },\,\arg\max _{x ,\tilde{x} }n_{x })}{Z (x ,\tilde{x} )}.
	\end{eqnarray*}

		\smallskip

		\begin{proof}
			We begin showing properties (i) and (ii). Suppose $\mu\in M ^{p-srt:conn }\cap M ^{no-slack }$ - we will demonstrate there are thresholds on $\delta $ such that $\mu$ pairwise stability and efficiency. We're only interested in the minimal thresholds such that for all values of externalities below those then stability and efficiency holds. Thus it is sufficient to evaluate the deviations from the network where the net gains are highest. 
			
			The losses from breaking a link $ij\in \mu$ can be shown to have bounded from below such that: $\ge \delta \cdot (1-Z (x ,x ))$. Suppose that $n_x =\kappa+1,x \in X $ then $\{ij\in \mu: x_i =x , x_{i' }=x  \}$ is a clique (i.e. any $i ,i' $ of type $x $ are linked). This entails that $p_{i i' }(\mu \backslash\{i i' \})=2$ and thus $p_{i i' }(\mu \backslash\{i i' \})<\in fty$.
			Suppose instead that $n_x <\kappa+1,x \in X $ then by Lemma \ref{claim:sametype:noncritical:link} there exists some $i,j$, both of type $x $ such that $p_{i i' }(\mu \backslash\{i i' \})<\in fty$.
			Thus when evaluating losses at the threshold we can assume that when deleting some link $ij$ that $i,j$ are connected in $\mu \backslash\{i j \}$.
			Although the length of the shortest paths may increase, there will still be an indirect connection and therefore no loss of utility for anyone but the two agents who lose their link.  
			Therefore we assume throughout that when evaluating thresholds if $i i' $ is deleted in $M ^{p-srt:conn }\cap M ^{no-slack }$ then only agents $i ,i' $, who must be of same type, will each lose $(1-\delta )\cdot  z (x ,x )$ while no other agents of incur a loss.
			 			
			Suppose two agents $i ,j $ of distinct types respectively $x ,\tilde{x} $ deviate by forming a link and delete a link each from $\mu $. The total loss for $i $ and $j $ for deleting a link each is:
			\begin{eqnarray*}
			(1-\delta )\cdot [z (x ,x )+z (\tilde{x} ,\tilde{x} )]				=  (1-\delta )\cdot (\hat{Z}_{x ,\tilde{x} }+1)\cdot  Z (x ,\tilde{x} ).
			\end{eqnarray*}
			
			The benefits gained for agent $i $ for establishing a link to $j $ is $[1+(n_{\tilde{x} }-1)\cdot (1-\delta )]\cdot  z (x ,\tilde{x} )$.
			Thus the total benefits gained for $i $ and $j $  from pairwise deviation can be bounded as follows.
			\begin{eqnarray*}
				&  & [1+(n_x -1)\cdot \delta ]\cdot  z (x ,\tilde{x} )+[1+(n_{\tilde{x} }-1)\cdot \delta ]\cdot  z (\tilde{x} ,x ),\\
				& = & \left\langle 1+\left[\max (n_{x },n_{\tilde{x} })-|n_{x }-n_{\tilde{x} }|\cdot \hat{z}_{x ,\tilde{x} }-1\right]\cdot \delta\right\rangle \cdot  Z (x ,\tilde{x} ).
			\end{eqnarray*}
			
			where $\hat{z}_{x ,\tilde{x} }=\frac{z(\arg\min _{x ,\tilde{x} }n_{x },\,\arg\max _{x ,\tilde{x} }n_{x })}{Z (x ,\tilde{x} )}$.
						
			We can derive the threshold for pairwise stability:
			\begin{eqnarray}
			(1-\delta )\cdot (\hat{Z}_{x ,\tilde{x} }+1)\cdot  Z (x ,\tilde{x} ) & = & 
			\left\langle 1+\left[\max (n_{x },n_{\tilde{x} })-|n_{x }-n_{\tilde{x} }|\cdot \hat{z}_{x ,\tilde{x} }-1\right]\cdot \delta\right\rangle
			\cdot  Z (x ,\tilde{x} ),\nonumber
			\\	
			(1-\delta )\cdot (\hat{Z}_{x ,\tilde{x} }+1) & = & 	\left\langle 1+\left[\max (n_{x },n_{\tilde{x} })-|n_{x }-n_{\tilde{x} }|\cdot \hat{z}_{x ,\tilde{x} }-1\right]\cdot \delta\right\rangle
			,\nonumber
			\\
			\hat{Z}_{x ,\tilde{x} } & = & 	\left[\max (n_{x },n_{\tilde{x} })-|n_{x }-n_{\tilde{x} }|\cdot \hat{z}_{x ,\tilde{x} }+\hat{Z}_{x ,\tilde{x} }\right]\cdot \delta,\nonumber
			\\
			 \delta &=&\frac{\hat{Z}_{x ,\tilde{x} }}{\max (n_{x },n_{\tilde{x} })-|n_{x }-n_{\tilde{x} }|\cdot \hat{z}_{x ,\tilde{x} }+\hat{Z}_{x ,\tilde{x} }}.\label{eq:hyperbolic_pairwise_condition}
			\end{eqnarray}
			
			Thus we can establish a lower bound for $\bar{\delta} $ (i.e. the upper bound in $\delta $ for pairwise stability of $\mu $) by taking the minimum of left-hand-side in Equation \ref{eq:hyperbolic_pairwise_condition}; thus it follows that:
			$\bar{\delta} \ge \min _{x ,\tilde{x} \in X }\left(
			\frac{\hat{Z}_{x ,\tilde{x} }}{\max (n_x ,n_{\tilde{x} })+\hat{Z}_{x ,\tilde{x} }}\right)$.

			We can do a similar procedure for finding $\ubar{\delta} $ for when the aggregate gains of establishing links across types is zero. We will evaluate the bridged network $\tilde{\mu} =\mu \cup\{i j ,i' j' \} \backslash \{i i' ,j j' \}$ where $x_{i' }=x_{i }$ and $x_{j' }=x_{j }$. The total loss $U (\mu )-U (\mu  \backslash \{i i' ,j j' \})$ is equal to double that of $i ,j $ suffers, i.e. 	$2(1-\delta )\cdot \hat{Z}_{x ,\tilde{x} }\cdot  Z (x ,\tilde{x} )$.
			
			The gains from connecting are two links of value $Z (x ,\tilde{x} )$
			as well as $n_x \cdot n_x -2$ indirect connections (between all agents of type $x $ and $\tilde{x} $)
			of value $Z (x ,\tilde{x} )\cdot \delta $. It follows that the total gains in benefits are $Z (x ,\tilde{x} )\cdot [2+(n_x \cdot n_{\tilde{x} }-2)\cdot \delta ]$.
			The threshold $\underline{\delta }$ can be found from finding when
			gains equal losses:
			\begin{eqnarray}
			2(1-\delta )\cdot (\hat{Z}_{x ,\tilde{x} }+1)\cdot  Z (x ,\tilde{x} ) & = & 2Z (x ,\tilde{x} )+(n_x \cdot n_{\tilde{x} }-2)\cdot  Z (x ,\tilde{x} )\cdot \delta \nonumber,\\
			(1-\delta )\cdot (\hat{Z}_{x ,\tilde{x} }+1) & = & 1+(\tfrac{1}{2}n_x \cdot n_{\tilde{x} }-1)\cdot \delta \nonumber,\\
			\frac{\hat{Z}_{x ,\tilde{x} }}{\hat{Z}_{x ,\tilde{x} }+\frac{1}{2}n_x \cdot n_{\tilde{x} }} & = & \delta . \label{eq:hyperbolic_efficiency}
			\end{eqnarray}

			It follows that the threshold $\ubar{\delta} $ (which is such that every  $\delta $ above implies there exists an efficient deviation from $\mu $) must be at least the minimum of left-hand-side in Equation \ref{eq:hyperbolic_efficiency} over possible types, i.e. it must hold that $\ubar{\delta} \le \min _{x ,\tilde{x} \in X }\left(\frac{\hat{Z}_{x ,\tilde{x} }}{\hat{Z}_{x ,\tilde{x} }+\frac{1}{2}n_x \cdot n_x }\right)$. This terminates the proof for part (i) of the theorem.

We move on to property (iii) of the proof where we establish characterization of pairwise stable networks. We need to show that $\hat{M} =M ^{p-stb }_{\delta \le \bar{\delta} }$ if $|X |=2$. From property (ii) $\hat{M} \subseteq M ^{p-stb }_{\delta \le \bar{\delta} }$ thus it remains to show: that $M ^{p-stb }_{\delta \le \bar{\delta} }\subseteq \hat{M} $. As \(|X|=2\) it holds that \(X=\{\underline{x},\overline{x}\}\). Define \(\Upsilon\):
\[ \Upsilon = (1-\delta)\cdot  [z(\bar{x},\bar{x})+z(\ubar{x},\ubar{x})] - [1+(n_{\ubar{x}}-1)\cdot \delta]\cdot  z(\ubar{x},\bar{x}) - [1+(n_{\bar{x}}-1)\cdot \delta]\cdot  z(\bar{x},\ubar{x}).\]

As $\delta<\ubar{\delta}$ it follows from Equation \ref{eq:hyperbolic_pairwise_condition} that: 
\begin{eqnarray}\Upsilon>0.\label{eq:min:netgains}\end{eqnarray}

Suppose \(\mu\notin M^{p-sort}\). Define a sequence of agent pairs, \(i_0j_0,i_1j_1,...\) as follows. Let agents \(i_0,j_0\in  N\) be such that \(x_i\ne x_j\) and \(ij\in  \mu\); such $i_0,j_0$ must exist as $\mu\notin M^{p-sort}$. Without loss
of generality let \(x_{i_0}=x\) and \(x_{j_0}=\tilde{x}\) where \(x,\tilde{x}\in  X\), and assume that:
\begin{eqnarray}
-\tau_{i_0j_0} & > & (1-\delta)\cdot  z(\tilde{x},\tilde{x})-[1+(n_{\tilde{x}}-1)\cdot \delta]\cdot  z(\tilde{x},x).\label{eq:hyperbolic_pairwise_deviation_first}
\end{eqnarray}

The above inequality must hold for either type $x $ or $\tilde{x} $ as 
we substitute labels for \(i,j\) as well as \(x,\tilde{x}\) due to \(\Upsilon>0\).

Let \(l\in \mathbb{N}\). It is assumed that for any
\(q<l:x_{i_q}=x, x_{j_q}=\tilde{x}\) . Also assume an associated set collection of links,
\(\mu_{l-1}\subset \mu\), and let the set be defined as
\(\mu_{l-1}=\cup_{q=0}^{l-1}\{i_qj_q\}\) such that for each \(q<l\):
\(i_qi_{q-1}\notin\mu\) if \(q\) is odd and \(j_qj_{q-1}\notin\mu\) if
\(q\) is even. At step \(q\in \mathbb{N}\) let \(\iota_q=i_{q-1}\) if \(q\) is even else
denote \(\iota_q=j_{l-1}\). Also let
\(\eta_q\in \{i_{q-1},j_{q-1}\}: \eta_q\ne\iota_q\). This entails that
\(\iota_1=i_0\) and \(\eta_1=j_0\).

Define $x_q=x,\tilde{x}_q=\tilde{x}$ if $q$ is even else vice versa. Also let $N_q=\{\iota\in  N: x_\iota=x_q\}$.

Suppose that at every $q\in \mathbb{N}:q<l$ it holds that $\iota'_q\notin\nu_{\iota_q}(\mu),\,x_{\iota'_l}=x_l$ and let $\eta'_l\in \nu_{\iota'_l}(\mu)$. Finally also define at every \(q<l\) the move
\(\Delta\mu_q=\mu\cup\{\iota_q\iota'_q\}\backslash\{\iota_q\eta_q,\iota'_q\eta'_q\}\) and let:
\begin{eqnarray}
\Delta U_{q }&=&
u_{\iota_{q}}(\Delta\mu_{q})-u_{\iota_{q}}(\mu) + u_{\iota'_{q}}(\Delta\mu_{q})-u_{\iota'_{q}}(\mu)\label{eq:aggregate:utility:moves}
\\
\Delta \hat{U}_{q }&=&			
u_{i_q}(\Delta\mu_{q+\mathbf{1}_{q:even}})-u_{i_{q}}(\mu)+
u_{j_q}(\Delta\mu_{q+\mathbf{1}_{q:odd}}) -u_{j_{q}}(\mu)\label{eq:aggregate:utility:change}
\end{eqnarray}

Note that \(\Delta\mu_q=\{i_{q}i_{q-1}\}\cup\Delta\tilde{\mu}_q\) if
\(q\) is even and
\(\Delta\mu_q=\{j_{q}j_{q-1}\}\cup\Delta\tilde{\mu}_q\) if \(q\) is
even;
\(\Delta\tilde{\mu}_q=\mu\cup\backslash\{i_{q-1}j_{q-1},i_{q}j_{q}\}\).
By inserting \(i,j\) for \(\iota,\eta\) we yield the following
expression:
\begin{eqnarray}
\sum_{q=l'}^{l-1}\Delta U_{q }=\sum_{q=l'}^{l-2}\Delta \hat{U}_{q }+ u_{\iota'_{l-1}}(\Delta\mu_{l-1})-u_{\iota'_{l-1}}(\mu) + u_{\iota_{l'}}(\Delta\mu_{l'})-u_{\iota_{l'}}(\mu)\label{eq:aggrgate:transform}
\end{eqnarray}

Assume that for every $q\in \mathbb{N}$ where $q<l$: 
\begin{eqnarray}
|\{\iota\in  N: x_\iota=x_q\,\wedge\,p_{\iota\iota_q}(\mu)<\in fty\,\wedge\,p_{\iota\iota_q}(\mu\cup\{\iota_q\iota'_q\}\backslash\{\eta_q\iota_q\})=\in fty\}|&=&0
\label{eq:connected:new:partner}
\\
|\{\iota\in  N: x_\iota=x_q\,\wedge\,p_{\iota\iota'_q}(\mu)<\in fty\,\wedge\,p_{\iota\iota'_q}(\Delta\mu_q)=\in fty\}|&=&0\label{eq:connected:new:partnerspartner}
\end{eqnarray}

Suppose Equation \ref{eq:connected:new:partner} is satisfied. It follows that net gains of benefits for $\iota_q$ from deleting the link with $\eta_q$ while forming a link together with $\iota'_q$ can be bounded: the upper bound on losses is when a connection is lost to all agents of type $\tilde{x}_q$: $[1+(n_{q}-1)\cdot \delta]\cdot  z(x_q,x_q)$; the lower bound on gains is $(1-\delta)\cdot  z(x_q,\tilde{x}_q)$ as the distance between $\iota_q\iota'_q$ is shortened to 1.
\begin{eqnarray}
u_{\iota_q}(\mu\cup\{\iota_q\iota'_q\}\backslash\{\iota_q\eta_q\})-u_{\iota_q}(\mu) & \ge  & (1-\delta)\cdot  z(x_q,x_q)-[1+(n_{q}-1)\cdot \delta]\cdot  z(x_q,\tilde{x}_q)\label{eq:newpartner:bounded:self:single}
\end{eqnarray}

Suppose Equations \ref{eq:connected:new:partner} and \ref{eq:connected:new:partnerspartner} - we can demonstrate that Equation \ref{eq:connected:new:partnerspartner} also where we replace $\iota'_q$ with $\iota_q$. If $p_{\iota_q\iota'_q}(\mu)<\in fty$ then as it also holds that $p_{\iota_q\iota'_q}(\Delta\mu_q)<\in fty$ it follows that $p_{\iota'_q\iota''_q}(\mu)<\in fty$ and $p_{\iota'_q\iota''_q}(\Delta\mu_q)=\in fty$ which violates Equation \ref{eq:connected:new:partnerspartner}. Thus it must be that $p_{\iota_q\iota'_q}(\mu)=\in fty$. Suppose instead $p_{\iota_q\iota'_q}(\mu)=\in fty$. If $\exists \iota''_q\in  N: p_{\iota_q\iota''_q}(\mu\cup\{\iota_q\iota'_q\}\backslash\{\iota_q\eta_q\})<\in fty\,\wedge\,p_{\iota_q\iota''_q}(\Delta\mu_q)=\in fty$ then it must be that $p_{\eta'_q\iota''_q}(\mu)<\in fty$ and thus $p_{\iota'_q\iota''_q}(\mu)<\in fty$ which implies that $p_{\iota_q\iota''_q}(\mu)=\in fty$. However, this is a violation of $p_{\iota_q\iota''_q}(\mu)<\in fty$.
\begin{eqnarray}
|\{\iota\in  N: x_\iota=x_q\,\wedge\,p_{\iota\iota_q}(\mu)<\in fty\,\wedge\,p_{\iota\iota_q}(\Delta\mu_q)=\in fty\}|&=&0\label{eq:connected:new:partnerspartnerself}
\end{eqnarray}

Analogue to the derivation of Inequality \ref{eq:newpartner:bounded:self:single} the net gains are bounded when Equations \ref{eq:connected:new:partner} and \ref{eq:connected:new:partnerspartner} are satisfied:
\begin{eqnarray}
\min _{\iota\in \{\iota_q,\iota'_q\}}[u_{\iota}(\Delta\mu_q)-u_{\iota}(\mu)] & \ge  & (1-\delta)\cdot  z(x_q,x_q)-[1+(n_{q}-1)\cdot \delta]\cdot  z(x_q,\tilde{x}_q), \label{eq:newpartner:bounded:counter:sametype}
\end{eqnarray}

One implication of Inequalities \ref{eq:hyperbolic_pairwise_deviation_first} and \ref{eq:newpartner:bounded:counter:sametype} if $l>1$:
\begin{eqnarray}
u_{\iota_1}(\Delta\mu_1)-u_{\iota_1}(\mu)-\tau_{\iota_1\eta_1}&\ge &(1-\delta)\cdot  z(x,x)-[1+(n_{x}-1)\cdot \delta]\cdot  z(x,\tilde{x})-\tau_{i_0j_0}\nonumber\\
u_{\iota_1}(\Delta\mu_1)-u_{\iota_1}(\mu)-\tau_{\iota_1\eta_1}&\ge &\Upsilon\label{eq:newpartner:bounded:first}
\end{eqnarray}

Another implication of Inequality \ref{eq:newpartner:bounded:counter:sametype} is that:
\begin{eqnarray}
u_{i_q}(\Delta\mu_{q+\mathbf{1}_{odd}(q)})-u_{i_q}(\mu)+u_{j_q}(\Delta\mu_{q+\mathbf{1}_{even}(q)})-u_{j_q}(\mu) & \ge  & \Upsilon,\qquad \forall  q\in [[1,l-1]]\label{eq:newpartner:bounded:counter:difftype}
\end{eqnarray}

In order for $\Delta\mu_q$ not to be a profitable pairwise deviation it must hold that:
\begin{eqnarray*}
u_{\iota_q}(\mu)+u_{\iota'_q}(\mu)+\tau_{\iota'_q\eta'_q}+\tau_{\iota_q\eta_q} & \ge  & u_{\iota_q}(\Delta\mu_q)+u_{\iota'_q}(\Delta\mu_q)\\
\tau_{\iota'_q\eta'_q} & \ge  & \Delta U_{q } + \tau_{\eta_q\iota_q}
\end{eqnarray*}

We can rewrite the above inequality using that \(\iota'_{q-1}=\eta_{q}, \eta'_{q-1}=\iota_q\) and thus
\(\tau_{\iota'_{q-1}\eta'_{q-1}}=\tau_{\eta_q\iota_q}\). We also substitute in Equation \ref{eq:aggregate:utility:moves} and assume the above inequality holds for any $q <l $:
\begin{eqnarray}			
\tau_{\iota'_{l -1}\eta'_{l -1}} & \ge  & \Delta U_{l -1} + \tau_{\iota'_{l -2}\eta'_{l -2}}\nonumber
\\ 
\tau_{\iota'_{l -1}\eta'_{l -1}} & \ge  & \sum_{q =l '}^{l -1}\Delta U_{q } + \tau_{\iota'_{l '-1}\eta'_{l '-1}}\label{eq:pairwise:sorted:nodeviation}
\end{eqnarray}

As	$\tau_{\eta_{l }\iota_{l }}=\tau_{\iota'_{l -1}\eta'_{l -1}}$ and $-\tau_{\iota_{l }\eta_{l }}=\tau_{\eta_{l }\iota_{l }}$ it follows that using Equation \ref{eq:aggrgate:transform}:
\begin{eqnarray}			
-\tau_{\iota_{l }\eta_{l }} & \ge  & \sum_{q =l '}^{l -1}\Delta U_{q } + \tau_{\iota'_{l '-1}\eta'_{l '-1}}\nonumber\\
&=& \sum_{q =1}^{l -2}\Delta \hat{U}_{q } + u_{\iota'_{l-1}}(\Delta\mu_{l-1})-u_{\iota'_{l-1}}(\mu) + u_{\iota_{1}}(\Delta\mu_{1})-u_{\iota_{1}}(\mu) + \tau_{\iota'_{0}\eta'_{0}}\nonumber\\
&=& \sum_{q =1}^{l -2}\Delta \hat{U}_{q } + u_{\eta_{l}}(\Delta\mu_{l-1})-u_{\eta_{l}}(\mu) + u_{\iota_{1}}(\Delta\mu_{1})-u_{\iota_{1}}(\mu) - \tau_{\iota_{1}\eta_{1}}
\label{eq:bound:transfers:aggregate}
\end{eqnarray}

Define the set of partners for $\iota_l$:
\begin{eqnarray}
\hat{N}_l^* (\iota_l, \mu_{l-1}) &=& \{\iota\in  N\backslash\{\iota_l\}: x_{\iota}=x_{\iota_l},\, p_{\iota\iota_l}(\mu)<\in fty,\, p_{\iota\iota_l}(\mu\backslash\{\iota_l\eta_l\})=\in fty\}\nonumber\\			
\hat{N}_l^{**}(\iota_l, \mu_{l-1})&=&\{\iota\in  N\backslash\{\iota_l\}: x_{\iota}=x_{\iota_l},\, \iota\iota_l\notin\mu\}\nonumber\\			
\hat{N}_l(\iota_l, \mu_{l-1})&=&\begin{cases} \hat{N}_l^*(\iota_l, \mu_{l-1}) \mbox{ if } \hat{N}_l^*(\iota_l, \mu_{l-1})\ne\emptyset,\\ N_l^{**}(\iota_l, \mu_{l-1}) \mbox{ else. } \end{cases}
\end{eqnarray}

A property of \(\hat{N}_l\) is that \(\hat{N}_l\ne\emptyset\); this follows as
\(\min _{\hat{x}\in  X}n_{\hat{x}}\ge \kappa+1\). Let $\iota'_l\in  \hat{N}_l$ which implies that  Equation \ref{eq:connected:new:partner} holds.

Suppose that \(k_{\iota'_l}(\mu)<\kappa\). 
As Equation \ref{eq:connected:new:partner} holds it follows that \[u_{\iota'_q}(\mu\cup\{\iota_q\iota'_q\}\backslash\{\iota_q\eta_q\})-u_{\iota'_q}(\mu)\ge (1-\delta)\cdot  z(x_q,x_q),\] and thus $u_{\iota'_q}(\mu\cup\{\iota_q\iota'_q\}\backslash\{\iota_q\eta_q\})>0$. 

We can also derive utility bounds using Inequality \ref{eq:bound:transfers:aggregate} along with Inequalities \ref{eq:min:netgains}, \ref{eq:newpartner:bounded:first}:
\begin{eqnarray}
&&u_{\iota_l}(\mu\cup\{\iota_l\iota'_l\}\backslash\{\iota_l\eta_l\})-u_{\iota_l}(\mu)-\tau_{\iota_l\eta_l}\nonumber,\\
&\ge & u_{\iota_l}(\mu\cup\{\iota_l\iota'_l\}\backslash\{\iota_l\eta_l\})-u_{\iota_l}(\mu)+ u_{\eta_{l}}(\Delta\mu_{l-1})-u_{\eta_{l}}(\mu) +  \sum_{q =1}^{l -2}\Delta \hat{U}_{q } + u_{\iota_{1}}(\Delta\mu_{1})-u_{\iota_{1}}(\mu) - \tau_{\iota_{1}\eta_{1}}\nonumber,\\
&\ge & l\cdot \Upsilon\nonumber,\\&>&0,\nonumber
\end{eqnarray}
thus $\iota_l,\iota'_l$ can profitably from deviate pairwise. Thus it must be that \(k_{\iota'_l}(\mu)=\kappa\). 

Suppose there exists \(\iota'_l\in  N_l\backslash\nu_i(\mu),\iota''_l\in  N_l\backslash\{\iota_l,\iota'_l	\}\) such that
\(\iota'_l\iota''_l\in \mu\), \(p_{\iota'_l\iota''_l}(\mu\backslash\{\iota'_l\iota''_l\})<\in fty\) and \(\tau_{\iota'_l\iota''_l}\le 0\).
This entails that \(u_{\iota'_l}(\Delta\hat{\mu}_l)-u_{\iota'_l}(\mu)\ge 0\)
where
\(\Delta\mu_l=\mu\cup\{\iota_l\iota'_l\}\backslash\{\iota_l\eta_l,\iota'_l\iota''_l\}\).
This follows from
\(u_{\iota'_l}(\Delta\mu_l)-u_{\iota'_l}(\mu)=u_{\iota'_l}(\Delta\mu_l)-u_{\iota'_l}(\mu\cap \Delta\mu_l)-[u_{\iota'_l}(\mu\cap \Delta\mu_l)-u_{\iota'_l}(\mu)]\)
and
\(u_{\iota'_l}(\Delta\mu_l)-u_{\iota'_l}(\mu\cap \Delta\mu_l)\ge 1-z(x,x)\)
and \(u_{\iota'_l}(\mu\cap \Delta\mu_l)-u_{\iota'_l}(\mu)=1-z(x,x)\). As \(\tau_{\iota'_l\iota''_l}\le  0\) it follows that
that utility for $\iota'_l$ is:
\begin{eqnarray}
u_{\iota'_l}(\Delta\hat{\mu}_l)-u_{\iota'_l}(\mu)-\tau_{\iota'_l\iota''_l}\ge 0.\nonumber
\end{eqnarray}

And utility for $\iota_l$ can bounded be as follows using Inequality \ref{eq:newpartner:bounded:self:single} for $u_{\iota_l}(\Delta\hat{\mu}_l)-u_{\iota_l}(\mu)$ as Equation \ref{eq:connected:new:partner} holds :
\begin{eqnarray}			
&& u_{\iota_l}(\Delta\hat{\mu}_l)-u_{\iota_l}(\mu)-\tau_{\iota_l\eta_l} \nonumber\\ 
& = & u_{\iota_l}(\Delta\hat{\mu}_l)-u_{\iota_l}(\mu)+\tau_{\eta_l\iota_l} \nonumber\\
&\ge & \sum_{q=1}^{l-1}\Delta U_q + u_{\iota_l}(\Delta\hat{\mu}_l)-u_{\iota_l}(\mu) + \tau_{j_0i_0}\nonumber\\
&=& \sum_{q=1}^{l-2}\Delta \hat{U}_q + u_{\iota_l}(\Delta\hat{\mu}_l)-u_{\iota_l}(\mu) + u_{\eta_l}(\Delta\mu_{l-1})-u_{\eta_l}(\mu) + u_{i_0}(\Delta\mu_{1})-u_{i_0}(\mu) -\tau_{i_0j_0}\nonumber\\
&\ge & l\cdot \Upsilon \nonumber\\
&>&0
\end{eqnarray}

The above inequalities entails that $\iota_l,\iota'_l$ can deviate profitably pairwise; this is a violation of pairwise stability and thus cannot be true. 
Thus there exists no \(\iota'_l\iota''_l\in \mu\) such that \(\iota'_l\in  N_l\backslash\nu_i(\mu),\iota''_l\in  N_l\backslash\{\iota_l,\iota'_l	\}\) as well as
\(p_{\iota'_l\iota''_l}(\mu\backslash\{\iota'_l\iota''_l\})<\in fty\) and \(\tau_{\iota'_l\iota''_l}\le 0\).

Suppose that
\(\forall  \iota'_l\in  N_l: \ne xists\eta'_l\in \nu_{\iota'_l}(\mu\backslash\mu_{l-1}):x_{\eta'_l}\ne x_l\).
This entails that
\(\forall  \iota'_l\in  N_l: \ne xists\eta'_l\in \nu_{\iota'_l}(\mu):x_{\eta'_l}\ne x_l\)
as \(k_{\iota'_l}(\mu\backslash\mu_{l-1})=k_{\iota'_l}(\mu)\). By Lemma
\ref{claim:sametype:noncritical:link} it follows there exists
\(\iota'_l,\iota''_l\in  N_l\backslash\nu_i(\mu)\) such that
\(p_{\iota'_l\iota''_l}(\mu\backslash\{\iota'_l\iota''_l\})<\in fty\),
\(\iota'_l\iota''_l\in \mu\) and \(\tau_{\iota'_l\iota''_l}\le 0\) which by the arguments above cannot be true. Therefore there has to exist some 
$\iota'_l\in  N_l$ for which there is an agent $\eta'_l\in \nu_{\iota'_l}(\mu\backslash\mu_{l-1})$ where it holds that $x_{\eta'_l}\ne x_l$.

A duplicate occurs if
\(i_{l-1},j_{l-1}\in \mu_{l-2}\). That is for some \(l'<l\) it holds that
either \(\iota_{l},\eta_{l}=\iota_{l'},\eta_{l'}\) if \(l-l'\) is even
or \(\iota_{l},\eta_{l}=\eta_{l'},\iota_{l'}\) if \(l-l'\) is odd.

If \(l-l'\) is odd, then
\(\tau_{\iota'_{l'-1}\eta'_{l'-1}}=-\tau_{\iota'_{l-1}\eta'_{l-1}}\) and
therefore we can reduce the Inequality \ref{eq:pairwise:sorted:nodeviation}:
\begin{eqnarray*}			
0 & \ge  & \sum_{q=l'}^{l-1}[u_{\iota_{q}}(\Delta\mu_{q}) - u_{\iota_{q}}(\mu)+u_{\iota'_{q}}(\Delta\mu_{q}) - u_{\iota'_{q}}(\mu)] + 2\tau_{\iota'_{l'-1}\eta'_{l'-1}}
\\& = & \sum_{q =l '}^{l -2}\Delta \hat{U}_{q } + u_{\iota'_{l-1}}(\Delta\mu_{l-1})-u_{\iota'_{l-1}}(\mu) + u_{\iota_{l'}}(\Delta\mu_{l'})-u_{\iota_{l'}}(\mu)+2\tau_{\iota'_{l'-1}\eta'_{l'-1}}
\\& = & \sum_{q =l '}^{l -2}\Delta \hat{U}_{q } + 2\cdot  \left\langle u_{\eta'_{l'-1}}(\Delta\mu_{l'})-u_{\eta'_{l'-1}}(\mu) + \tau_{\iota'_{l'-1}\eta'_{l'-1}}\right\rangle
\\& = & \sum_{q =l '}^{l -2}\Delta \hat{U}_{q } + 2\cdot  \left\langle u_{\eta'_{l'-1}}(\Delta\mu_{l'})-u_{\eta'_{l'-1}}(\mu) + \sum_{q=1}^{l'-1}\Delta U_{q } + \tau_{\iota'_{0}\eta'_{0}}\right\rangle
\\& = & \sum_{q =l '}^{l -2}\Delta \hat{U}_{q } + 2 \cdot  \sum_{q =1}^{l '-1}\Delta \hat{U}_{q }+2\cdot \left[u_{\iota_{1}}(\Delta\mu_{1})-u_{\iota_{1}}(\mu)- \tau_{\iota_{1}\eta_{1}}\right]
\\& \ge  & (l+l')\cdot \Upsilon\\&>&0,
\end{eqnarray*}

thus there must be a feasible pairwise deviation for $\iota_q,\iota'_q$ where $q\in [[1,l-1]]$.

If \(l-l'\) is even then
\(\tau_{\iota'_{l-1}\eta'_{l-1}}=\tau_{\iota'_{l'-1}\eta_{l'-1}}\); thus
Inequality \ref{eq:pairwise:sorted:nodeviation} for no pairwise deviation becomes: $0\ge 	\sum_{q=l'}^{l-1}\Delta U_{q}$. This can in turn be rewritten as follows:
\begin{eqnarray*}
0&\ge &	\sum_{q=l'}^{l-2}\Delta \hat{U}_{q}+ u_{\iota'_{l-1}}(\Delta\mu_{l-1})-u_{\iota'_{l-1}}(\mu) + u_{\iota_{l'}}(\Delta\mu_{l'})-u_{\iota_{l'}}(\mu)
\end{eqnarray*}

Using that \(\iota_{q}=\eta'_{q-1}\) and \(\eta'_{l'-1}=\eta'_{l-1}\) we get:
$0\ge \sum_{q=l'}^{l-1}\Delta \hat{U}_{q}.$ Recall that for all \(q\in \mathbb{N}: q<l\) it holds that
\(\Delta \hat{U}_{q}\ge \Upsilon\) where \(\Upsilon>0\). Thus there must be a feasible pairwise deviation.

Due to \(\iota'_l,\eta'_l=\eta_{l+1},\iota_{l+1}\) it follows that it
cannot be that \(\iota'_l\eta'_l\in \mu_{l-1}\) as otherwise
\(i_lj_l\in \mu_{l-1}\). This entails
\(\ne xists \iota'_l\in  N_l: \exists \eta'_l\in \nu_{\iota'_l}(\mu_{l-1})\).
Therefore we can assume
\(\forall  \iota'_l\in  N_l: \ne xists\eta'_l\in \nu_{\iota'_l}(\mu_{l-1})\)
and thus
\(\forall  \iota'_l\in  N_l: k_{\iota'_l}(\mu\backslash\mu_{l-1})=k_{\iota'_l}(\mu)\).

Suppose that Equation \ref{eq:connected:new:partnerspartner} is violated for any $\iota'_l\in  \hat{N}_l$. This is equivalent to it holds for any $\iota'_l\in  \hat{N}_l$ where $\eta'_l\in \nu_{\iota'}(\mu)$ that there is some other $\iota''_l\in \hat{N}_l$ such that $p_{\iota'_l\iota''_l}(\Delta\mu_l)=\in fty$. Let $\iota^{(1)}_l=\iota'_l$. As Equation \ref{eq:connected:new:partnerspartner} must hold for any $\iota'_l\in  \hat{N}_l$  we can reproduce the argument iteratively and thus for $\iota^{(q)}_l\in  \hat{N}_l,q\in \mathbb{N}$ there is some $\eta^{(q)}_l\in \nu_{\iota^{(q)}_l}(\mu)$ such that for some $\iota^{(q+1)}_l\in  \hat{N}_l\backslash\{\iota^{(1)}_l,..,\iota^{(q)}_l\}$ it holds that $p_{\iota^{(1)}_l\iota^{(q+1)}_l}(\Delta\mu_l)=\in fty$. However, as $n<\in fty$ it follows that there for some $q\in \mathbb{N}$ that $N_l\backslash\{\iota^{(1)}_l,..,\iota^{(q)}_l\}=\emptyset$. Thus let instead $\iota'_l=\iota^{(q)}_l$; for any $\eta'_l\in \nu_{\iota'_l}(\mu)$ there is no $\iota''_l\in  N_l$ such that $p_{\iota'_l\iota''_l}(\mu)=\in fty$. This contradicts that Equation \ref{eq:connected:new:partnerspartner} is violated for agent $\iota'_l=\iota^{(q)}_l$.

Suppose \(\mu\notin M^{no-surpl.}\). This would entail that
\(\exists  i\in  N: k_i(\mu)<\kappa\). As \(n_x>\kappa\) there has to
\(\exists  i'\in  N: x_{i'}=x_i, ii'\notin\mu\). Suppose that
\(k_{i'}<\kappa\) then
\(\sum_{\iota\in \{i,i'\}}[u_\iota(\mu\cup\{ii'\})-u_\iota(\mu)]>0\) and
thus \(ii'\) can be formed profitably pairwise. Moreover, as
\(k_{i'}(\mu)=\kappa\) it follows that
\(\exists  i''\in \nu_{i'}: ii''\notin\mu, x_{i''}=x_i\). By Lemma
\ref{claim:sametype:noncritical:link} it follows there exists
\(\iota,\iota'\in  \tilde{N}\backslash\nu_i(\mu)\) such that
\(p_{\iota\iota'}(\mu\backslash\{\iota\iota'\})<\in fty\),
\(\iota\iota'\in \mu\) and \(\tau_{\iota\iota'}\le 0\). This entails that
\(u_\iota(\mu)-u_\iota(\mu\backslash\{\iota\iota'\})+\tau_{\iota\iota'}\le  (1-\delta)z(x,x)\).
Moreover, as\\
\(\sum_{j\in \{i,\iota\}}[u_j(\mu\cup\{i\iota\}\backslash\{\iota\iota'\})-u_j(\mu\backslash\{\iota\iota'\})]\ge  (1-\delta)\cdot  Z(x,x)\)
it holds that:

\[\sum_{j\in \{i,\iota\}}[u_j(\mu\cup\{i\iota\}\backslash\{\iota\iota'\})-u_j(\mu)]-\tau_{\iota\iota'}\ge  (1-\delta)\cdot  z(x,x)\].

Thus \(i,\iota\) can deviate profitably pairwise which contradicts pairwise Nash
stability. Therefore it must be that \(\mu\in  M^{no-surpl.}\)

Suppose \(\mu\notin M^{p-sort+conn}\). As
\(\mu \in  M^{p-sort}\cap  M^{no-surpl.}\) there exist
\(i,i',j,j'\in  N: x_i=x_{i'}=x_j=x_{j'}\) and \(ij,i'j'\in \mu\) and
\(p_{ii'}(\mu)=\in fty\). Without loss of generality we assume that
\(\tau_{ij},\tau_{i'j'}\le 0\) (otherwise we could simply switch
identities some \(i\)'s and \(j\)'s). This entails:
\[\min _{\iota\in \{i,i'\}}[u_\iota(\mu\backslash\{ij,i'j'\})-u_\iota(\mu)]+\tau_{ij}+\tau_{i'j'}\le 2(1-\delta)\cdot  z(x,x)\]

Also we have that:
\[\min _{\iota\in \{i,i'\}}[u_\iota(\mu\cup\{ii'\}\backslash\{ij,i'j'\})-u_\iota(\mu\backslash\{ij,i'j'\})]\ge (\kappa+1)\cdot (1-\delta)\cdot  z(x,x)\]

This entails that
\(\sum_{\iota\in \{i,i'\}}[u_\iota(\mu\cup\{ii'\}\backslash\{ij,i'j'\})-u_\iota(\mu)]-\tau_{ij}-\tau_{i'j'}\ge  \kappa\cdot (1-\delta)\cdot  Z(x,x)\);
thus \(i,i'\) can deviate profitably. Thus we have shown that $M ^{p-stb }_{\delta \le \bar{\delta} }\subseteq \hat{M} $ which terminates the proof of property (iii).

Finally property (iv) follows as Fact \ref{claim:construct:network:degreequota:size} can be applied to the subset of agents associated with each type as $\forall  x\in  X: n_x>\kappa$ and $\kappa\cdot  n_x\in 2\mathbb{N}$.
\end{proof}

	\noindent
	\textbf{Proposition \ref{claim:policy_accomodate_oversorting}}:
		Suppose that conditions for suboptimal sorting from Theorem \ref{claim:suboptimal-sorting_hyperbolic_decay} are valid and there are two types then a policy maker can implement a welfare improving network when $\delta \in (\ubar{\delta} ,\bar{\delta} )$ from a sorted network $\mu \in \hat{M} $.

	\begin{proof}
		Let $\mu \in \hat{M} $ and $\delta \in (\ubar{\delta} ,\bar{\delta} )$. By construction there exists a network $\tilde{\mu} $ which has higher aggregate utility. Let the two pairs of agents $i i' ,j j' $ be agents such that $\tilde{\mu} =\mu \,\cup\,\{i j ,i' j' \}\backslash\{i i' ,j j' \}$ and $x_{i }=x_{i' }=x $ and $x_{j }=x_{j' }=x $. 
		Specify a link-contingent contract to $i ,j $ where $\hat{\mu} =\mu \cup\{i j \}\backslash\{i i' ,j j' \}$ such that:
		
		\small			
		\begin{eqnarray}
		\forall \iota \iota' \in \{i j , i' j'  \}: \qquad \,\,\, \mathcal{C}_{\iota \iota' }+\mathcal{C}_{\iota' \iota }&\in & (\tfrac{1}{2}[Z (x ,x )+Z (\tilde{x} ,\tilde{x} )-2Z (x ,\tilde{x} )],\,\,\,\tfrac{1}{2}[U (\tilde{\mu} )-U (\mu )])\label{eq:policy_value_deviators},\\
		\forall \iota \iota' \notin\{i j ,j i , i' j' ,j' i'  \}: \quad\,\,\,\,\,\, \mathcal{C}_{\iota \iota' }&=& 0. \label{eq:policy_value_others}
		\end{eqnarray}
		\normalsize
		
		By Theorem \ref{claim:suboptimal-sorting_hyperbolic_decay} we know that $\mu $ is pairwise stable. 
		Pairwise stability implies that  $\tfrac{1}{2}[Z (x ,x )+Z (\tilde{x} ,\tilde{x} )-2Z (x ,\tilde{x} )]>b_i (\mu )-b_i (\hat{\mu} )+b_{j }(\mu )-b_{j }(\hat{\mu} )$ as deviation is not profitable. Using this fact together with Inequality \ref{eq:policy_value_deviators} it follows that:
		 \[\mathcal{C}_{i j }+\mathcal{C}_{j i } > b_i (\mu )-b_i (\hat{\mu} )+b_{j }(\mu )-b_{j }(\hat{\mu} ).\] 
		 
		 The above inequality entails agents $i ,j $ are a blocking coalition that can gain by deviating to $\hat{\mu} $; this blocking move is also the only profitable move for $i ,j $ due to pairwise stability of $\mu $ and Equation \ref{eq:policy_value_others}.

		In network $\hat{\mu} $ agents $i' ,j' $ have an incentive to form a link with one another as both have surplus link capacity (i.e. degree below the quota) and forming a link is profitable from Inequality \ref{eq:policy_value_deviators}. Moreover, we show in the following that this move is the one that ensures the highest aggregate net benefits to $i' ,j' $. 
		
		We begin with showing that linking across types to other agents of type $x ,\tilde{x} $ is not profitable. Suppose $i' $ links across types to  another agent $j'' \in \{\iota \ne j' :x_{\iota }=x_{j' }\}$.
		First, note the pairwise deviation from $\mu $ to form $i' j'' $ is unprofitable (due to pairwise stability), thus it less profitable than forming $i' j' $ from $\mu $ (which is profitable by Inequality \ref{eq:policy_value_deviators}). 
		Second, the net-increase in value of the pairwise deviation to form $i' j' $ over $i' j'' $ increases from $\mu $ to $\hat{\mu} $ - this is true as $j' $ loses the link with $i $ from $\mu $ while $j'' $ has an unchanged number - thus $j' $ will have a weakly lower opportunity cost of deleting links in $\hat{\mu} $. The same argument can be applied to $j' $ for $i'' \in \{\iota \ne i' :x_{\iota }=x_{i' }\}$.
		
		We turn to showing that linking to other agents of same type (staying sorted) is not more profitable as well. Suppose $i' $ and $j' $ link to same types as themselves respectively, i.e. $i'' \in \{\iota \ne i' :x_{\iota }=x_{i' }\}$ and $j'' \in \{\iota \ne j' :x_{\iota }=x_{j' }\}$. Suppose $i i'' \in \mu $ then no feasible pairwise moves to same type can exist in $\hat{\mu} $ as the move can only involve deleting links; same is true if $j j'' \in \mu $. 
		Thus instead we use $i i'' ,j j'' \notin\mu $. 
		It must be that any pairwise deviation forming either $i i'' $ or $j j'' $ from $\mu $ is unprofitable (as $\mu $ is pairwise stable);
		this implies that for any $\iota \in \nu_{i'' }(\hat{\mu} )$ and $\iota' \in \nu_{j'' }(\hat{\mu} )$ it holds that:
		\begin{eqnarray}
		b_{i' }(\hat{\mu} \cup\{i' i'' \}\backslash\{i'' \iota \})-b_{i' }(\hat{\mu} )+b_{i'' }(\hat{\mu} \cup\{i' i'' \}\backslash\{i'' \iota \})-b_{i'' }(\hat{\mu} )-\tau_{i'' \iota }&\le & z (x ,x )\label{eq:other_deviation_i},\\
		b_{j' }(\hat{\mu} \cup\{j' j'' \}\backslash\{j'' \iota' \})-b_{j' }(\hat{\mu} )+b_{j'' }(\hat{\mu} \cup\{j' j'' \}\backslash\{j'' \iota' \})-b_{j'' }(\hat{\mu} )-\tau_{j'' \iota' }&\le & z (\tilde{x} ,\tilde{x} ).\label{eq:other_deviation_j}
		\end{eqnarray}
		
		As $b_{i' }(\tilde{\mu} )-b_{i' }(\hat{\mu} )+b_{j' }(\tilde{\mu} )-b_{j' }(\hat{\mu} )=z (x ,\tilde{x} )+z (\tilde{x} ,x )$ it follows that 
		\[b_{i' }(\tilde{\mu} )-b_{i' }(\hat{\mu} )+b_{j' }(\tilde{\mu} )-b_{j' }(\hat{\mu} )+\mathcal{C}_{i' j' }+\mathcal{C}_{j' i' }>z (x ,x )+z (\tilde{x} ,\tilde{x} ).\]

		The above inequality implies together with Inequalities \ref{eq:other_deviation_i} and \ref{eq:other_deviation_j} that the total gains for $i' $ and $j' $ exceeds the total value that could be generated from alternative deviations. Thus there are two pairwise moves from $\mu $ to $\hat{\mu} $ and from $\hat{\mu} $ to $\tilde{\mu} $ which both provide strictly higher utility to the deviating agents. 
		
		Pairwise stability follows from three arguments. First, all deviations among agents where only links in $\tilde{\mu} \cap \mu $ are deleted will provide at most the same value in $\tilde{\mu} $ that the deviations did in $\mu $ - this follows as these agents all have the same links and in $\tilde{\mu} $ all agents are connected in $\tilde{\mu} $ and thus only direct links matter. This upper limit too gains from deviations implies none of these moves can be profitable as they were unprofitable form $\mu $. Second, deviations that involve deletion of links in $\tilde{\mu} \backslash\mu $ are shown above to provide strictly higher value than any other deviations - thus deviating from $\tilde{\mu} $ must also provide strictly lower value. 
	\end{proof}

	\subsection{Sorting: externalities - infinite population}
	\noindent \textbf{Proposition \ref{claim:assort:ext:asymptoticperfectPAM}:} If there is supermodularity, a degree quota and constant decay with asymptotic independence then there is asymptotic perfect sorting for strongly stable networks.	
	\begin{proof}	
		
		Under asymptotic independence it follows that average per agent utility for type $x $ under asymptotic perfect sorting converges to (using a geometric series):
		\[
		\frac{\left(\kappa-1\right)\delta}{1-\left(\kappa-1\right)\delta}z(x,x)
		\]

		Let $\omega_{x \tilde{x} }=\kappa \cdot \mathbb{E}[\delta ^{p_{i j }} | x_i =x ,x_j =\tilde{x} ]$.
		Suppose that for two types, $x ,\tilde{x} $ there is not perfect sorting, and in particular there is some mixing between them, i.e. $\omega_{x \tilde{x} }>0$; the average per agent utility is:
		
		\[
		\left[\frac{\left(\kappa-1\right)\delta}{1-\left(\kappa-1\right)\delta}-\omega_{x}\right]\cdot  z(x,x)+\omega_{x}\cdot  z(x,\tilde{x}).
		\]

		Each agent will almost surely have $\kappa $ links as it is assumed that each link adds positive value and there are asymptotic infinite agents (only a finite number can then not fulfill the degree quota).
		
		As we have a finite set of types we can assume then for large populations there is a subset of types, $\hat{X} \subseteq  X $, where for every type $x \in \hat{X} $ it holds that there is an asymptotic strictly positive share of the total number of agents of that type, i.e., $\lim_{n \rightarrow\in fty}(|\{i \in N_n \}_{x_{i }=x }|/n )>0.$ If there is only one such type, i.e. $|\hat{X} |=1$, then asymptotic perfect sorting follows by assumption as the asymptotic number of links is $\kappa $.

		For any two types $x ,\tilde{x} \in \hat{X} $ which are mixing their average utility is:
		\begin{eqnarray*}
			\frac{\left(\kappa -1\right)\delta }{1-\left(\kappa -1\right)\delta }\left[\frac{n_{x }\cdot z (x ,x )+n_{\tilde{x} }\cdot z (\tilde{x} ,\tilde{x} )}{n_{x }+n_{\tilde{x} }}\right]-\frac{1}{2}\cdot  \left[\frac{n_{x }\cdot  \omega_{x \tilde{x} }}{n_{x }+n_{\tilde{x} }}\right]\cdot [Z (x ,x )+Z(\tilde{x} ,\tilde{x} )-2Z(x ,\tilde{x} )].
		\end{eqnarray*}
		As there is supermodularity it follows that $Z (x ,x )+Z(\tilde{x} ,\tilde{x} )-2Z(x ,\tilde{x} )>0$ and thus mixing must decrease utility. The same argument can be applied by mixing between multiple types. 
	\end{proof}

	\subsection{Network structure}\label{app:network:structure}

	\noindent
	\textbf{Proposition  \ref{claim:network:monotonicdegree:talent}:} Suppose there is monotonicity and no externalities in link value then the set of pairwise stable networks has degree monotonicity.
	
	\begin{proof}	
		Suppose the claim is false; that is, for some pairwise stable network $\mu \in M ^{p-stb }$ it holds for two agents $i $ and $j $ that $x_{i }>x_{j }$ but  $k_{i }(\mu )<k_{j }(\mu )$. The condition that $k_{i }(\mu )<k_{j }(\mu )$ entails there is another agent who is in $\nu_{j }(\mu )$ but not in $\nu_{i }(\mu )$ which entails that $\nu_{j }(\mu )\backslash(\nu_{i }(\mu )\cup\{i \})\ne\emptyset$. 
		
		Let $j' \in \nu_{j }(\mu )\backslash(\nu_{i }(\mu )\cup\{i \})$. From monotonicity of $Z $ it holds that  $Z_{i j' }>Z_{j j' }$ as $x_{i }>x_{j }$. 
		From the cost technologies (either convex or a degree quota) it must be that
		\[c_{i }(\mu \cup\{i j' \})-c_{i }(\mu )\le c_{j }(\mu )-c_{j }(\mu \backslash\{j j' \}),\] as $k_{i }(\mu )<k_{j }(\mu )$.
		Combining the restriction on benefits it follows that the value created by forming ${i }{j' }$ and deleting ${j }{j' }$ can be restricted:
		\[Z_{i j' }-
		(c_{i }(\mu \cup\{i j' \})-c_{i }(\mu ))
		>Z_{j j' }-
		(c_{j }(\mu )-c_{j }(\mu  \backslash \{j j' \})).\]
		
		As the move to $(\mu \cup{i }{j' })\backslash \{j j' \}$ is feasible (and respects the degree quota if there is one), it follows that strong stability is violated as it implies that $\mu $ is not efficient. From Lemma \ref{claim:auxiliary:equivalence:strong:pairwisetype} it follows that pairwise stability is also violated if the claim is false.\end{proof}

	\noindent
	\textbf{Proposition  \ref{claim:network:elementwise:monotonicdegree}:} If there are supermodularity, monotonicity in link value as well as complete heterogeneity but no externalities then every pairwise stable network has sorting in degree.
	
	\begin{proof}
		As there are no externalities, supermodularity and monotonicity in link value then Proposition  \ref{claim:assort:elementwise:monotonic} and \ref{claim:network:monotonicdegree:talent} must hold. In addition, as agents $i $ and $j $ are distinct under complete heterogeneity then either $x_{i }>x_{j }$ or $x_{i' }>x_{i }$. As $k_{i }(\mu )\ge k_{j }(\mu )$ it must be that $x_{i }>x_{i' }$ as the converse would violate Proposition \ref{claim:network:monotonicdegree:talent}. 
		
		From Proposition \ref{claim:assort:elementwise:monotonic} it is known that if there are two agents such that $x_{i }>x_{i' }$  then this entails that for $l =1,...,k_{j }(\mu ): 
		\mathcal{X} (\nu_{i }(\mu )/\{j \})_l \ge \mathcal{X} (\nu_{j }(\mu )/\{i \})_l $. This inequality entails there are exactly two possible cases for any index $l =1,...,k_{j }(\mu )$. 
		The first case is that $\mathcal{X} (\nu_{i }(\mu )/\{j \})_l =\mathcal{X} (\nu_{j }(\mu )/\{i \})_l $. In this case 
		$\mathcal{K} (\nu_{i }(\mu )/\{i \})_l =		    \mathcal{K} (\nu_{j }(\mu )/\{i \})_l $ as the agent linked to $i $ and $j $ must be the same due to complete heterogeneity.
		Else in the other case where  $\mathcal{X} (\nu_{i }(\mu )/\{j \})_l >
		\mathcal{X} (\nu_{j }(\mu )/\{i \})_l $ then by reapplying  Proposition \ref{claim:network:monotonicdegree:talent} it follows that, $\mathcal{K} (\nu_{i }(\mu )/\{i \})_l \ge 				\mathcal{K} (\nu_{j }(\mu )/\{i \})_l .$
	\end{proof}

	\noindent
	\textbf{Proposition  \ref{claim:monotonic_decaycentrality}:} Suppose there are externalities as well as monotonicity and no modularity then the set of strongly stable networks has $\delta $-decay monotonicity.

	\begin{proof}
		The no modularity condition implies that individual link value is independent of own talent and thus separable for any $\hat{x} \in X $:  $Z (\hat{x} ,\tilde{x} )=\tilde{Z} (\tilde{x} )+\tilde{Z} (\hat{x} )$ where $\tilde{Z} $ is the contribution to the link value for a given level of talent. Thus we can rewrite Equation \ref{eq:def:benefits}:
		\begin{eqnarray*}
			\sum_{i \in N }b_{i}(\mu )&=&
			\sum_{i \in N }\sum_{j \ne i }\delta ^{p_{i j }(\mu )-1}z (x_i ,x_j ),\\
			&=&\sum_{{i }{j }\in \mu^c }\delta ^{p_{i j }(\mu )-1}Z (x_i ,x_j ),\\
			&=&\sum_{i \in N }\sum_{j \ne i }\delta ^{p_{i j }(\mu )-1}\tilde{Z} (x_{i }),\\
			&=&\sum_{i \in N }d_{i }^{\delta }(\mu )\cdot \tilde{Z} (x_{i }).
		\end{eqnarray*}
		
		Due to monotonicity in link value it also holds that $\frac{\partial }{\partial x_{i }}Z (x_{i },x_{i' })=\frac{\partial }{\partial x_{i }}\tilde{Z} (x_{i' })>0$. This entails that a necessary condition for the sum of utilities to be maximal by some network $\mu $ is that for any two agents $i' , j' $ such that $x_{i' }>x_{j' }$ it holds that $d_{i' }^{\delta }(\mu )>d_{j }^{\delta }(\mu )$. The necessity is demonstrated in the following. 
		
		Denote an alternative network $\tilde{\mu} $ where $i' $ and $j' $ have switched positions: if $i' $ and $j' $  are not linked in $\mu $ then let  $\nu_{i }(\tilde{\mu} )=\nu_{i' }(\mu )$ and $\nu_{i' }(\tilde{\mu} )=\nu_{i }(\mu )$; else if $i' $ and $j' $  are  linked in $\mu $ then let $\nu_{i }(\tilde{\mu} )=\nu_{i' }(\mu )\cup\{i' \}/\{i \}$ and $\nu_{i' }(\tilde{\mu} )=\nu_{i }(\mu )\cup\{i \}/\{i' \}$. 
		Thus it holds that
		$d_{j }^{\delta }(\tilde{\mu} )
		=d_{i }^{\delta }(\mu )$ and
		$d_{i }^{\delta }(\tilde{\mu} )
		=d_{j }^{\delta }(\mu )$.
		A deviation from $\mu $ to $\tilde{\mu} $ is possible for the grand coalition.
		
		We will show that the alternative network $\tilde{\mu} $ will generate higher aggregate utility which violates efficiency of $\mu $ and thus also violates strong stability.
		Starting with costs there are two cases of cost technology: when there is quota in links then a deviation to the alternative network $\tilde{\mu} $ is consistent with the degree quota\footnote{Both agents have a number of links that do not exceed the degree quota in the original network - thus their degree quota cannot be exceed in the alternative network.} and has unchanged costs; when there are convex costs then the move will have no change in aggregate costs as the sum of costs for agents $i $ and $j $ is unchanged. However, the benefits will be higher under $\tilde{\mu} $, using that $i $ and $i $ switch neighborhoods - this follows as $d_{j }^{\delta }(\mu ) > d_{i }^{\delta }(\mu )$ is equivalent to:
		\begin{eqnarray*}
			d_{j }^{\delta }(\mu )\cdot 
			[\tilde{Z} (x_{i })-\tilde{Z} (x_{j })]
			& > &
			d_{i }^{\delta }(\mu )\cdot [\tilde{Z} (x_{i })-\tilde{Z} (x_{j })],
			\\
			d_{i }^{\delta }(\tilde{\mu} )\cdot \tilde{Z} (x_{i })+
			d_{j }^{\delta }(\tilde{\mu} )\cdot \tilde{Z} (x_{j })
			& > &
			d_{i }^{\delta }(\mu )\cdot \tilde{Z} (x_{i })+
			d_{j }^{\delta }(\mu )\cdot \tilde{Z} (x_{j }),
			\\
			\sum_{i \in N }b_{i}(\tilde{\mu} )&>&\sum_{i \in N }b_{i}(\mu ).
		\end{eqnarray*}
	\end{proof}

		\subsection{Limits to sorting and monotonic centrality}\label{app:undersorting}

		\begin{figure}[b!]\begin{center}
				\begin{tikzpicture}[nodes={fill=gray!12,circle, ultra thin}, node distance=1.8cm, auto]
				\draw[help lines] (0,0);
				\node[state] (i_1) {5};
				\node[state] (i_2) [below left=0.5cm and 0.5cm of i_1] {6};
				\node[state] (i_3) [below right=0.5cm and 0.5cm of i_1] {7};
				\node[state] (i_5) [right=0.5cm of i_3] {1};
				\node[state] (i_4) [above=0.5cm of i_5] {2};
				\node[state] (i_6) [right=0.5cm of i_4] {3}; 	
				\node[state] (i_7) [right=0.5cm of i_5] {4};
				\path[<->] (i_1) edge node [fill=none]{} (i_2);
				\path[<->] (i_1) edge node [fill=none]{} (i_3);
				\path[<->] (i_2) edge node [fill=none]{} (i_3);
				\path[<->] (i_4) edge node [fill=none]{} (i_5);
				\path[<->] (i_4) edge node [fill=none]{} (i_6);
				\path[<->] (i_7) edge node [fill=none]{} (i_5);
				\path[<->] (i_7) edge node [fill=none]{} (i_6);
				\end{tikzpicture}\end{center}
			\caption{\label{fig:monotonic_fail} \textit{Failing monotonic centrality.} The above networks depict failure of monotonic centrality in the presence of supermodularity (i.e. no absence of modularity) from Example \ref{ex:supermod:nonmonotonic:centrality}.}
		\end{figure}
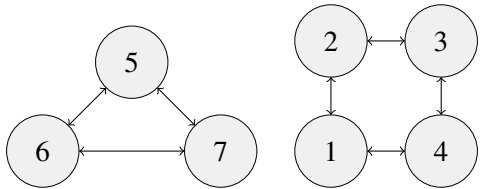

		\begin{example}\label{ex:supermod:nonmonotonic:centrality}
			This example shows how monotonic $\delta $-decay centrality can fail when there is supermodularity. The stability concept is strong stability.

			Suppose there are seven agents - three of high type (1-3) and four of low type (4-7). There is supermodularity; a degree quota of two, and; a level of network externalities $\delta $. Consider the segregated network $\mu =\{12,23,31\}\cup\{45,56,67,74\}$, and the connected network $\tilde{\mu} =\{12,23,34,45,56,67,71\}$. If the inequality below holds with "$>$" then $\mu $ is an efficient network; if "$<$" then $\tilde{\mu} $ is an efficient network. 	
			\begin{eqnarray}
			3\cdot Z (\bar{x} ,\bar{x} )+4\cdot Z (\underaccent{\bar}{x} ,\underaccent{\bar}{x} )\gtreqqless
			2\cdot Z (\bar{x} ,\bar{x} )+3\cdot Z (\underaccent{\bar}{x} ,\underaccent{\bar}{x} )+(2+4\delta +6\delta ^2)\cdot Z (\bar{x} ,\underaccent{\bar}{x} ).\label{eq:supermodularity:centrality}
			\end{eqnarray}
			When Inequality \ref{eq:supermodularity:centrality} holds with "$>$" then $\mu $ is furthermore strongly stable when for every pair of agents $i ,j \in N ;i \ne j $ it holds that there are no transfers between them, i.e. $\tau_{i j }=0$. In the network $\mu $ it holds that $d_{i }^{\delta }(\mu )>d_{j }^{\delta }(\mu )$ where $i \in \{1,2,3,4\},j \in \{5,6,7\}$. Note that Inequality \ref{eq:supermodularity:centrality} holds with "$>$" if $\hat{Z} >1+2\delta +3\delta ^2$ and $\hat{Z} >1$ due to supermodularity.
		\end{example}

		\begin{lemma}
			\label{claim:example:monotonic_fail:addition} In Example \ref{example:monotonic_fail} the sufficient conditions for existence of $\tau $ such that $\mu $ is stable are linear costs ($\tilde{c} $ of forming a link) and $\delta $ being sufficiently high along with the two inequalities:
			\[2\cdot Z (\bar{x} ,\underaccent{\bar}{x} )+Z (\bar{x} ,\bar{x} )\ge 4\tilde{c} \quad\mbox{and}\quad
			Z (\bar{x} ,\underaccent{\bar}{x} )>\tilde{c} .\]
		\end{lemma}  
		\begin{proof}	
			Linear cost entails that $c (2)=2c (1)$. Note that the total cost of establishing a link is $2\tilde{c} $. Throughout we assume symmetry such that $s_{1}=s_{2}$ and $\tau_{12}=0$ which implies 
			$\tau_{3 1}=\tau_{3 2}$. To shorten notation let  $s_{i }=s_{i }(\mu ,\tau )$.
			
			We begin with deriving the required transfers and net utility for the low type, i.e. agent 3, the relevant Inequalities are \ref{eq:monotone:fail:ex:deleteall:solo}, \ref{eq:monotone:fail:ex:deleteone:agent1}, \ref{eq:monotone:fail:ex:onlyonelink:allbutonelink}. The two inequalities (for $i =1,2$) in Inequality \ref{eq:monotone:fail:ex:deleteone:agent1}  can be rewritten using that $\tau_{31}=\tau_{32}$:
			\begin{eqnarray}
			\tau_{31} & \ge  & \tilde{c} -z (\underaccent{\bar}{x} ,\bar{x} ),\nonumber\\
			\tau_{31}+\tau_{32} & \ge  & 2\tilde{c} -2z (\underaccent{\bar}{x} ,\bar{x} ),\nonumber\\
			2\tilde{c} -2z (\underaccent{\bar}{x} ,\bar{x} )+s_{3} & \ge  & 2\tilde{c} -2z (\underaccent{\bar}{x} ,\bar{x} ),\nonumber\\
			s_{3} & \ge  & 0.\label{eq:monotone:fail:ex:implied1}
			\end{eqnarray}
			
			Thus Inequality \ref{eq:monotone:fail:ex:deleteone:agent1} is irrelevant when $\tau_{1 2}=\tau_{1 3}$ and Inequality \ref{eq:monotone:fail:ex:deleteall:solo} ($s_{3}\ge 0$) is satisfied.
			
			By adding the two inequalities in Inequality \ref{eq:monotone:fail:ex:onlyonelink:allbutonelink} where $i =3$ and $j =1,2$ together and using non-wastefulness ($\Sigma_{i \in N }s_{i }(\mu ,\tau )=U (\mu )$) the following must hold for the net utility of agent $3$:
			\begin{eqnarray}
			s_{3} & \ge  & 2Z (\bar{x} ,\underaccent{\bar}{x} )-4\tilde{c} -
			[\delta  Z (\bar{x} ,\bar{x} )+2Z (\bar{x} ,\underaccent{\bar}{x} )-4\tilde{c} ],\nonumber\\
			s_{3} & \ge  & -\delta  Z (\bar{x} ,\bar{x} ).\label{eq:monotone:fail:ex:implied1:irrelevant}
			\end{eqnarray}
			
			We see that Inequality \ref{eq:monotone:fail:ex:implied1:irrelevant} is satisfied when Inequality \ref{eq:monotone:fail:ex:deleteall:solo} holds ($s_{3}\ge 0$). Note that the two inequalities in Inequality \ref{eq:monotone:fail:ex:onlyonelink:allbutonelink} where $i =3$ and $j =1,2$ must still hold. 
			
			We now turn to derive to check relevant transfer and net utilities for agent 1 and 2. The relevant expressions are Inequalities \ref{eq:monotone:fail:ex:deleteall:solo}, \ref{eq:monotone:fail:ex:form:nodelete}, \ref{eq:monotone:fail:ex:form:onedelete} and \ref{eq:monotone:fail:ex:onlyonelink:allbutonelink}. For large enough $\delta $ (i.e. $\delta \rightarrow1$) then it always holds that Inequality \ref{eq:monotone:fail:ex:form:nodelete} is satisfied; thus it suffices to check Inequalities \ref{eq:monotone:fail:ex:deleteall:solo}, \ref{eq:monotone:fail:ex:form:onedelete} and \ref{eq:monotone:fail:ex:onlyonelink:allbutonelink}.

			We use a similar procedure to see the implication on net utility of agent $i \in \{1,2\}$. Let $j =\{1,2\}\backslash\{i \}$. By adding the two inequalities from Inequality \ref{eq:monotone:fail:ex:onlyonelink:allbutonelink} for the pair of $3,i $ and the pair $i ,j $, together and using non-wastefulness ($\Sigma_{i \in N }s_{i }(\mu ,\tau )=U (\mu )$) it must hold that:
			\begin{eqnarray}
			\min \{s_{1},s_{2}\} & \ge  & Z (\bar{x} ,\underaccent{\bar}{x} )+Z (\bar{x} ,\bar{x} )-4\tilde{c} -
			[\delta  Z (\bar{x} ,\bar{x} )+2Z (\bar{x} ,\underaccent{\bar}{x} )-4\tilde{c} ],\nonumber\\
			\min \{s_{1},s_{2}\} & \ge  & (1-\delta )Z (\bar{x} ,\bar{x} )-Z (\bar{x} ,\underaccent{\bar}{x} ).
			\label{eq:monotone:fail:ex:implied23:1}
			\end{eqnarray}
			
			For sufficiently large $\delta $ then Inequality \ref{eq:monotone:fail:ex:implied23:1} becomes irrelevant as in the limit of $\delta \rightarrow1$ its requirement is $\min \{s_{1},s_{2}\}\ge -Z (\bar{x} ,\underaccent{\bar}{x} )$ which is always satisfied when $\min \{s_{1},s_{2}\}\ge 0$. Note that it still remains to check all three conditions from Inequality \ref{eq:monotone:fail:ex:onlyonelink:allbutonelink} are valid.
			
			In addition, it is possible to rewrite Inequality \ref{eq:monotone:fail:ex:form:onedelete} in to Inequality \ref{eq:monotone:fail:ex:implied23:2}. Note that Inequality \ref{eq:monotone:fail:ex:implied23:2} is a sufficient condition for Inequality \ref{eq:monotone:fail:ex:implied23:2} for both agent 1 and 2.
			\begin{eqnarray}
			\nonumber
			s_{1}+s_{2} & \ge  & Z (\bar{x} ,\bar{x} )+(1+\delta )z (\bar{x} ,\underaccent{\bar}{x} )+\max \{\tau_{12},\tau_{13}\}-3\tilde{c} ,
			\\
			\nonumber
			s_{1}+s_{2}
			& \ge  &
			Z (\bar{x} ,\bar{x} )+(1+\delta )z (\bar{x} ,\underaccent{\bar}{x} )+\max \{s_{1},s_{2}\}-\delta  z (\bar{x} ,\bar{x} )-z (\bar{x} ,\underaccent{\bar}{x} )-2\tilde{c} ,
			\\
			\label{eq:monotone:fail:ex:implied23:2}
			\min \{s_{1},s_{2}\}
			& \ge  & (2-\delta )z (\bar{x} ,\bar{x} )+
			\delta z (\bar{x} ,\underaccent{\bar}{x} ) -2\tilde{c} .
			\end{eqnarray}	
			
			In the remainder of this proof we restrict that $s_{3}=0$ and thus $\tau_{12}=\tau_{13}=z (\underaccent{\bar}{x} ,\bar{x} )-\tilde{c} $. This restriction ensures the minimal requirements for the transfer and net utility for agent 3 are met (see Inequalities \ref{eq:monotone:fail:ex:deleteall:solo}, \ref{eq:monotone:fail:ex:deleteone:agent1}, \ref{eq:monotone:fail:ex:onlyonelink:allbutonelink}). This restriction allows us to assess when the remaining requirements are satisfied for agents 1 and 2 given that the requirements for agent  are minimally satisfied. When $s_{3}=0$ then Inequality \ref{eq:monotone:fail:ex:onlyonelink:allbutonelink} is satisfied if:
			\begin{eqnarray}
			\min \{s_{1},s_{2}\} & \ge  & Z (\bar{x} ,\underaccent{\bar}{x} )-2\tilde{c} .
			\label{eq:monotone:fail:ex:implied23:3}
			\end{eqnarray}
			
			It is possible to further restrict sufficient conditions for when the remaining relevant inequalities are satisfied when $s_{1}=s_{2}$ and $s_{3}=0$. Recall that it is necessary to check Inequalities \ref{eq:monotone:fail:ex:deleteall:solo}, \ref{eq:monotone:fail:ex:form:onedelete} and \ref{eq:monotone:fail:ex:onlyonelink:allbutonelink} for large enough $\delta $. 
			The sufficient condition for the inequalities is captured in Inequality \ref{eq:monotone:fail:ex:implied23:4} below. 
			The elements in the set from which the maximal element is chosen derived from respectively: 
			Inequality \ref{eq:monotone:fail:ex:deleteall:solo}; 
			Inequality \ref{eq:monotone:fail:ex:implied23:2} which implies Inequality \ref{eq:monotone:fail:ex:form:onedelete} is satisfied;						
			Inequality \ref{eq:monotone:fail:ex:onlyonelink:allbutonelink} for the pair 1,2;
			Inequality \ref{eq:monotone:fail:ex:implied23:3} which implies Inequality \ref{eq:monotone:fail:ex:onlyonelink:allbutonelink} for the pair 1,3 and the pair 2,3.
			\begin{eqnarray}
			s_{1}+s_{2}	&\ge &\max \{0,\,\,
			2(2-\delta )z (\bar{x} ,\bar{x} )+2\delta z (\bar{x} ,\underaccent{\bar}{x} )-4\tilde{c} ,\,\,
			Z (\bar{x} ,\bar{x} )-2\tilde{c} ,\,\,
			2Z (\bar{x} ,\underaccent{\bar}{x} )-4\tilde{c} 
			\}.
			\label{eq:monotone:fail:ex:implied23:4}
			\end{eqnarray}
			
			Using that $s_{3}=0$ it holds that  $s_{1}+s_{2}=s_{1}+s_{2}+s_{3}$. Combining non-wastefulness ($\Sigma_{i \in N }s_{i }(\mu ,\tau )=U (\mu )$) with Inequality \ref{eq:monotone:fail:ex:implied23:4} we can derive the following four inequalities - each inequality correspond to an n'th element in the set from which the maximal element is chosen.
			\begin{eqnarray}
			2Z (\bar{x} ,\underaccent{\bar}{x} )+\delta Z (\bar{x} ,\bar{x} )-4\tilde{c} &\ge & 0,
			\label{eq:monotone:fail:ex:implied:all:1}
			\\&&\nonumber\\
			2Z (\bar{x} ,\underaccent{\bar}{x} )+
			\delta Z (\bar{x} ,\bar{x} )-4\tilde{c} &\ge & 2(2-\delta )z (\bar{x} ,\bar{x} )+2\delta z (\bar{x} ,\underaccent{\bar}{x} )-4\tilde{c} ,
			\nonumber\\
			2\cdot (\underaccent{\bar}{x} ,\bar{x} )+2(1-\delta )(\bar{x} ,\underaccent{\bar}{x} )&\ge & 
			2(1-\delta )Z (\bar{x} ,\bar{x} ),\label{eq:monotone:fail:ex:implied:all:2}
			\\&&\nonumber\\
			2Z (\bar{x} ,\underaccent{\bar}{x} )+
			\delta Z (\bar{x} ,\bar{x} )-4\tilde{c} &\ge & Z (\bar{x} ,\bar{x} )-2\tilde{c} ,\nonumber\\
			2Z (\bar{x} ,\underaccent{\bar}{x} )&\ge &(1-\delta ) Z (\bar{x} ,\bar{x} )+2\tilde{c} ,
			\label{eq:monotone:fail:ex:implied:all:3}		
			\\&&\nonumber\\
			2Z (\bar{x} ,\underaccent{\bar}{x} )+
			\delta Z (\bar{x} ,\bar{x} )-4\tilde{c} 
			&\ge & 2Z (\bar{x} ,\underaccent{\bar}{x} )-4\tilde{c} ,\nonumber\\
			(1-\delta ) Z (\bar{x} ,\bar{x} )&\ge &0.
			\label{eq:monotone:fail:ex:implied:all:4}					
			\end{eqnarray}
			
			Of the above we see that Inequality \ref{eq:monotone:fail:ex:implied:all:2} holds when $\delta $ is large enough while Inequality \ref{eq:monotone:fail:ex:implied:all:4} always holds. Thus all the above inequalities can be satisfied for large enough $\delta $ when $2Z (\bar{x} ,\underaccent{\bar}{x} )+	\delta Z (\bar{x} ,\bar{x} )\ge 4\tilde{c} $ and $Z (\bar{x} ,\underaccent{\bar}{x} )>\tilde{c} $.
		\end{proof}
		
		In what follows we demonstrate two auxiliary results: Lemma  \ref{claim:monotonic_fail} and Lemma  \ref{claim:monotonic_fail_remark}. These two results imply Corollary \ref{claim:core_fail_externalities} and \ref{claim:fail_corollary}. 
		The first auxiliary lemma rely on Condition \ref{cond:failurecentral} below; this condition is a generalization of the conditions from Example \ref{example:monotonic_fail}.
		
		\begin{condition}\label{cond:failurecentral}
			Let $\mu =\cup_{i \in {1,..,n -1}}\{i n \}$; transfers are restricted as follows:
			\begin{eqnarray}\label{eq:monotone:fail:claim:nonwastefulness}
			\Sigma_{i \in N }s_{i }&=&U (\mu )\nonumber\\&=&
			\Sigma_{i \in \{1,..,n -1\}}Z_{i  n}+
			\Sigma_{i ,j \in \{1,..,n -1\},i >j }		\delta Z_{i  j }-(n -1)\cdot c (1)-c (n -1),\label{eq:nonwastefulness}
			\end{eqnarray}
			along with the following three set of inequalities,
			\begin{eqnarray}
			\label{eq:monotone:fail:claim:deleteall:solo}			
			\forall {i }\in N :\quad
			s_{i }&\ge &0,
			\\
			\label{eq:monotone:fail:claim:deleteone:agent1}				
			\forall {i }\ne{n }:\quad 
			\tau_{n  i }&\ge & c (n -1)-c (n -2)-z_{n i },
			\\
			\label{eq:monotone:fail:claim:onlyonelink:allbutonelink}
			\forall {i }, {j }\in N ,{i }\ne{j }:
			\quad
			s_{i }+s_{j }
			& \ge  &
			Z_{i  j }-2\cdot c (1).
			\end{eqnarray}		
			and finally the following two set of inequalities must hold for any two ${i }, {j }\in (N \backslash\{n \}),{i }\ne{j }$:
			\begin{eqnarray}
			\label{eq:monotone:fail:claim:form:onedelete}
			s_{i }+s_{j }
			& \ge  &
			Z_{i  j }
			+z_{i  n }+\delta z_{j  n }
			+\delta \Sigma_{l \notin\{i ,j ,n \}}[z_{i l }+\delta z_{j l }]-c (1)-c (2)+\tau_{i n },
			\\
			\label{eq:monotone:fail:claim:form:nodelete}
			2\cdot [c (2)-c (1)]
			& \ge  &
			(1-\delta )\cdot Z_{i  j }.
			\end{eqnarray}\end{condition}

		\begin{lemma}
			\label{claim:monotonic_fail}Suppose there are at least three agents ($n \ge 3$), weak supermodularity and monotonicity, a cost function and an interval $[\tilde{\delta} ,\hat{\delta} ]$ where all inequalities from Condition  \ref{cond:failurecentral} are satisfied, then for every  $\delta \in [\tilde{\delta} ,\hat{\delta} ]$ the low value sponsored star is pairwise stable.
		\end{lemma}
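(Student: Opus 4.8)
The plan is to verify pairwise stability directly by enumerating every feasible move available to a coalition of at most two agents from the star $\network=\cup_{\agenti\in\{1,\dots,\agentcount-1\}}\{\agenti\agentcount\}$ and checking that each one is blocked by a matching inequality of Condition~\ref{cond:failurecentral}. The crucial simplification is that, because any pair may freely choose the transfer on a link they form while the transfers on broken links are voided, a feasible two-person move is blocking only if it strictly raises the \emph{combined} net-payoff of the two deviators; it therefore suffices to compare the pre-move total $\payofftransfer_\agenti+\payofftransfer_\agentj$ with the post-move total, the free internal transfer handling redistribution. A second simplification is geometric: in the star each peripheral agent reaches everything through the center along a unique path, so under either decay rule any two distinct peripheral agents sit at distance two (discount $\decayexternality$), while the center collects only direct values. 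These two observations reduce the argument to bookkeeping.

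First I would dispose of the one-agent moves, which can only delete links. A peripheral agent has a single link, so deleting it sends that agent to the isolated outside option of value zero, which is unprofitable exactly by $\payofftransfer_\agenti\ge0$, Inequality~(\ref{eq:monotone:fail:claim:deleteall:solo}). For the center $\agentcount$ I would show that deleting any nonempty subset $S$ of its links is unprofitable: the benefit-plus-transfer lost is $\sum_{\agenti\in S}(\indlinkvalue_{\agentcount\agenti}+\transfernetwork_{\agentcount\agenti})$, which by Inequality~(\ref{eq:monotone:fail:claim:deleteone:agent1}) is at least $|S|\cdot[\costfct(\agentcount-1)-\costfct(\agentcount-2)]$, whereas convexity of $\costfct$ gives $\costfct(\agentcount-1)-\costfct(\agentcount-1-|S|)\le|S|\cdot[\costfct(\agentcount-1)-\costfct(\agentcount-2)]$, so the loss dominates the saving (full deletion is additionally covered by $\payofftransfer_\agentcount\ge0$). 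Any pure-deletion move by a pair is a union of such single-agent deletions, and renegotiating a transfer on a surviving link is zero-sum, so no pure-deletion coalition blocks.

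Next I would treat the remaining moves: a pair forming a \emph{new} link, which in the star must be a link $\agenti\agentj$ between two peripheral agents, each optionally deleting its one link to the center. If neither deletes, the only change is that $\agenti\agentj$ is upgraded from distance two to one, a gain of $(1-\decayexternality)\linkvaluefct_{\agenti\agentj}$ against the added cost $2[\costfct(2)-\costfct(1)]$, blocked by Inequality~(\ref{eq:monotone:fail:claim:form:nodelete}). If both delete, the pair becomes an isolated dyad with combined payoff $\linkvaluefct_{\agenti\agentj}-2\costfct(1)$, blocked by Inequality~(\ref{eq:monotone:fail:claim:onlyonelink:allbutonelink}). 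If exactly one deletes, say $\agenti$ keeps its center link and $\agentj$ deletes, then in $\networkaltdouble=\network\cup\{\agenti\agentj\}\backslash\{\agentj\agentcount\}$ the unique-path structure places $\agentj$ at distance two from the center and at distance three from every other peripheral $\counter$, so the post-move combined payoff equals precisely the right-hand side of Inequality~(\ref{eq:monotone:fail:claim:form:onedelete}) (with the surviving transfer $\transfernetwork_{\agenti\agentcount}$ retained and $\transfernetwork_{\agentj\agentcount}$ voided); applying that inequality, and its label-swapped version for the symmetric case, blocks both one-deletion moves. Since $\network$ is feasible and, by (\ref{eq:nonwastefulness}), the transfers are non-wasteful, the listed moves exhaust the feasible two-person deviations, establishing pairwise stability.

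The main obstacle I anticipate is the center's multi-link deletion, which must be reduced to the single-link condition~(\ref{eq:monotone:fail:claim:deleteone:agent1}) via the convexity aggregation above, together with careful bookkeeping of which transfers persist versus vanish when links break. I would also flag that the precise coefficients on the right of~(\ref{eq:monotone:fail:claim:form:onedelete}) reflect the prevailing decay rule (the distance-three term is $\decayexternality^2$ under constant decay and $\decayexternality$ under hyperbolic decay), so the Condition is to be read as equating that right-hand side with the realized post-move payoff. Finally, note that weak supermodularity and monotonicity are \emph{not} used in this direction: they (with the cost function) serve only to guarantee that the interval $[\decayexternalityalt,\decayexternalityaltdouble]$ on which Condition~\ref{cond:failurecentral} is satisfiable is nonempty, whereas the stability conclusion itself follows from Condition~\ref{cond:failurecentral} and convexity of $\costfct$ alone.
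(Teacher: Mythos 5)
Your proposal is correct and follows essentially the same route as the paper's proof: an exhaustive case-by-case check that every feasible one- or two-agent deviation from the star (peripheral deletion, center deletion reduced to the single-link case via convexity, link formation with zero, one, or two deletions) is blocked by the corresponding inequality of Condition~\ref{cond:failurecentral}. Your added remarks on the transferable-utility reduction to combined payoffs and on supermodularity/monotonicity being unused for the stability direction are consistent with, and slightly more explicit than, the paper's argument.
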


		\begin{proof} 
			The network $\mu $ be a low value sponsored star with agent $n $ as center, i.e.
			$\mu =\cup_{i \in {1,..,n -1}}\{i n \}$. We show that $\mu $ is pairwise stable if the relevant conditions are met.			
						
			First we check for deviations from $\mu $ where an agent deletes all its links in $\mu $. This move cannot be profitable when Inequality \ref{eq:monotone:fail:claim:deleteall:solo} is satisfied for all individuals. This implies for any agent but agent $n $ it is not profitable to delete their single link in $\mu $. Furthermore for agent $n $ deleting any link is not profitable due to due to Inequality \ref{eq:monotone:fail:claim:deleteone:agent1} being satisfied. This also implies that deleting any number of links for agent $n $ is not beneficial as each is beneficial at the highest marginal cost.
			
			Second network $\mu $ is robust to deviations where two agents delete all links and form a single link as any two agents can only gain less by such a move cf. Inequality \ref{eq:monotone:fail:claim:onlyonelink:allbutonelink}.
			
			Finally the network $\mu $ is robust to deviation where a link is formed by two agents of which one agent deletes its link to $n $ but another agent keeps its link to agent $n $ when Inequality \ref{eq:monotone:fail:claim:form:onedelete} is satisfied. The network $\mu $ is robust to deviations where both agents keep their current link with agent $n $ and form a new link together if Inequality \ref{eq:monotone:fail:claim:form:nodelete} is satisfied.
		\end{proof}
		
		\begin{lemma}
			\label{claim:monotonic_fail_remark} Sufficient conditions for a low value sponsored star, with agent $n $ as center, to be stable are: linear linking costs; many agents; $\delta  \in  [1-\nicefrac{2\tilde{c} }{Z (x_1,\,x_2)},1)$; transfers s.t. $s_{n }\ge 0$ and
			\begin{eqnarray*}				
			s_{i }
			& \ge  & \max _{j \notin\{i ,n \}}
			\left[(1-\delta ^2)Z_{i j }-
			(\delta -\delta ^2)z_{j  i }\right]+\delta z_{i  n }+\delta ^2\Sigma_{\iota \notin\{i ,n \}}z_{i  \iota }
			-2\tilde{c} ,\qquad i \ne n .
			\end{eqnarray*} 
		\end{lemma} 
		
		\begin{proof}
			The setting is as specified by Lemma  \ref{claim:monotonic_fail} where costs are linear such that total costs are $c (k )=k \cdot \tilde{c} $. Note the total cost of establishing a link then is $2\tilde{c} $ as two agents are required. 
			We use a procedure similar to the one for Lemma \ref{claim:example:monotonic_fail:addition} to find the restrictions on net utility for agents. We first investigate restrictions for agent $n $ and then the remaining agents.

			
			 We examine the implication for the lower bound on net-utility for agent $n $ from the $n $-1 inequalities for agent $n $ with the other agents in Inequality \ref{eq:monotone:fail:claim:deleteone:agent1} by summing over the inequalities:
			\begin{eqnarray}
			\Sigma_{i \ne n }\tau_{n  i } & \ge  &
			(n -1)\tilde{c} -\Sigma_{i \ne n }z_{n i },\nonumber\\
			s_{n } & \ge  & 0,\label{eq:monotone:fail:claim:implied1}
			\end{eqnarray}
			
			Thus if the $n -1$ conditions from Inequality \ref{eq:monotone:fail:claim:deleteone:agent1} hold it is a sufficient condition for Inequality \ref{eq:monotone:fail:claim:deleteall:solo} also to hold for agent $n $; thus we only require that $\tau_{n  i }\ge \tilde{c} -z (\underaccent{\bar}{x} ,x_{i })$ for agent $i \ne n $.
			
						
			We analogously examine the implication of the other $n -1$ inequalities in Inequality \ref{eq:monotone:fail:claim:onlyonelink:allbutonelink} for agent $n $'s lower bound on net-utility by summing and subtracting Equation \ref{eq:nonwastefulness}:
			\begin{eqnarray}
			s_{n } & \ge  & 		-\delta \cdot (
			\Sigma_{i ,j \in \{1,..,n -1\},i >j }
			Z_{i j }).\label{eq:monotone:fail:claim:implied1:irrelevant}
			\end{eqnarray}
						
			The above inequality is less strict than Inequality \ref{eq:monotone:fail:claim:deleteall:solo} thus without implication on agent $n $.

			We now investigate restrictions for other agents. 
			By summing over Inequality \ref{eq:monotone:fail:claim:onlyonelink:allbutonelink} for agent $i \in \{1,..,n -1\}$ with other agents and subtracting Equation \ref{eq:monotone:fail:claim:nonwastefulness} we get:
			\begin{eqnarray}
			s_{i }\cdot (n -2) & \ge  & \Sigma_{j \notin\{i ,n \}}Z_{i j }
			-\Sigma_{j \notin\{i ,n \}}Z_{j  n}-
			\delta \Sigma_{i ,j \in \{1,..,n -1\},i >j }Z_{i  j }.
			\nonumber
			\end{eqnarray}
				
			Due to monotonicity it holds that we can bound link value for any $i ,j : Z (\bar{x} ,\bar{x} )\ge Z_{i j }\ge Z (\underaccent{\bar}{x} ,\underaccent{\bar}{x} )$. A sufficient condition for the above inequality  is: 
			\begin{eqnarray}
			s_{i } & \ge  & \frac{(n -1)\cdot Z (\bar{x} ,\bar{x} )-
			[(n -1)+\delta (n -2)^2]\cdot Z (\underaccent{\bar}{x} ,\underaccent{\bar}{x} )}{n -2}.
			\nonumber
			\end{eqnarray}
				
			For $\delta >0$ and sufficiently many agents (as $n \rightarrow\in fty$) then the above inequality 
			is irrelevant as the right hand side is negative for $i \ne n $; thus Inequality \ref{eq:monotone:fail:claim:onlyonelink:allbutonelink} will not be binding a constraint.

			
			In addition, for any two agents $i ,j \ne1$ it is possible to rewrite Inequality \ref{eq:monotone:fail:claim:form:onedelete} in to the following (using that $\tau_{i j }=0$ when $i ,j \ne n $):
			\begin{eqnarray}
			\nonumber
			s_{i }+s_{j } & \ge  & Z_{i j }
			+z_{j  n }+\delta z_{i  n }
			+\Sigma_{\iota \notin\{i ,j ,n \}}[\delta z_{j  \iota }+\delta ^2z_{i  \iota }]-3\tilde{c} +\tau_{j n },
			\\
			\nonumber
			s_{i }+s_{j } & \ge  &
			Z_{i j }
			+\delta z_{i  n }
			+\Sigma_{\iota \notin\{i ,j ,n \}}\delta ^2z_{i  \iota }-2\tilde{c} -\delta z_{j  i }+s_{j },
			\\
			\label{eq:monotone:fail:claim:implied:n:1}
			s_{i }
			& \ge  & z_{i  j }+
			(1-\delta )z_{j  i }+\delta z_{i  n }+\Sigma_{\iota \notin\{i ,j ,n \}}\delta ^2z_{i  \iota }-2\tilde{c} .
			\end{eqnarray}

			The maximum of Inequality \ref{eq:monotone:fail:claim:implied:n:1} for agent $i $ is:
			\begin{eqnarray}
			s_{i }
			& \ge  & \max _{j \notin\{i ,n \}}
			\left[z_{i  j }+
			(1-\delta )z_{j  i }+\delta z_{i  n }+\delta ^2\Sigma_{\iota \notin\{i ,j ,n \}}z_{i  \iota }\right]
			-2\tilde{c} ,
			\nonumber
			\\
			s_{i }
			& \ge  & \max _{j \notin\{i ,n \}}
			\left[(1-\delta ^2)z_{i  j }+
			(1-\delta )z_{j  i }\right]+\delta z_{i  n }+\delta ^2\Sigma_{\iota \notin\{i ,n \}}z_{i  \iota }
			-2\tilde{c} ,
			\nonumber
			\\
			s_{i }
			& \ge  & \max _{j \notin\{i ,n \}}
			\left[(1-\delta ^2)Z_{i j }-
			(\delta -\delta ^2)z_{j  i }\right]+\delta z_{i  n }+\delta ^2\Sigma_{\iota \notin\{i ,n \}}z_{i  \iota }
			-2\tilde{c} .
			\nonumber
			\end{eqnarray}
		
			The above inequality can be satisfied for all agents apart excluding agent $n $ when:
			\begin{eqnarray}
			\Sigma_{i \in \{1,..,n -1\}}s_{i }
			& \ge  &
			\Sigma_{i \in \{1,..,n -1\}}	[(1-\delta ^2)Z_{i 1}-
			(\delta -\delta ^2)z_{1i }+\delta z_{i  n }]+\nonumber
			\\&&\Sigma_{i ,j \in \{1,..,n -1\},i >j }
			\delta ^2Z_{i j }
			-(2n -2)\tilde{c} .
			\label{eq:monotone:fail:claim:implied:n:3}
			\end{eqnarray}
			
			We check validity of Inequality \ref{eq:monotone:fail:claim:implied:n:3} using non-wastefulness in Equation \ref{eq:monotone:fail:claim:nonwastefulness}:
			\begin{eqnarray}
			s_{n }&=&\Sigma_{i \in \{1,..,n \}}s_{i }- \Sigma_{i \in \{1,..,n -1\}}s_{i }\nonumber\\
			& \le  &
			\Sigma_{i \in \{1,..,n -1\}}Z_{i  n }+
			\Sigma_{i ,j \in \{1,..,n -1\},i >j }		\delta Z_{i j }-(n -1)\cdot c (1)-c (n -1),
			\nonumber
			\\&&
			-[\Sigma_{i \in \{1,..,n -1\}}	[(1-\delta ^2)Z_{i  1}-
			(\delta -\delta ^2)z_{1i }+\delta z_{i  n }]+
			\Sigma_{i ,j \in \{1,..,n -1\},i >j }
			\delta ^2Z_{i j }
			-(2n -2)\tilde{c} ],
			\nonumber\\&=&
			\Sigma_{i \in \{1,..,n -1\}}[(1-\delta )Z_{i  n }-(1-\delta ^2)Z_{i  1}+
			(\delta -\delta ^2)z_{1i }+\delta z_{n  i }]+
			\Sigma_{i ,j \in \{1,..,n -1\},i >j }
			(\delta -\delta ^2)Z_{i j },
			\nonumber\\&=&
			(1-\delta )U (\mu )-\Sigma_{i \in \{1,..,n -1\}}[
			(1-\delta ^2)Z_{i  1}-
			(\delta -\delta ^2)z_{1i }
			-\delta z_{n i }].\nonumber
			\end{eqnarray}
			
			For many agents it must be that $(1-\delta )U (\mu )\ge \Sigma_{i \in \{1,..,n -1\}}[
			(1-\delta ^2)Z_{i  1}-
			(\delta -\delta ^2)z_{1i }
			-\delta z_{n i }]$ as $U (\mu )$ is quadratic in size while the right hand side is linear. Thus $s_{n }\ge 0$ can be satisfied for large populations along with the conditions for other agents.
			
			Finally we see that Inequality \ref{eq:monotone:fail:claim:form:nodelete} is satisfied for all agents when:
			\begin{eqnarray*}
				2\tilde{c}  & \ge  & (1-\delta )\cdot Z (x_1,x_2),\\
				\nicefrac{2\tilde{c} }{Z (x_1,x_2)} & \ge  & (1-\delta ),\\
				\delta  & \ge  & 1-\nicefrac{2\tilde{c} }{Z (x_1,x_2)}.
			\end{eqnarray*}
			
		\end{proof}

	\section{Supplementary appendix: Suboptimal sorting for constant decay}\label{app:oversort_generalize}
		
		This appendix shows a how suboptimally sorted networks are also prevalent under constant decay. It is split into two sub-appendices: sub-appendix \ref{app:oversort_local_tree} which deals with demonstrating the results and sub-appendix \ref{app:local_trees} which only contains auxiliary results.

	\subsection{Suboptimal sorting in local trees}\label{app:oversort_local_tree}
	
		We show sorting may be pairwise stable but suboptimal under constant decay for a subclass of networks. We begin by describing this subclass. Informally put, the relevant subclass of perfectly sorted networks where each subnetwork for a given type has a certain structure. 
		The structure of each subnetwork is such that from the perspective of every agent (i.e. the ego-network) each subnetwork appears as a tree when disregarding the links of the agents furthest away. Note that a \textit{tree} is network where every pair of agents are connected by a unique path. Thus these subnetworks are called local trees as they are not trees in a global sense but only when disregarding most distant agents. 
		
		The formal definition is as described below. The definition employs the network \textit{diameter} which is the maximum distance between any two agents, i.e.  $m (\mu )=\sup_{i ,j \in N }p_{i j }(\mu )$. 
		
		\begin{definition}\label{def:local_tree}
			A network $\mu $ is a \textbf{local tree} when each agent $i $ has $\kappa $ links where:
			\begin{itemize}
				\item 			for each other agent $j \ne i $ at distance $p_{i j }(\mu )\le  m_{n ,\kappa }-2$ there are $\kappa -1$ links between agent $j $ and $j' $ such that $j' $ is one step further away, i.e. $p_{i j }(\mu )=p_{i j' }(\mu )-1$;
				\item the network diameter $m (\mu )=m_{n ,\kappa }$,
				\begin{eqnarray}
				m_{n ,\kappa } & = & \arg\min _{m }\{ m:\,\Sigma_{l =1}^{m }(\kappa (\kappa -1)^{l -1})+1\ge  n \}. \label{eq:maximum_path_length_local_tree}
				\end{eqnarray}		
			\end{itemize}
		\end{definition}
		
		The structure of local trees entails that each agent has $\kappa \cdot (\kappa -1)^{p -1}$ agents at distance $p <m $, where $m =m_{n ,\kappa }$.
		At distance $p =m $ there are  $n -\sum_{l =1}^{m -1}\kappa \cdot (\kappa -1)^{l -1}$ (all remaining agents). This structure implies that every agent's utility is maximized subject to the constraint of all agents having at most $\kappa $ links;\footnote{The maximization of utility follows from the observation that each agent has at most $\kappa $ links, so at distance $p $ there can be at most $\kappa \cdot (\kappa -1)^{p -1}$ agents.} a side effect is that utility before transfers is symmetric.
		
		A necessary condition for local trees to exist is that there is no link surplus, i.e. degree quota is binding ($\forall i \in N :k_i =\kappa $). Note this binding condition is only possible when $n \cdot \kappa $ is even.
		
		When a local tree network fulfills $n =\sum_{l =1}^{m }\kappa \cdot (\kappa -1)^{l -1}$ then it is an \emph{exact local tree}. See the next sub-appendix for an elaborate treatment of structure of exactly local trees. Two subclasses of exact local trees which are worth mentioning. 
		The first is a network known as a cycle or a ring. The cycle is characterized by having a minimal possible degree quota ($\kappa =2$) among local trees and a maximal diameter $(m =\left\lceil \frac{n -1}{2}\right\rceil)$. The second is a \textit{clique} where all agents are linked, i.e. the complete network. Cliques have maximal degree quotas $(\kappa =n -1)$ and minimal diameters $(m =1)$. Both subclasses has a network which exists for any $n $. Note that in Example \ref{example:excessiveassortative} each of the two components is both a cycle and a clique. Note that there exist non-trivial networks beyond the cycle and the clique.\footnote{An example is  $\{i_1i_2,i_1i_3,i_1i_4,i_2i_5,i_2i_6,i_3i_7,i_3i_8,i_4i_9,i_4i_{10},i_5i_7,i_5i_9,i_6i_8,i_6i_{10},i_7i_{9},i_{8}i_{10}\}$ when $n =10,\kappa =3$.}

		In order to derive our results it is necessary to restrict ourselves to  a subset of local trees. The subset are those local trees where the deletion of links leads to equal losses to both of agents whose link is deleted; thus we refer to these local trees as having symmetric losses:
		\begin{definition}
			\label{def:local_tree_symmetric_losses}
			A local tree $\mu $ has \textbf{symmetric losses} when at every distance $p =1,..,m $ it holds that $|\{i \in N : p_{\iota i }(\mu \backslash\{\iota \iota' \})=p \}|
			=
			|\{i \in N : p_{\iota' i }(\mu \backslash\{\iota \iota' \})=p \}|$. 
		\end{definition}
		
		Denote the set of perfectly sorted networks where the subnetwork for each type is a local tree with symmetric losses as $M ^{p-srt:symm.\,loc-tree }$.
		
		Whether or not symmetric losses is a generic property for all local trees is an open question. However, in simulations that we perform it holds all network configurations which are local trees up to size $n =10$ have symmetric losses (see result below and proof for exhibition of examples).  Moreover for size up to $n =16$ it has been shown to hold for any networks examined in the simulation. 
		
		A generalization of stable but suboptimal sorting under constant decay is expressed below. While allowing for constant decay rather than hyperbolic it the set of networks are further restricted. 
		\begin{theorem}\label{claim:assort:ext:pairwise:excess}
			
			Suppose there is supermodularity, a degree quota $\kappa $ and each type has equal number of agents then 
					\begin{enumerate}[(i)]
						\item $\hat{M} \cap M ^{\max  U }_{\delta >\ubar{\delta} }=\emptyset$; 
						\item $\hat{M} \subseteq M ^{p-stb }_{\delta \le \bar{\delta} }$;
					\end{enumerate}							
			where $\hat{M} =M ^{p-srt:symm.\,loc-tree }$ and thresholds $\ubar{\delta} ,\bar{\delta} \in (0,1)$ where $\ubar{\delta} <\bar{\delta} $
		\end{theorem}

			\begin{proof}
				We show properties (i) and (ii) together. Let $\mu $ be a network which is segregated into $|X |$ components where each component is a local tree with $n /|X |$ agents. Let there be no transfers between any agents. 
				
				As each subnetwork for a given type is a local tree it is stable against deviations by agents of the same type - this follows as local trees provides maximal possible benefits among feasible structures of the subnetwork for all agents in the subnetwork. Thus only two agents of different types may have a profitable deviation which is feasible.
				
				Let $\iota ,j $ be agents of respectively types $x $ and $\tilde{x} $. These two agents can deviate by each deleting a link to $\iota' $ and $j' $ respectively while jointly forming a link. 
				The new network resulting from deletion is denoted $\hat{\mu} =\mu \backslash\{\iota \iota' ,j j' \}$. 
				The move resulting from deletion and forming a link is denoted $\breve{\mu} =\hat{\mu} \cup\{\iota j \}$. 
				An alternative network is $\tilde{\mu} $, the type-bridged network of $\mu $, where the links $\iota \iota' ,j j' $ are removed while the links $\iota j ,\iota' ,j' $ have been formed; thus  $\tilde{\mu} =\hat{\mu} \cup\{\iota j ,\iota' j' \}$.

				Define the gross loss of benefits for $i $ as  $u_i (\hat{\mu} )-u_i (\mu )$ while the gross gains are $u_i (\tilde{\mu} )-u_i (\hat{\mu} )$. There must exist a threshold of externalities $\bar{\delta} \in (0,1)$ where $\mu $ is no longer pairwise stable as cost of deviation monotonically decreases and approaches zero as $\delta \rightarrow1$ while gains are monotonically increasing. 
				The monotonicity of losses is a consequence of the fact that gross loss consists of shortest paths from $\mu $, where $\iota \iota' $ is included in the shortest path, which have longer length in $\hat{\mu} $ and thus are discounted more. Therefore the gross loss is mitigated by a higher $\delta $ as the longer shortest paths are punished less.
				The monotonicity of gains follows as the gains consist of new shortest paths to agents of type $\tilde{x} $ through $\iota j $ and $j' \iota' $ the value of these increases for higher $\delta $.
				
				Exploiting the that Fact \ref{fact:gross_loss_lower_attenuated} and \ref{fact:gross_gain_higher_attenuated} from Appendix \ref{app:local_trees} hold for local trees it follows that for any other agent $i $ of type $x $ (i.e. $i $ is in $N \backslash\{\iota ,\iota' \}$ and $x_{i }=x $):
				\[u_i (\tilde{\mu} )-u_i (\mu )>\delta ^{\min (p_{i \iota }(\tilde{\mu} ),p_{i \iota' }(\tilde{\mu} ))}[u_\iota (\breve{\mu} )-u_\iota (\mu )].\] 
				Aggregating for all agents this implies:
				
				\small
				\begin{eqnarray*}
					U (\tilde{\mu} )-U (\mu )
					&>&
					[u_\iota (\breve{\mu} )-u_\iota (\mu )]\cdot 
					\sum_{x_i =x }
					\delta ^{\min (p_{i \iota }(\tilde{\mu} ),p_{i \iota' }(\tilde{\mu} ))}+ [u_j (\breve{\mu} )-u_j (\mu )]\cdot \sum_{x_i' =\tilde{x} }\delta ^{\min (p_{i j }(\mu ),p_{i j' }(\mu ))}.
				\end{eqnarray*}
				\normalsize

				where $m =m_{n ,\kappa }$. The inequality above implies the following:
				if $U (\tilde{\mu} )-U (\mu )=0$ then $u_\iota (\breve{\mu} )-u_\iota (\mu )+u_j (\breve{\mu} )-u_j (\mu )<0$;  $U (\tilde{\mu} )-U (\mu )>0$ when $u_\iota (\breve{\mu} )-u_\iota (\mu )+u_j (\breve{\mu} )-u_j (\mu )=0$. 	
				It can also be argued that there must exist a threshold, $\ubar{\delta} $, such that when $\delta =\ubar{\delta} $ then $U (\tilde{\mu} )-U (\mu )=0$ and that $\ubar{\delta} <\bar{\delta} $. This follows as $U (\tilde{\mu} )-U (\mu )<0$ for $\delta =0$ and $U (\tilde{\mu} )-U (\mu )>0$ when $u_\iota (\tilde{\mu} )-u_\iota (\mu )+u_j (\tilde{\mu} )-u_j (\mu )=0$ as well as continuity of $U (\tilde{\mu} )-U (\mu )$ in $\delta $.
				
				This entails that for $\delta >\ubar{\delta} $ then $\tilde{\mu} $ provide higher aggregate payoff. Moreover we showed previously that for $\delta <\bar{\delta} $ then $\mu $ is pairwise (Nash) stable. Thus we have proven properties (i) and (ii). 
				
				
			\end{proof}

		For constant decay the thresholds governing when sorting is respectively suboptimal and stable, i.e. $\ubar{\delta} , \bar{\delta} $, can be determined explicitly by solving polynomial equations for every deviation. Moreover, for exact local trees there is a unique solution. 
		In Figure \ref{fig:assort_connection_threshold} the two thresholds from Theorem \ref{claim:assort:ext:pairwise:excess},  $\ubar{\delta} (\hat{Z} ),\bar{\delta} (\hat{Z} )$.

		\begin{figure}[h!]			
			\includegraphics[width=1.0\textwidth]{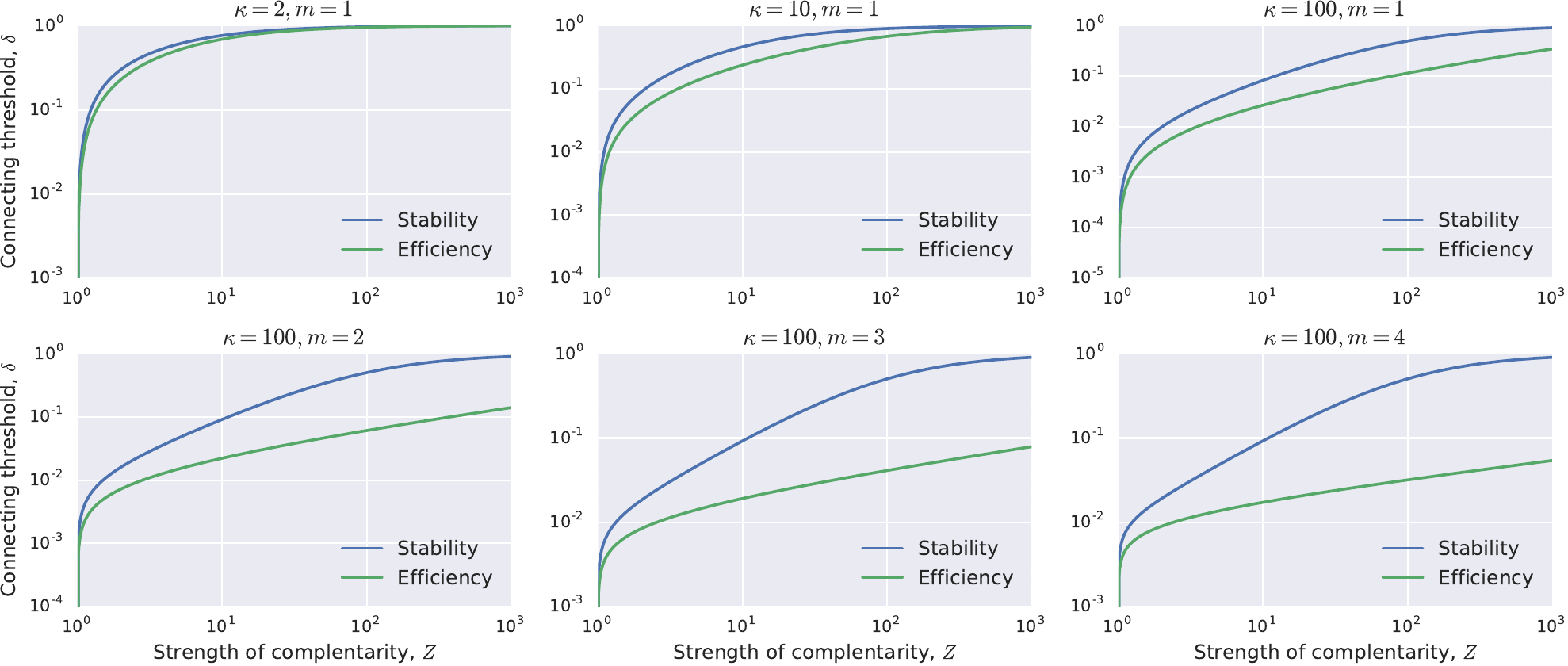}			
			\caption{Visualization of thresholds for connecting from Theorem \ref{claim:assort:ext:pairwise:excess}. The upper diagrams correspond to cliques and the lower ones to exact local trees (where thresholds stem from Equations 
				\ref{eq:utility_loss_exact_local_pairwise},					\ref{eq:utility_loss_exact_local_agg},
				\ref{eq:utility_gains_exact_local_pairwise}, 		\ref{eq:utility_gains_exact_local_agg}).		
				\label{fig:assort_connection_threshold}}
		\end{figure}

		The plots in Figure \ref{fig:assort_connection_threshold} are made for variations of exact local trees. The upper plots corresponds to cliques with various sizes.  The lower plot have fixed degree quota ($\kappa $=100) and the threshold is simulated using pattern in utility that is demonstrated in Appendix \ref{app:local_trees}.
		The plots show the scope for inefficiency, i.e. the gap between $\ubar{\delta} (\hat{Z} ),\bar{\delta} (\hat{Z} )$, increases with the number of agents involved. This makes sense intuitively as the two agents forming the link will fail to account for an increasing number of indirect connections between the two groups. As the number of indirect connections increases at with the squared with total number of agents then larger populations will lead to larger gaps of inefficiency.
				

		\subsection{Local trees}\label{app:local_trees}
		
		This sub-appendix provides auxiliary results for deriving the generalization of suboptimal sorting. We begin our focus on exact local trees and subsequently more generally in local tree networks, see Definition \ref{def:local_tree} in the previous sub-appendix. 
		
		We will examine a generic network $\mu $ which is perfectly sorted and assume that the subset of links for each type is a component that can be classified as either a local tree or an exact local tree. Let networks $\mu_{x }$ and $\mu_{\tilde{x} }$ be the components associated with respectively types $x ,\tilde{x} \in X $. 	We will focus on three particular moves:
		
		\begin{itemize}
		\item \textit{Pairwise deletion of a link}:		
				 Suppose two links $\iota \iota' ,j j' \in \mu $ are deleted and agents $\iota $ and $j $ have respectively type $x $ and $\tilde{x} $; thus the two links are not from the same component. Let the new network that results from removal of the links be denoted $\hat{\mu} =\mu \backslash\{\iota \iota' ,j j' \}$.
		\item \textit{Pairwise formation of a link across types}:
				This move presumes that both agents are also deleting a link. We denote this as a move where agents $\iota $ and $j $  form a link: $\breve{\mu} =\hat{\mu} \,\cup\,\{\iota j \}$.
		\item \textit{Double pairwise formation of a link across types}:
				When two links are formed across types in $\mu $ this corresponds to a non-pairwise deviation as it requires four coalition members. We denote this as a move where both agents $\iota $ and $j $ as well as $\iota' $ and $j' $ form a link:  $\tilde{\mu} =\hat{\mu} \,\cup\,\{\iota j ,\iota' j' \}$.
		\end{itemize}
		
		Finally let $i $ denote a generic agent of type $x $.	Let the shortest path in $\mu $ from $i $ to either $\iota $ or $\iota' $ be denoted $\hat{p}_{i }$ where $\hat{p}_{i }=\min (p_{i \iota }(\hat{\mu} ),p_{i \iota' }(\hat{\mu} ))$. When $\hat{p}_{i }=0$ then either $i =\iota $ or $i =\iota' $.

		\paragraph{Basic properties}
		
		We exploit that $\mu $ is a local tree (see Definition \ref{def:local_tree}). Throughout the remainder of the paper let $m =m_{n ,\kappa }$ (see Equation \ref{eq:maximum_path_length_local_tree}).			
		We express each agent's number of paths of length $p $ as a function of the number of agents and the degree quota: 
		\begin{equation}	
		\mbox{\#}_{i }^{p }(\mu )=
		\kappa (\kappa -1)^{p -1}-\mathbf{1}_{=m }(p )\cdot \Delta\mbox{\#} (n ,\kappa ),\qquad 	\Delta\mbox{\#} (n ,\kappa )= \sum_{l =1}^{m }(\kappa \cdot (\kappa -1)^{l -1})-n ,
		\label{eq:count_shortest_paths_local_tree}
		\end{equation}
		
		where $\mathbf{1}_{=m }(p )$ is the Dirac measure of whether $p =m $. 		
		Using the local tree structure we can express utility without transfers of each agent:
		
		\begin{equation*}
		u_i (\mu )=\sum_{l =1}^{m }\mbox{\#}_{i }^{l }(\mu )\cdot \delta ^l \cdot z (x ,x ). 
		\end{equation*}

		\subsubsection*{Exact local trees}
		Recall exact local trees are local trees where $\Delta\mbox{\#} (n ,\kappa )=0$. We will argue that this entails that exact local trees have the essential property that for every pair of agents there is a unique shortest path of at most length $m $ and the number of paths for every agent is prescribed by Equation \ref{eq:count_shortest_paths_local_tree}. This can be deducted as follows. 
		
		Note first that the fact that the number of walks with at most length $m $ starting in a given agent $i $ cannot exceed $\sum_{p =1}^{m }\mbox{\#} ^p_i (\cdot )$. Recall also that local trees has the property that all agents are reached within distance $m $. Moreover exact local trees has the property that for any agent $i $ it holds that  $n -1=\sum_{p =1}^{m }\mbox{\#} ^p_i (\mu )$; thus every shortest path with distances less than or equal to $m $ must be a unique path between the two particular agents.

		The uniqueness and countability of paths can be used to infer the losses when links are either removed or added to an exact local tree.
		
		\paragraph{Exact local trees - loss from deletion} In order to examine the impact of deletion of a link it is sufficient to analyze what happens to one component of types. This is sufficient as other components as the conclusions are valid for all.

		The deletion of link $\iota \iota' $ implies that any pair of agents $i ,i' $ whose (unique) shortest path in $\mu $ includes the link $\iota \iota' $ will have a new shortest routing path. For exact local trees we can exactly determine the length of the new path. 	
		Let $i $ be the agent whose distance to $\iota $ is least and let $i' $ be the agent whose distance to $\iota' $ is least, i.e. $p_{i \iota }(\mu )<p_{i \iota' }(\mu )$ and $p_{i' \iota' }(\mu )<p_{i' \iota }(\mu )$.

		First when link $\iota \iota' $ is deleted we can show there is no shortest path between $i $ and $i' $ in $\hat{\mu} $ with length below $2m -\hat{p}_i -\hat{p}_{i' }$; that is there is no $i i' $ whose shortest path in $\mu $ includes $\iota \iota' $ such that $p_{i i' }(\hat{\mu} )<2m -\hat{p}_i -\hat{p}_{i' }$. 
		Suppose this was not true. Then there would exist an agent $j $ who (1) is on the new shortest path between $i $ and $i' $ in $\hat{\mu} $ and (2) whose shortest path to agents $\iota $ and $\iota' $ does not include the link $\iota \iota' $ and (3) such that 
		
		\begin{eqnarray*}
		p_{j i }(\hat{\mu} )+p_{j i' }(\hat{\mu} )&<&2m -\hat{p}_i -\hat{p}_{i' },\\
		p_{j i }(\hat{\mu} )+p_{j i' }(\hat{\mu} )&<&2m -\min (p_{i \iota }(\mu ),\,p_{i \iota' }(\mu ))-\min (p_{i' \iota }(\mu ),\,p_{i' \iota' }(\mu )).		
		\end{eqnarray*} 
		
		As by construction $p_{i \iota }(\mu )<p_{i \iota' }(\mu )$ and $p_{i' \iota' }(\mu )<p_{i' \iota }(\mu )$ then the expression above is equivalent to:
		$p_{j i }(\hat{\mu} )+p_{j i' }(\hat{\mu} )<2m -p_{i \iota }(\mu )-p_{i' \iota' }(\mu )$. As the shortest path between $i $ and $\iota $ as well as between $i' $ and $\iota' $ are unchanged from $\mu $ to $\hat{\mu} $ it follows that we can further rewrite into:
		
		\[p_{j i }(\hat{\mu} )+p_{j i' }(\hat{\mu} )<2m -p_{i \iota }(\hat{\mu} )-p_{i' \iota' }(\hat{\mu} )\]
		
		However, the above statement implies that in network $\mu $ that either $\iota $ or $\iota' $ has two paths with lengths of at most $m $ but this violates the definition of exact local trees.
		
		We can now show that when link $\iota \iota' $ is deleted the new shortest path between $i $ and $i' $ in $\hat{\mu} $ has a length of exactly $2m -\hat{p}_i -\hat{p}_{i' }$. This is shown by demonstrating there is an agent $j $ such that $p_{j i }(\hat{\mu} )=m -\hat{p}_i $ and $p_{j i' }(\hat{\mu} )=m -\hat{p}_{i' }$. This can be shown follows. Suppose that $p_{j i }(\hat{\mu} )=m -\hat{p}_i $. We will demonstrate that $p_{j i' }(\hat{\mu} )=m -\hat{p}_{i' }$. As $p_{j i }(\hat{\mu} )=m -\hat{p}_i $ it follows that $p_{j \iota }(\hat{\mu} )=m $. From the definition of exact local trees there must exist a path of length less than $m $ between $j $ and $\iota' $ in network $\mu $. As argued in the paragraph above neither of these paths can be strictly shorter than $m $ and consequently they must both be exactly $m $.
		
		The number of shortest paths of length $p $ which become altered for agent $i $ is $(\kappa -1)^{p -\hat{p}_i -1}$ for $p =\hat{p}_i ,..,m -2,m -1$. This can be demonstrated as follows. If agent $p_{i \iota }(\mu )=m $ and $p_{i \iota' }(\mu )=m $ then no shortest paths are altered; this is clear as agent $i $ as none of the unique shortest paths includes $\iota \iota' $ as they have at most length $m $. If instead $p_{i \iota }(\mu )=m -1$ then the unique shortest path from $i $ to $\iota' $ includes $\iota \iota' $ is the last link; this implies a new shortest path if $\iota \iota' $ is deleted. Thus if $p_{i \iota }(\mu )=m -1$ then one shortest path of length $m $ is lost. When $p_{i \iota }(\mu )=m -2$ then one path of length $m -1$ is lost by the same argument; moreover $\kappa -1$ paths that has $\iota \iota' $ as the second last link. By induction this can be done at higher order and thus for shorter distances.	
		Using the number of rerouted paths shown above we can establish the total number of shortest paths in network $\hat{\mu} $ for agent $i $ that has a length of $p $: 
		
		\begin{equation}
			\mbox{\#}_{i }^{p }(\hat{\mu} )=\begin{cases}
				\kappa (\kappa -1)^{p -1}-\mathbf{1}_{>\hat{p}_{i }}(p )\cdot (\kappa -1)^{p -\hat{p}_{i }-1},&p \le m \\
				(\kappa -1)^{2m -\hat{p}_{i }-p },&p \in (m ,2m -\hat{p}_{i }].
			\end{cases}
			\label{eq:count_shortest_paths_exact_local_tree_deletion}
		\end{equation}

		By combining the count of shortest paths rerouted with their new length we can generalize the loss for any agent from the deletion of link $\iota \iota' $ when all agents are homogeneous of type $x $:	
		\begin{eqnarray}
			u_i (\mu )-u_i (\hat{\mu} )
			=
			\sum_{l =1}^{m -\hat{p}_i }\left[(\kappa -1)^{l -1}\cdot \left(\delta ^{l -1+\hat{p}_{i }}-\delta ^{2m -(l -1)-\hat{p}_{i }}\right)\right]\cdot z (x ,x ).
			\label{eq:utility_loss_exact_local_pairwise}
		\end{eqnarray}
		
		We can aggregate the losses across homogeneous agents of type $x $ and we arrive at the following expression:
		
		\begin{eqnarray}
			U (\mu )-U (\hat{\mu} )=
			\sum_{l =1}^{m }\left[2l \cdot (\kappa -1)^{l -1}\cdot \left(\delta ^{l -1}-\delta ^{2m -(l -1)}\right)\right]\cdot z (x ,x ).
			\label{eq:utility_loss_exact_local_agg}
		\end{eqnarray}
		
		\paragraph{Exact local trees - gains from linking across types}
		
		We move on to establishing the gains of establishing a link in a perfectly sorted network where each component is an exact local tree. 
		
		The gains to agents $\iota $ and $j $ of forming a link $\iota j $ are direct benefits and the new indirect connections that are accessed through the link $\iota j $. For agent $\iota $ the benefits from forming a link with $j $ can be computed with Equation \ref{eq:count_shortest_paths_exact_local_tree_deletion} where the input length is added one (as $\iota j $ is added to the shortest path). Recall $\breve{\mu} =\mu \cup\{\iota j \}\backslash\{\iota \iota' ,j j' \}$.
		
		\begin{equation}
			u_\iota (\breve{\mu} )-u_\iota (\mu )=
			\left[\sum_{l =0}^{m }(\kappa -1)^{l }\cdot \delta ^{l }+\sum_{l =0}^{m -1}(\kappa -1)^{l }\cdot \delta ^{2m -l }\right]\cdot z (x ,\tilde{x} ).\label{eq:benefits_pairwise_exact_local_tree}
		\end{equation}
		
		The above expression is relevant for evaluating the pairwise gains as it captures individual benefits for a pairwise formation of a link by $\iota $ and $j $. However, we are also interested in the sub-connected network as it allows to assess the efficiency. Suppose instead now that $\iota' $ and $j' $ also form a link; thus $\iota j ,\iota' j' $ are formed while $\iota \iota' ,j j' $ are deleted. Let $\tilde{\mu} =\mu \cup\{\iota j ,\iota' j' \}\backslash\{\iota \iota' ,j j' \}$.
		
		Let $i $ be an agent of type $x $ and let $\hat{p}_i $ still denote the least distance to either $\iota $ or $\iota' $. We can calculate the benefits for $i $ when $\iota j ,\iota' j' $ are formed. The benefits are the indirect connections to agents of type $\tilde{x} $ with whom agent $i $ has no connections in $\mu $. The aim is to count the number of paths of a given length.
		
		For a given agent $i' $ of the other type $\tilde{x} $ it must hold that the shortest path in $\tilde{\mu} $ between $i ,i' $ either contains the link $\iota j $ or the link $\iota' j' $, and thus the distance can be computed as follows:
		
		\begin{eqnarray}
		p_{i i' }(\tilde{\mu} )&=&\min [p_{i j }(\tilde{\mu} )+p_{i' j }(\tilde{\mu} ),\,\,p_{i j' }(\tilde{\mu} )+p_{i' j' }(\tilde{\mu} )]\label{eq:count_shortest_paths_local_tree_form_across_simple}
		\end{eqnarray}
		
		We further restrict the above expression. We can use that $i $ and $i' $ of type $\tilde{x} $ can be at most $2m +1$ away from each other. This follows from the fact that $p_{i \iota }(\tilde{\mu} )+p_{i \iota' }(\tilde{\mu} )=2m $ and $p_{i' j }(\tilde{\mu} )+p_{i' j' }(\tilde{\mu} )=2m $. 
		As $p_{i \iota }(\tilde{\mu} )+p_{i \iota' }(\tilde{\mu} )=2m $ and $\iota j ,\iota' j' \in \tilde{\mu} $ then it must be that $p_{i j }+p_{i j' }=2m +2$. These facts together entail we can rewrite Equation \ref{eq:count_shortest_paths_local_tree_form_across_simple}:
		
		\begin{eqnarray}
			p_{i i' }(\tilde{\mu} )&=&\min [p_{i j }(\tilde{\mu} )+p_{i' j }(\tilde{\mu} ),\,\,p_{i j' }(\tilde{\mu} )+p_{i' j' }(\tilde{\mu} )]
			\nonumber \\
			&=&\min [p_{i j }(\tilde{\mu} )+p_{i' j }(\tilde{\mu} ),\,\,4m +2-p_{i j }(\tilde{\mu} )-p_{i' j }(\tilde{\mu} )].\label{eq:count_shortest_paths_local_tree_form_across}
		\end{eqnarray}
		
		From the above expression it follows that $p_{i i' }\le 2m +1$ as the expression is maximized for $p_{i j }+p_{i' j }=2m +1$.

		The number of shortest paths from $i $ through $\iota j $ to agents of the other type $\tilde{x} $  can be found using Equation \ref{eq:count_shortest_paths_exact_local_tree_deletion} for agent $\iota $ adding  extra distance $1+\hat{p}_i $:\footnote{Shortest paths from $i $ must contain both $\iota j $ and every link in the shortest path from $i $ to $j $.} 
		\begin{itemize}
		\item for distance $p \in \{1+\hat{p}_i ,...,m +1+\hat{p}_i \}$ there are $(\kappa -1)^{p -1-\hat{p}_i }$ agents; 
		\item for distance $p \in \{m +2+\hat{p}_i ,...,\,2m +1\}$ there are $(\kappa -1)^{2m +1-(p -1-\hat{p}_i )}$.
		\end{itemize}
		The shortest paths from $i $ not routed through $\iota $ but instead through $\iota' $ are those where $p +1+\hat{p}_i >2m +1$; from Equation \ref{eq:count_shortest_paths_local_tree_form_across} we know the new shortest path length is $4m +2-p -1-\hat{p}_i $. The number of shortest paths through $\iota' $ in network $\tilde{\mu} $ will be $(\kappa -1)^{2m +1-(p -1-\hat{p}_i )}$ and the new length $4m +2-p -1-\hat{p}_i $. These facts together imply:
		
		\footnotesize\begin{equation}
			\mbox{\#}_{i }^{p }(\tilde{\mu} )-
			\mbox{\#}_{i }^{p }(\hat{\mu} )=\begin{cases}
				(\kappa -1)^{p -1-\hat{p}_i },
				&
				p \in \{\hat{p}_i +1,..,m +1+\hat{p}_i \},
				\\
				(\kappa -1)^{2m +1-p -\hat{p}_i },	
				&p \in \{m +\hat{p}_i +2,..,2m +1\},
				\\
				(\kappa -1)^{p +\hat{p}_i -2m -1},	
				&p \in \{2m +1-\hat{p}_i ,..,2m \}.
			\end{cases}
			\label{eq:count_shortest_paths_exact_local_tree_added_across}
		\end{equation}\normalsize
		
		From the number of paths above we can derive the change in utility from when $\iota j ,\iota' j' $ are added to the network for a given agent $i $ of type $x $.
		\footnotesize
		\begin{equation}
			u_i (\tilde{\mu} )-u_i (\hat{\mu} )=
			\left[\begin{array}{rl}
				& 	\sum_{l =0}^{m }(\kappa -1)^{l }\cdot \delta ^{l +\hat{p}_i }\\
				+ & 	\sum_{l =\hat{p}_i }^{m -1}(\kappa -1)^{l }\cdot \delta ^{2m -l +\hat{p}_i }\\
				+ & \sum_{l =0}^{\hat{p}_i -1}(\kappa -1)^{l }\cdot \delta ^{2m +l -\hat{p}_i }
			\end{array}\right]\cdot  z (x ,\tilde{x} ).
			\label{eq:utility_gains_exact_local_pairwise}
		\end{equation}
		\normalsize
		
		By aggregating over all agents of type the gain in benefits by forming $\iota j ,\iota' j' $ is as follows:
		
		\footnotesize
		\begin{equation}
			U (\tilde{\mu} )-U (\hat{\mu} )=
			\sum_{p =0}^{m }
			\left(
			\left[\begin{array}{l}
				\mathbf{1}_{<m }(p )\cdot 2\cdot (\kappa -1)^p +\\
				\mathbf{1}_{=m }(p )\cdot (n -2\cdot \sum_{l =1}^{m -1}(\kappa -1)^l )
			\end{array}\right]
			\cdot 	
			\left[\begin{array}{rl}
				& 	\sum_{l =0}^{m }(\kappa -1)^{l }\cdot \delta ^{l +p }\\
				+ & 	\sum_{l =p }^{m -1}(\kappa -1)^{l }\cdot \delta ^{2m -l +p }\\
				+ & \sum_{l =0}^{p -1}(\kappa -1)^{l }\cdot \delta ^{2m +l -p }
			\end{array}\right]\right)\cdot  Z (x ,\tilde{x} ).
			\label{eq:utility_gains_exact_local_agg}
		\end{equation}		
		\normalsize

		\subsubsection*{Local trees}
		We can use the analysis above on exact local trees to bound the gains and losses for (non-exact) local trees. Recall that exact local trees has the property that $\Delta\mbox{\#} (n ,\kappa )=0$ and for non-exact local trees $\Delta\mbox{\#} (n ,\kappa )>0$. Thus the difference between exact and non-exact local trees is that for a given agent the number of connected other agents at exactly distance $m $ is lower for non-exact local trees.

		Using the analysis of exact local trees we can compute the bounds on loss of utility for a given agent in the local when a link is deleted - this is done by reusing Equation \ref{eq:count_shortest_paths_exact_local_tree_deletion} as follows. 
		
		We can discount the number of agents initially at distance $m $ by $\Delta\mbox{\#} (n ,\kappa )$. Moreover, the new distance between agents $i $  and $i' $ after deletion of the link $\iota \iota' $ is at least $\min (p_{i i' },2m -2-\hat{p}_i -\hat{p}_{i' })$ at most $2m -\hat{p}_i -\hat{p}_{i' }$.\footnote{The upper bound follows from the fact that for any two agents $i $ and $i' $ in the local tree there is still always an agent $j $ at distances $p_{i j }=m -\hat{p}_i $ and $p_{i' j }=m -\hat{p}_{i' }$. The lower bound can be established by repeating an argument used for exact local trees. If the new distance between two agents $i $ and $i' $ after deletion of $\iota \iota' $ had been less than $\min (p_{i i' }(\mu ),2m -2-\hat{p}_i -\hat{p}_{i' })$ then the following would be true. There would be multiple shortest paths of length less than or equal to $m -1$ between either ($\iota $ and $j $) or ($\iota' $ and $j $). This would violate the property of local trees that all shortest paths of length $\le m -1$ are unique.} 
		From these two facts we can derive the bound on loss of utility when $\iota \iota' $ is deleted. The upper bound on loss (in terms of magnitude) is when new shortest paths have most distance, i.e. $2m -\hat{p}_i -\hat{p}_{i' }$; the lower bound is found when new distance is least, i.e. $\min (p_{i i' },2m -2-\hat{p}_i -\hat{p}_{i' })$:
		
		\footnotesize\begin{eqnarray}u_i (\mu )-u_i (\hat{\mu} )&\le &
			\sum_{l =1}^{m -\hat{p}_i }\left[\max (0,(\kappa -1)^{l -1}-\mathbf{1}_{=m }(l )\cdot \Delta\mbox{\#} (n ,\kappa ))\left(\delta ^{l -1+\hat{p}_{i }}-\delta ^{2m -(l -1)-\hat{p}_{i }}\right)\right]\cdot z (x ,x ),
			\label{eq:loss_local_trees_upper_bound}
			\\
			u_i (\mu )-u_i (\hat{\mu} )&\ge &
			\sum_{l =1}^{\tilde{m }}\left[(\kappa -1)^{l -1}\cdot \left(\delta ^{l -1+\hat{p}_{i }}-\delta ^{2m -(l +1)-\hat{p}_{i }}\right)\right]\cdot z (x ,x ),\quad \tilde{m }=\min (m -1,m -\hat{p}_i ).
			\label{eq:loss_local_trees_lower_bound}
		\end{eqnarray}\normalsize	
		
		\begin{fact}\label{fact:gross_loss_lower_attenuated}
			If $\mu $ is perfectly sorted and consists of $|X |$ components that each constitute a local tree with $n /|X |$ agents, then for any agent $i $ of type $x $ where $\hat{p}_i >0$:
			\begin{equation}
				u_i (\hat{\mu} )-u_i (\mu )>\delta ^{\hat{p}_i }\cdot [u_\iota (\hat{\mu} )-u_\iota (\mu )],\quad\hat{p}_{i }=\min (p_{i \iota }(\hat{\mu} ),p_{i \iota' }(\hat{\mu} )).
				\label{eq:gross_loss_lower_attenuated}	
			\end{equation}
		\end{fact}
		\begin{proof}	
			Inequality \ref{eq:gross_loss_lower_attenuated} can be rewritten into: $\delta ^{\hat{p}_i }\cdot [u_\iota (\mu )-u_\iota (\hat{\mu} )]-[u_i (\mu )-u_i (\hat{\mu} )]>0$. This inequality is equivalent to the expression below (derived by substituting in Inequality \ref{eq:loss_local_trees_lower_bound} for agent $\iota $ and Inequality \ref{eq:loss_local_trees_upper_bound} for agent $i $):

			\footnotesize
			\begin{eqnarray*}
				\delta ^{\hat{p}_i }\cdot \sum_{l =1}^{m -1}\left[(\kappa -1)^{l -1}\cdot \left(\delta ^{l -1}-\delta ^{2m -(l +1)}\right)\right]-
				\sum_{l =1}^{m -\hat{p}_i }\left[(\kappa -1)^{l -1}\left(\delta ^{l -1+\hat{p}_{i }}-\delta ^{2m -(l -1)-\hat{p}_{i }}\right)\right]&>& 0,\\
				\sum_{l =1}^{m -\hat{p}_i }\left[(\kappa -1)^{l -1}\cdot \left(\delta ^{
					2m -(l +1)-\hat{p}_i }-\delta ^{2m -(l +1)+\hat{p}_i }\right)\right]+
				\sum_{l =m -\hat{p}_i +1}^{m -1}\left[(\kappa -1)^{l -1}\left(\delta ^{l -1+\hat{p}_{i }}-\delta ^{2m -(l -1)-\hat{p}_{i }}\right)\right]&>& 0.				
			\end{eqnarray*} 
			\normalsize	
			
			As it holds that $2m -(l +1)-\hat{p}_i  < 2m -(l +1)+\hat{p}_i $ and it holds that 			
			$l -1+\hat{p}_{i }<2m -(l -1)-\hat{p}_{i }$ (equivalent to $l <m +1-\hat{p}_{i }$) the above inequality is satisfied.
		\end{proof}

		We can also derive bounds on the gains from connecting across types for local trees. We will not do this explicitly but instead use Definition \ref{def:local_tree_symmetric_losses} on symmetric losses in local trees. This allows to express our next result:
		
		\begin{fact}\label{fact:gross_gain_higher_attenuated}
			For the perfectly sorted network $\mu $ which consists of $|X |$ network components which each constitute a local tree of $n /|X |$ agents that has symmetric losses then it holds that for agents $i , \iota $ of type $x $ and $\hat{p}_i >0$
			\begin{equation}
				u_i (\tilde{\mu} )-u_i (\hat{\mu} )\ge \delta ^{\hat{p}_i }\cdot [u_\iota (\breve{\mu} )-u_\iota (\hat{\mu} )],\quad \hat{p}_{i }=\min (p_{i \iota }(\hat{\mu} ),p_{i \iota' }(\hat{\mu} )).
				\label{eq:gross_gain_higher_attenuated}
			\end{equation}	
		\end{fact}	
		\begin{proof}
			It holds that $u_\iota (\tilde{\mu} )-u_\iota (\hat{\mu} )\ge u_\iota (\breve{\mu} )-u_\iota (\hat{\mu} )$ as $\tilde{\mu} \subseteq \breve{\mu} $ (thus all shortest paths in $\tilde{\mu} $ cannot have a length that exceeds that in $\breve{\mu} $). Therefore it suffices to show: \begin{equation}
				u_i (\tilde{\mu} )-u_i (\hat{\mu} )\ge \delta ^{\hat{p}_i }\cdot [u_\iota (\tilde{\mu} )-u_\iota (\hat{\mu} )].\label{eq:sufficient_fact_gains_local_tree}
			\end{equation} 
			
			As the local tree has symmetric losses it follows that $u_\iota (\tilde{\mu} )-u_\iota (\hat{\mu} )=u_{\iota' }(\tilde{\mu} )-u_{\iota' }(\hat{\mu} )$; this follows from the fact that they both gain an equal number of new shortest paths through $j ,j' $, this follows as as $j ,j' $ have same number of paths after deletion of $j j' $ due to symmetric losses. 
			This entails that without loss of generality we can assume that $p_{i \iota }=\hat{p}_i $ as otherwise we could substitute $\iota $ with $\iota' $ and conduct the analysis again.
			
			For $\iota $ and some agent $i' $ of type $\tilde{x} $ it holds that  $p_{i i' }(\tilde{\mu} )\le p_{\iota i' }(\tilde{\mu} )+\hat{p}_i $. This follows as there exists a path between $i ,\iota $ and $\iota ,i' $ with respectively lengths $p_{\iota i' }(\tilde{\mu} )$ and $\hat{p}_i $; thus $p_{i i' }(\tilde{\mu} )\le p_{\iota i' }(\tilde{\mu} )+\hat{p}_i $. This implies the following inequality must hold:

			\[\sum_{x_{i' }=\tilde{x} }\delta ^{p_{i i' }(\tilde{\mu} )}
			\ge 
			\delta ^{p_{\iota i }(\tilde{\mu} )}\cdot \sum_{x_{i' }=\tilde{x} }\delta ^{p_{\iota i' }(\tilde{\mu} )}.\]

			As $u_\iota (\tilde{\mu} )-u_\iota (\hat{\mu} )=\sum_{x_{i }=\tilde{x} }\prod_{l =1}^{p_{\iota i' }(\tilde{\mu} )}\delta ^{r_l }\cdot z (x ,\tilde{x} )$ and $u_i (\tilde{\mu} )-u_i (\hat{\mu} )=\sum_{x_{i }=\tilde{x} }\prod_{l =1}^{p_{i i' }(\tilde{\mu} )}\delta ^{r_l }\cdot z (x ,\tilde{x} )$ it follows that Inequality \ref{eq:sufficient_fact_gains_local_tree} holds which proves our fact.
			
		\end{proof}

\end{document}